\numberwithin{equation}{section}
\numberwithin{equation}{section}
\theoremstyle{plain}
\newtheorem{lemma}{Lemma}[section]
\newtheorem{theorem}[lemma]{Theorem}
\newtheorem{corollary}[lemma]{Corollary}
\newtheorem{prop}[lemma]{Proposition}
\theoremstyle{definition}
\newtheorem{remark}{Remark}[section]
\newtheorem*{theorem-non}{Theorem}
\newcommand{\C}{\ensuremath{\mathbb{C}}}
\newcommand{\R}{\ensuremath{\mathbb{R}}}
\newcommand{\pe}{\otimes_\epsilon}
\newcommand{\setx}{\textbf{X}}
\newcommand{\cardx}{\textsc{x}}
\newcommand{\sety}{\textbf{Y}}
\newcommand{\cardy}{\textsc{y}}
\newcommand{\setz}{\textbf{Z}}
\newcommand{\seta}{\textbf{A}}
\newcommand{\carda}{\textsc{a}}
\newcommand{\setb}{\textbf{B}}
\newcommand{\cardb}{\textsc{b}}
\newcommand{\lra}{\longrightarrow}
\newcommand{\vr}{\varepsilon}
\newcommand{\sign}{\operatorname{sign}}
              \newcommand{\Id}{\operatorname{Id}}
\newcommand{\uno}{\mathbbm{1}}
\begin{document}


\title{Classical vs. quantum communication in XOR games}

\author{Marius Junge}
\address{Department of Mathematics\\University of Illinois at Urbana-Champaign\\1409 W. Green St. Urbana, IL 61891. USA}
\email{junge@math.uiuc.edu}

\author{Carlos Palazuelos}
\address{Instituto de Ciencias Matem�ticas (ICMAT)\\Departamento de An\'alisis Matem\'atico \\
Facultad de Ciencias Matem\'aticas \\ Universidad Complutense de Madrid \\
Madrid 28040. Spain}
\email{cpalazue@mat.ucm.es}

\author{Ignacio Villanueva} 
\address{Departamento de An\'alisis Matem\'atico \\
Facultad de Ciencias Matem\'aticas \\ Universidad Complutense de Madrid \\
Madrid 28040. Spain}
\email{ignaciov@mat.ucm.es}

\thanks{The first author is partially supported by NSF DMS-1201886. The second author is partially supported by the Spanish ``Ram\'on y Cajal program'' (RYC-2012-10449). The first and second author are partially supported by the Spanish ``Severo Ochoa Programe'' for Centres of Excellence (SEV-2015-0554). The second and third authors are partially supported by the grants MTM2014-54240-P, funded by Spanish MINECO and QUITEMAD+-CM, S2013/ICE-2801, funded by Comunidad de Madrid.}

\begin{abstract}
In this work we introduce an intermediate setting between quantum nonlocality and communication complexity problems. More precisely, we study the value of XOR games $G$ when Alice and Bob are allowed to use a limited amount of one-way classical communication  $\omega_{o.w.-c}(G)$ (resp. one-way quantum communication $\omega_{o.w.-c}^*(G)$), where $c$ denotes the number of bits (resp. qubits). The key quantity here is the quotient $\omega_{o.w.-c}^*(G)/\omega_{o.w.-c}(G)$.

We provide a universal way to obtain Bell inequality violations of general Bell functionals from XOR games for which the quotient $\omega_{o.w.-c}^*(G)/\omega_{o.w.-2c}(G)$ is larger than 1. This allows, in particular, to find (unbounded) Bell inequality violations from communication complexity problems in the same spirit as the recent work by Buhrman et al. (2016).

We also provide an example of a XOR game for which the previous quotient is optimal (up to a logarithmic factor) in terms of the amount of information $c$. Interestingly, this game has only polynomially many inputs per player. For the related problem of separating the classical vs quantum communication complexity of a function, the known examples attaining exponential separation require exponentially many inputs per party.
\end{abstract}

\maketitle

\section{Introduction and main results}

One of the main themes of research in Quantum Information is the quantification of the advantages provided from the use of quantum resources versus the use of classical resources. This quantification has been studied in many different contexts, the first one historically being Bell inequalities.  In a Bell experiment \cite{Bell}, Alice and Bob perform some measurements indexed by $x\in \setx$, $y\in\sety$ on a bipartite system and obtain some outputs $a\in \seta$, $b\in \setb$ respectively. The repetition of the experiment a large number of times leads to a bipartite probability distribution $P=(P(a,b|x,y))_{a\in \seta, b\in \setb, x\in \setx, y\in \sety}$. A standard way to quantify the phenomenon of quantum nonlocality is to consider linear functionals acting on the distribution $P$ considered as an element of $\mathbb R^{\cardx\cardy\carda\cardb}$, where $\cardx$, $\cardy$, $\carda$, $\cardb$ denote the cardinal of $\setx$, $\sety$, $\seta$ and $\setb$ respectively. A general linear functional $B$ on $\mathbb R^{\cardx\cardy\carda\cardb}$ is given by a collection of numbers $(B_{a,b,x,y})_{a\in \seta, b\in \setb, x\in \setx, y\in \sety}$ and its action on a given probability distribution $P$ is defined as $$\langle B, P\rangle=\sum_{a,b,x,y} B_{a,b,x,y} P(a,b|x,y).$$ We will refer to one such $B$ as a  {\em Bell functional}. Then, one can define $$\omega(B)=\sup_{P\in \mathcal L}\left|\langle B, P\rangle\right|\text{      }\text{      }\text{  and    }\text{      }\text{      }\omega^*(B)=\sup_{P\in \mathcal Q}\left|\langle B, P\rangle\right|,$$where  $\mathcal L$ is the set of classical bipartite probability distributions and $\mathcal Q$ is the set of quantum  bipartite probability distributions; that is, those probability distributions that Alice and Bob can generate when they share an unlimited amount of entanglement. The key ratio to quantify quantum nonlocality is $\omega^*(B)/\omega(B)$ and we say that there exists a Bell inequality violation if this quotient is strictly larger than 1.

Another relevant context where quantum resources perform better than classical resources is communication complexity.  In the usual task  \cite{NiKu} two separate parties, Alice and Bob, have to compute a binary function $f(x,y)$ of two predicates $x\in \setx$, $y\in \sety$. Alice only has access to $x$, whereas Bob only has access to $y$. They are assumed to have unlimited computational resources, and they can interchange messages until they are able to compute the function. The randomized communication complexity of $f$ is defined to be the minimum number of bits (or qubits in the quantum case) interchanged between Alice and Bob required for a randomized algorithm  in order to compute correctly $f(x,y)$ with probability larger than $\epsilon$ for every possible input $(x,y)$. We will call these numbers $CC(f,\epsilon)$ and $QC(f,\epsilon)$ respectively.  

In this paper we study the relation between Bell inequality violations and communication complexity problems \cite{BCMW10}, continuing the spirit of the recent paper \cite{Buhrman16}, where some new implications between both contexts where uncovered. In this line, certain specific Bell inequality violations are known to lead to separation in communication complexity for certain functions \cite{BZPZ}, although we do not know of any general implication in this direction. For the other direction, a recent result  \cite{Buhrman16}  shows that it is possible in general to obtain Bell inequality violations starting from large enough separations in communication complexity. Some other interesting results relating communication complexity and Bell inequality violations have been recently obtained in \cite{LLNRS}. However, in this last work the authors study \emph{inefficiency-resistant} Bell inequalities, which is a different notion from the one studied in our paper.

The novelty of our approach is to introduce an intermediate setting between quantum nonlocality and communication complexity problems. More precisely, we study the value of two-prover one-round games when Alice and Bob are allowed to communicate a limited amount of information.  This is related to the {\em distributional complexity} studied in communication complexity, with the difference that in the communication complexity context typically one fixes the desired probability of winning and calculates the amount of communication needed, whereas we will fix the amount of communication and calculate the probability of winning for that amount of communication.  In our study, we will restrict ourselves all the time to  the case of {\em one-way communication}, that is, the communication can only be sent from Alice to Bob and not the other way around. 

In this work, we study this situation for XOR games. These games can be understood as particular Bell functional where the set of outputs are $\seta=\setb=\{-1,1\}$ and the coefficients of the functional have the form
\begin{align}\label{coefficients XOR games}
G_{a,b,x,y}=\pi(x,y)ab f(x,y).
\end{align}Here $(\pi(x,y))_{x,y}$ is a probability distribution and $f:\setx\times \sety\rightarrow \{-1,1\}$ is any function. Any XOR game\footnote{XOR games are often introduced replacing $\{-1,1\}$ by $\{0,1\}$ and replacing the product by the XOR of the variables. In that case, our {\em value} of the game translates into the {\em bias}, the additional probability over $\frac{1}{2}$ of winning the game.} is then uniquely determined by $\pi$ and $f$, so we will denote it by  $G=(f,\pi)$. We usually use notation $G$ for XOR games, compared to general Bell functionals which we will denote by $B$. In addition to the quantities defined above, we will also consider the quantities  
$$\omega_{o.w.-c}(B)=\sup_{P\in \mathcal OW_c}\left|\langle B, P\rangle\right|\text{      }\text{      }\text{  and    }\text{      }\text{      }\omega^*_{o.w.-c}(B)=\sup_{P\in \mathcal OW^*_c}\left|\langle B, P\rangle\right|,$$where here $\mathcal OW_c$ is the set of bipartite probability distributions that Alice and Bob can generate using classical resources and $c$ bits of classical communication sent from Alice to Bob and 
$\mathcal OW^*_c$ is the set of bipartite probability distributions that Alice and Bob can generate using classical resources and $c$ qubits of  communication sent from Alice to Bob (they are not allowed to share additional entanglement).

A clever application of Grothendieck's inequality allowed Tsirelson to show that $\omega^*(G)/\omega(G)\leq K_G^{\R}\leq  1.7822\cdots $, where $K_G^{\R}$ is the so called \emph{real Grothendieck's constant}, and this inequality holds for every XOR game $G$ (independently of the number of inputs) \cite{Tsirelson}. In contrast to this,  the quantity
\begin{align}\label{quotient w*-w}
\frac{\omega^*_{o.w.-c}(G)}{\omega_{o.w.-c}(G)}
\end{align} for XOR games can lead to unbounded quotients and is, in general,  highly non trivial to estimate.  Our main results involve  the quantity (\ref{quotient w*-w}) and find applications in other settings. 

\subsection*{Bell inequality violations arising from advantages in the quantum vs classical communication value of XOR games}

The first contribution of this work is to show that there exists a universal way to obtain Bell inequality violations  from XOR games for which the quantity (\ref{quotient w*-w}) is strictly larger than 1. As we explain below this provides a universal way to obtain Bell inequality violations from quantum vs. classical advantages in communication complexity.

More precisely, given a XOR game $G=(\pi,f)$ with coefficients $T_{x,y}=\pi(x,y)f(x,y)$ for every $x\in \setx$, $y\in\sety$, and given a natural number $d$, we will consider a Bell functional $B^G_d$ defined as follows:

- Set of inputs for Alice and Bob: $\tilde{\setx}=\setx$ and $\tilde{\sety}=\sety\times \{1,\cdots, d\}$, respectively.

- Set of outputs for Alice and Bob: $\tilde{\seta}=\{1,\cdots, d\}\times \{-1,1\}$ and $\tilde{\setb}=\{-1,1\}$, respectively.

The coefficients of the Bell functional are defined, for every $x\in \setx$, $(y,k)\in \sety\times \{1,\cdots, d\}$, $(a,\tilde{a})\in \{1,\cdots, d\}\times \{-1,1\}$, $b\in \{-1,1\}$, as:
\begin{align}\label{bell functional definition}
B^G_d(a,\tilde{a}, b,x,y,k)=T_{x,y}\cdot \delta_{a,k} \cdot \tilde{a}\cdot b,
\end{align}where $\delta_{a,k}$ equals one if $a=k$ and equals zero otherwise.
\begin{theorem}\label{Thm:MainI}
Let $G$ be a XOR game. With the previous notation, we have   $$\frac{\omega^*(B^G_{d^2})}{\omega(B^G_{d^2})}\geq \frac{\omega^*_{o.w-\log d}(G)}{\omega_{o.w-2\log d}(G)}.$$ 
\end{theorem}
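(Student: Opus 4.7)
The plan is to prove the theorem by establishing the two inequalities
$$
\omega(B^G_{d^2}) \,\leq\, \omega_{o.w-2\log d}(G) \qquad \text{and} \qquad \omega^*(B^G_{d^2}) \,\geq\, \omega^*_{o.w-\log d}(G)
$$
separately, and then dividing. The first is a direct combinatorial translation; the second uses a teleportation simulation.

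For the classical bound, by convexity and shared randomness it suffices to bound the value of a deterministic strategy for $B^G_{d^2}$, which is given by functions $a\colon\setx\to\{1,\dots,d^2\}$, $\tilde a\colon\setx\to\{-1,1\}$ and $b\colon\sety\times\{1,\dots,d^2\}\to\{-1,1\}$. The factor $\delta_{a,k}$ in (\ref{bell functional definition}) collapses the sum over $k$ to $k=a(x)$, yielding the value
$$
\langle B^G_{d^2},P\rangle \,=\, \sum_{x,y}T_{x,y}\,\tilde a(x)\,b(y,a(x)).
$$
But this is exactly the value of the one-way classical protocol in which Alice, on input $x$, sends Bob the $2\log d$ bits encoding $a(x)$ together with her local output $\tilde a(x)$, and Bob, on input $y$ and received message $a(x)$, outputs $b(y,a(x))$. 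Taking suprema then gives $\omega(B^G_{d^2}) \leq \omega_{o.w-2\log d}(G)$.

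For the quantum bound, I would first use the standard characterisation
$$
\omega^*_{o.w-\log d}(G) \,=\, \sup\Big|\sum_{x,y}T_{x,y}\,\Tr(\sigma_x B_y)\Big|,
$$
where the supremum ranges over Hermitian $\sigma_x$ on $\C^d$ with $\|\sigma_x\|_1\leq 1$ (the expected signed state that Alice transmits) and Hermitian $B_y$ on $\C^d$ with $\|B_y\|_\infty\leq 1$ (Bob's dichotomic observable). Given such data, I realise $\sum_{x,y}T_{x,y}\Tr(\sigma_xB_y)$ inside $\omega^*(B^G_{d^2})$ by teleportation. Let $|\Phi\rangle=\tfrac{1}{\sqrt d}\sum_i|i\rangle|i\rangle$ on $\C^d\otimes\C^d$, let $\{W_a\}_{a=1}^{d^2}$ be the Heisenberg-Weyl unitaries on $\C^d$, and let $|\Phi_a\rangle=(W_a\otimes I)|\Phi\rangle$ be the generalised Bell basis. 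Alice and Bob share $|\Phi\rangle$, and Alice additionally holds ancilla registers in the vacuum. Fix a convex decomposition $\sigma_x=p_x^+\rho_x^+-p_x^-\rho_x^-$ into density matrices. On input $x$, Alice samples $\tilde a\in\{\pm 1\}$ with probability $p_x^{\tilde a}$, coherently purifies $\rho_x^{\tilde a}$ onto her ancilla, performs the Bell measurement on (ancilla, her half of $|\Phi\rangle$), obtains an outcome $a\in\{1,\dots,d^2\}$, and declares $(a,\tilde a)$. On input $(y,k)$, Bob measures the dichotomic observable $W_k^\dagger B_y W_k$. By the standard teleportation identity, conditional on $(a,\tilde a)$ Bob's (unnormalised) reduced state is $\tfrac{p_x^{\tilde a}}{d^2}W_a^\dagger\rho_x^{\tilde a}W_a$; the selector $\delta_{a,k}$ forces $k=a$, so $\Tr\!\big(W_a^\dagger\rho_x^{\tilde a}W_a\cdot W_a^\dagger B_y W_a\big)=\Tr(\rho_x^{\tilde a}B_y)$, and summing over the $d^2$ values of $a$ cancels the $1/d^2$ teleportation weight, yielding $\langle B^G_{d^2},P\rangle = \sum_{x,y}T_{x,y}\Tr(\sigma_xB_y)$.

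The main subtlety will be the quantum bookkeeping: one has to absorb the classical randomness choosing $\tilde a$ together with the purification of $\rho_x^{\tilde a}$ into a bona fide POVM on Alice's enlarged Hilbert space (handled by adjoining auxiliary registers initialised in the vacuum) and verify that the $1/d^2$ teleportation weight cancels exactly against the sum over the $d^2$ outcomes of $a$. A minor additional point is the transpose convention in $\langle\Phi|(M\otimes N)|\Phi\rangle=\tfrac1d\Tr(MN^T)$, which determines whether Bob conjugates by $W_k$ or $\overline{W_k}$; since conjugation by any unitary preserves both positivity and operator norm, Bob's POVM satisfies the required constraints uniformly in $k$, so the argument is insensitive to this choice.
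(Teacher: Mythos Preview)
Your proposal is correct and follows the same two-step decomposition as the paper: bounding $\omega(B^G_{d^2})\le\omega_{o.w-2\log d}(G)$ by reading the deterministic strategy for $B^G_{d^2}$ as a one-way protocol, and bounding $\omega^*(B^G_{d^2})\ge\omega^*_{o.w-\log d}(G)$ via a teleportation-based quantum strategy on the maximally entangled state using the Heisenberg--Weyl unitaries. The only notable difference is in the packaging of the quantum strategy: the paper avoids the ancilla and the classical sampling of $\tilde a$ altogether by directly defining Alice's POVM elements on $\C^d$ as $E_x^{a,\tilde a}=\tfrac{1}{d}W_aR_x^{\tilde a}W_a^*$ and verifying $\sum_{a,\tilde a}E_x^{a,\tilde a}=\uno$ from the twirling identity $\tfrac{1}{d}\sum_aW_aAW_a^*=\tr(A)\uno$; this is more compact and sidesteps the bookkeeping you flag, but is the same teleportation mechanism expressed algebraically rather than operationally.
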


It is known that applying the min-max theorem of zero sum games and Chernoff bound one can prove  statements of the following type: If there is a function $f$ for which $CC(f,\epsilon)$ is sufficiently larger than $QC(f,\epsilon)=c$, then there is a probability distribution $\pi$ such that the associated XOR game $G=(f,\pi)$ verifies
that $\omega^*_{o.w.-c}$ is larger than  $\omega_{o.w.-c}$. The precise statement we will need, and its proof, will be given in Section \ref{S:BV}. That is, starting from a separation in the Classical vs. Quantum communication complexity of a function $f$, we can generate a XOR game $G=(f, \pi)$ with separation in the value with $c$ qubits of communication vs the value with $c$ bits of classical communication. 
Making use of this connection and our Theorem \ref{Thm:MainI}, we will show the following result.
\begin{theorem}\label{Thm:MainII}
Let $\alpha>216$ and let $f:\setx\times \sety \longrightarrow \{-1, 1\}$ be a function for which $$\frac{CC(f,\frac{2}{3})}{QC(f,\frac{2}{3})}> \alpha.$$

Then, there exist a probability distribution $\pi:\setx\times \sety\longrightarrow [0,1]$ and an associated XOR game $G=(f,\pi)$ which verify the following: Calling $c=QC(f,\frac{2}{3})$ and $d=2^c$, the Bell functional $B_{d^2}^G$ defined just before Theorem \ref{Thm:MainI} verifies that
\begin{align*}
\frac{\omega^*(B_{d^2}^G)}{\omega(B_{d^2}^G)}\geq
\frac{\sqrt{\alpha}}{6\sqrt{6}}.
\end{align*}
\end{theorem}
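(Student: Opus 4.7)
The plan is to exploit the separation $CC(f,2/3) > \alpha c$ via a Yao-type minimax argument followed by Chernoff amplification, producing a distribution $\pi$ for which the XOR game $G = (f,\pi)$ has $\omega^*_{o.w.-c}(G)/\omega_{o.w.-2c}(G) \geq \sqrt{\alpha}/(6\sqrt{6})$; Theorem~\ref{Thm:MainI} then delivers the claimed Bell inequality violation. The quantum half is immediate: by definition of $c = QC(f,2/3)$, there is a one-way $c$-qubit quantum protocol that on every input outputs $f(x,y)$ with probability at least $2/3$, hence with bias at least $1/3$; using it as the quantum strategy in the XOR game gives
\[
\omega^{*}_{o.w.-c}(f,\pi) \;\geq\; \tfrac{1}{3} \qquad \text{for every distribution } \pi.
\]

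The core step is to construct a ``hard'' distribution $\pi$ for which $\omega_{o.w.-2c}(f,\pi)$ is of order $1/\sqrt{\alpha}$. Writing
\[
\omega_{o.w.-2c}(f,\pi) \;=\; \max_{D}\sum_{x,y}\pi(x,y)\,D(x,y)\,f(x,y),
\]
with $D$ ranging over the finite set of $\{-1,1\}$-valued $2c$-bit one-way deterministic protocols, and convexifying to randomized protocols $R$ so that the payoff becomes bilinear in $(\pi, R)$ over compact convex sets, the von Neumann minimax theorem yields
\[
\min_{\pi} \omega_{o.w.-2c}(f,\pi) \;=\; \max_{R} \min_{x,y} \mathbb{E}\bigl[R(x,y)\,f(x,y)\bigr] \;=\; 1 - 2\epsilon^{*},
\]
where $\epsilon^{*}$ is the minimum worst-case error of any randomized one-way $2c$-bit classical protocol for $f$. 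Setting $\delta = 1/2 - \epsilon^{*}$, the optimal such $R$ can be amplified by taking $k$ independent copies and the majority vote: by the Chernoff bound the resulting randomized $2ck$-bit protocol has worst-case error at most $\exp(-2k\delta^{2})$, and $k$ of order $3/\delta^{2}$ suffices to push this below $1/3$, giving $CC(f,2/3) \leq 2ck \leq 6c/\delta^{2}$ (the ceiling in the choice of $k$ is absorbed by the hypothesis $\alpha > 216$). Combined with $CC(f,2/3) > \alpha c$, this forces $\delta < \sqrt{6/\alpha}$, and hence
\[
\min_{\pi} \omega_{o.w.-2c}(f,\pi) \;=\; 2\delta \;\leq\; 2\sqrt{6/\alpha}.
\]

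Fixing a $\pi$ achieving this minimum and letting $G = (f,\pi)$, Theorem~\ref{Thm:MainI} with $d = 2^{c}$ then gives
\[
\frac{\omega^{*}(B^{G}_{d^{2}})}{\omega(B^{G}_{d^{2}})} \;\geq\; \frac{\omega^{*}_{o.w.-c}(G)}{\omega_{o.w.-2c}(G)} \;\geq\; \frac{1/3}{2\sqrt{6/\alpha}} \;=\; \frac{\sqrt{\alpha}}{6\sqrt{6}}.
\]
The most delicate step, in my view, is the minimax application, which transforms a statement about the hardest distribution for \emph{deterministic} classical protocols into one about the best \emph{randomized} classical protocol that can then be boosted by majority vote; careful bookkeeping of the Chernoff constants and of the ceiling in the choice of $k$ under the precise threshold $\alpha > 216$ is what produces the clean final constant $6\sqrt{6}$.
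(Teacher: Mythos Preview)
Your proof is correct and follows essentially the same approach as the paper: both combine Yao's minimax principle with Chernoff amplification to produce a hard distribution $\pi$ with $\omega_{o.w.-2c}(G)\leq 2\sqrt{6/\alpha}$ and then invoke Theorem~\ref{Thm:MainI}. The only cosmetic difference is the order of the two ingredients---the paper first packages the amplification as the standalone bound $CC(f,\tfrac12+\epsilon)\geq \tfrac{\epsilon^2}{3}CC(f,\tfrac23)$ (Lemma~\ref{l:highcomplexity}), fixes $\epsilon=\sqrt{6/\alpha}$, and then applies Yao, whereas you apply minimax first to obtain the optimal randomized advantage $\delta$ and then amplify; the resulting constants and the role of the threshold $\alpha>216$ match.
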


As stated before, both  Theorem \ref{Thm:MainI} and Theorem \ref{Thm:MainII} follow the spirit of \cite{Buhrman16}. We comment on  how our results compares with the main result in \cite{Buhrman16}. Our results need less advantage in the communication complexity and attain a bigger violation with an arguably simpler Bell functional. In particular, the main result in \cite{Buhrman16} is the derivation of Bell inequality violations from advantage in communication complexity, but in that paper it is required that $CC(f,\frac{2}{3})\geq QC(f,\frac{2}{3})^4$, whereas we only require $CC(f,\frac{2}{3})\geq \alpha QC(f,\frac{2}{3})$ for constant $\alpha$. On the down side, our results apply only to one-way communication complexity, whereas the result in \cite{Buhrman16} applies to general (two way) communication complexity (see Sections \ref{S:Bell violations} and  \ref{S:BV} for details). Whether our techniques can be extended to the case of two-way communication between the players remains as an open problem. 
\subsection*{An example of a XOR game $G$ essentially achieving the maximal possible separation between $\omega^*_{o.w.-c}(G)$ and  $\omega_{o.w.-c}(G)$.}

Our second contribution is the study of how much advantage can be obtained using quantum communication versus classical communication in the maximum value of a XOR game with restricted amount of one-way communication.  
\begin{theorem}\label{Thm:MainIII}
There exists a XOR game $G$ with $2^{2n}$ inputs for Alice and $2^{n^2}$ inputs for Bob such that for every $k\geq e^2$ we have the following estimate: $$\frac{\omega^*_{o.w-\log n}(G)}{\omega_{o.w-\log k}(G)}\geq C\frac{\sqrt{n}}{\log k},$$where $C$ is a universal constant.

Moreover, this statement is essentially optimal in the sense that for every XOR game $G$ and for every $k\geq 1$ we have
$$\frac{\omega^*_{o.w-\log n}(G)}{ \omega_{o.w-\log k}(G)}\leq \frac{\omega^*_{o.w-\log n}(G)}{ \omega(G)}\leq K_G^{\mathbb R}\sqrt{n},$$ where here $K_G^{\mathbb R}$ denotes the real Grothendieck's constant and $\omega(G)$ is the classical value of $G$ with no communication ($k=1$).
\end{theorem}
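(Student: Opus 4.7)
The plan is to address the two inequalities in Theorem \ref{Thm:MainIII} separately, beginning with the universal upper bound and then constructing a XOR game that matches it up to a logarithmic factor.

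\emph{Universal upper bound.} My first step is to record the characterization
\[
\omega^*_{o.w.-\log n}(G)\;=\;\sup\Bigl|\sum_{x,y}T_{x,y}\,\Tr(A_xB_y)\Bigr|,
\]
where the supremum runs over self-adjoint $A_x\in M_n(\mathbb C)$ with $\|A_x\|_{S_1^n}\leq 1$ (encoding Alice's $\pm 1$ output together with the quantum state forwarded to Bob) and self-adjoint $B_y\in M_n(\mathbb C)$ with $\|B_y\|_{S_\infty^n}\leq 1$ (Bob's POVM difference). Reading $\Tr(A_xB_y)=\langle A_x,B_y\rangle_{HS}$ as a Hilbert--Schmidt inner product in $S_2^n\cong\mathbb C^{n^2}$ and using the elementary bounds $\|A_x\|_{S_2^n}\leq\|A_x\|_{S_1^n}\leq 1$ together with $\|B_y\|_{S_2^n}\leq\sqrt n\,\|B_y\|_{S_\infty^n}\leq\sqrt n$, I would absorb the factor $\sqrt n$ into Bob's side and invoke Tsirelson's characterization $\omega^*(G)=\sup_{\|u_x\|,\|v_y\|\leq 1}\sum T_{x,y}\langle u_x,v_y\rangle$ to obtain $\omega^*_{o.w.-\log n}(G)\leq \sqrt n\,\omega^*(G)$. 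Combining with Grothendieck's inequality $\omega^*(G)\leq K_G^{\mathbb R}\,\omega(G)$ and the trivial monotonicity $\omega(G)\leq \omega_{o.w.-\log k}(G)$ yields the second displayed inequality of the theorem.

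\emph{The separating game and quantum lower bound.} For the existence statement I would construct a XOR game on $\{\pm 1\}^{2n}\times\{\pm 1\}^{n\times n}$ whose coefficients come from the bilinear form $a^{T}Mb$. With Alice's input written as $(a,b)\in\{\pm 1\}^n\times\{\pm 1\}^n$ and Bob's as $M\in\{\pm 1\}^{n\times n}$, set
\[
T_{(a,b),M}\;=\;\frac{c}{2^{2n+n^2}}\,a^{T}Mb,\qquad c=\Theta(1/n),
\]
chosen so that $\sum_{x,y}|T_{x,y}|=1$. The Fourier expansion of $T$ is supported on the $n^2$ orthogonal characters $(a,b,M)\mapsto a_ib_jM_{ij}$, and Parseval plus Cauchy--Schwarz give $\omega(G)=O(1/n)$. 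For the quantum strategy with $\log n$ qubits, Alice would prepare $|\psi_{a}\rangle=n^{-1/2}\sum_i a_i|i\rangle$ (together with a local ancilla she measures to produce a $\pm 1$ output depending on $b$), and Bob would perform a Hermitian POVM-difference $B_M$ on $\mathbb C^n$ built from the columns of $M$. Evaluating $\sum T_{x,y}\cdot E[\alpha\beta\mid(a,b),M]$ and using orthogonality of the bilinear characters to collapse the sum to a second moment of $a^{T}Mb$ yields $\omega^*_{o.w.-\log n}(G)=\Omega(1/\sqrt n)$, which saturates the upper bound $K_G^{\mathbb R}\sqrt n\,\omega(G)=O(1/\sqrt n)$.

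\emph{Classical value with $\log k$ bits; main obstacle.} Any strategy using $\log k$ bits of one-way classical communication factors through a partition $X=\bigsqcup_{j=1}^{k}X_j$ together with sub-strategies, so
\[
\omega_{o.w.-\log k}(G)\;\leq\;\sup_{\{X_j\}}\sum_{j=1}^{k}\omega(G|_{X_j\times Y}).
\]
A direct Parseval bound on each restricted XOR game yields $\omega(G|_{X_j\times Y})\lesssim c\,\sqrt{|X_j|/2^{2n}}$, but blindly summing with Cauchy--Schwarz gives only $\omega_{o.w.-\log k}(G)=O(\sqrt k/n)$ and hence the weaker ratio $\sqrt n/\sqrt k$. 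The principal obstacle, and the heart of the argument, is precisely to improve this to $O(\log k/n)$; my plan is to combine a dyadic level-set decomposition of the sizes $|X_j|$ with a hypercontractive/concentration estimate for the degree-two polynomial $a^{T}Mb$ on $\{\pm 1\}^{2n}$, showing that only logarithmically many partition pieces contribute significantly and the rest is negligible. Dividing $\omega^*_{o.w.-\log n}(G)=\Omega(1/\sqrt n)$ by the resulting $\omega_{o.w.-\log k}(G)=O(\log k/n)$ produces the claimed ratio $\Omega(\sqrt n/\log k)$.
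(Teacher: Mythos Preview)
Your treatment of the universal upper bound, the game (the coefficients $T_{(a,b),M}\propto a^{T}Mb$ are exactly the paper's), and the quantum lower bound all match the paper; there Alice sends the rank-one operator $|\varphi_a\rangle\langle\varphi_b|$ and Bob measures with $M/\|M\|_{M_n}$, and Rademacher orthogonality together with $\mathbb E\|M\|_{M_n}=O(\sqrt n)$ give $\omega^*_{o.w.-\log n}(G)=\Omega(1/\sqrt n)$.

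The gap is in the plan for $\omega_{o.w.-\log k}(G)=O((\log k)/n)$. Your per-piece Parseval bound $\omega(G|_{X_j\times Y})\leq c\sqrt{|X_j|/2^{2n}}$ is correct (it is literally Parseval for the degree-two Fourier weight of $t\cdot\mathbf 1_{X_j}$), but a dyadic decomposition of the \emph{sizes} $|X_j|$ cannot beat $\sqrt k$: the function $t\mapsto\sqrt t$ is concave, so the worst case is the balanced partition $|X_j|=2^{2n}/k$, and no grouping by size changes that. What one would actually need is a sharper per-piece estimate of the form $\omega(G|_{X_j\times Y})\leq Cc\,\mu_j\log(1/\mu_j)$ with $\mu_j=|X_j|/2^{2n}$, which \emph{does} follow from the Fourier level-$2$ inequality applied to the $\{0,\pm 1\}$-valued function $t\cdot\mathbf 1_{X_j}$, after which the entropy bound $\sum_j\mu_j\log(1/\mu_j)\leq\log k$ finishes. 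But that is not the ``concentration of $a^{T}Mb$'' you invoke, and your plan as written does not get there.

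The paper avoids the partition picture entirely. It uses the characterization $\omega_{o.w.-\log k}(G)=\sup\bigl|\sum_{\tilde x,y,m}T_{\tilde x,y}\,a(\tilde x,m)b(y,m)\bigr|$ over $\sum_m|a(\tilde x,m)|\leq 1$, $|b(y,m)|\leq 1$, applies H\"older in the $(y,m)$ variables with a free exponent $p$, then Khintchine in $L_p$ on the $y$-Rademachers and the dual (double) Khintchine in $L_{p'}$ on the $(a,b)$-Rademachers, and finally sets $p'=\ln k$. Because the $L_p$-Khintchine constants grow like $\sqrt p$, the only $k$-dependence that remains is $k^{1/p'}b_{p'}^{2}=O(\ln k)$. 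This is precisely the hypercontractive input you were reaching for, but deployed directly on the coefficients rather than through restricted subgames.
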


An immediate consequence of Theorem \ref{Thm:MainI} and Theorem \ref{Thm:MainIII} is that we can obtain unbounded Bell inequality violations with a very particular kind of Bell functionals $B$. Indeed, the previous results applied to $k=n^2$ imply that there exists a Bell functional $B$ with $2^{2n}$ inputs for Alice,  $n^22^{n^2} $ inputs for Bob,  $2n^2$ outputs for Alice and $2$ outputs for Bob such that the quotient of $\omega^*(B)/\omega(B)$ is $\Omega\big(\sqrt{n}/\log n\big)$.

It is well known that Bell functionals with only two outputs per player cannot give large violations. Indeed, one can show (\cite{JP, PV-Survey}) that $\omega^*(B)/\omega(B)=O(\min\{N,K\})$ for any Bell functional $B$, where $N=\min\{\cardx, \cardy\}$ and $K=\sqrt{\carda\cardb}$. Hence, Bell functionals with dichotomic outputs for one player can be seen as the simplest possible example leading to large violations. Previous examples of Bell functionals showing these kinds of behaviors where shown in \cite{PaYi}, but the functional $B$ defined in (\ref{bell functional definition}) has an even simpler form than the one considered in \cite{PaYi}. 
\subsection*{A scale improvement over previously known examples}

The very particular structure of our above mentioned game $G$ will allow us to apply an  atom reduction method developed in \cite{JOP16} to obtain another XOR game with polynomially many inputs and attaining the same Bell inequality violation. Up to our knowledge,  all previous examples of this flavor used exponentially many inputs per party.
\begin{theorem}\label{Thm:MainIV}
There exists a XOR game $G$ with $cn^8$ inputs for Alice and Bob such that   $$\frac{\omega^*_{o.w-\log n}(G)}{\omega_{o.w-\log n}(G)}\geq C\frac{\sqrt{n}}{\log n}.$$Here, $C$ and $c$ are universal constants.
\end{theorem}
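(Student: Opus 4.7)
The strategy is to apply the atom reduction method of \cite{JOP16} to the XOR game $G'$ built in Theorem \ref{Thm:MainIII}, which attains $\omega^*_{o.w-\log n}(G')/\omega_{o.w-\log n}(G') \geq C\sqrt{n}/\log n$ with $\cardx = 2^{2n}$ inputs for Alice and $\cardy = 2^{n^2}$ inputs for Bob. The goal is to replace $G'$ by a subsampled XOR game $G$ supported on $cn^8$ inputs per party, while preserving both the lower bound on the quantum one-way value and the upper bound on the classical one-way value up to constant factors.

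Concretely, I would sample multisets $S_A \subset \setx$ and $S_B \subset \sety$ of size $M = cn^8$ independently and uniformly at random, and let $G$ be the XOR game on $S_A\times S_B$ inheriting the coefficients $T_{x,y}=\pi(x,y)f(x,y)$ of $G'$ (appropriately renormalized so that $\pi$ remains a probability distribution). For any fixed one-way protocol with $\log n$ qubits or bits, the bias of $G$ is then the empirical mean of $M^2$ bounded random variables whose expectation equals the corresponding bias of $G'$, so Hoeffding's inequality furnishes concentration around this mean with failure probability $\exp(-\Omega(M^2\delta^2))$ at deviation $\delta$.

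Applied to a single optimal one-way quantum protocol on $G'$ with $\log n$ qubits, this gives with high probability $\omega^*_{o.w-\log n}(G) \geq (1-\delta)\,\omega^*_{o.w-\log n}(G')$. The harder direction is the upper bound on $\omega_{o.w-\log n}(G)$. By convexity, classical one-way protocols with $\log n$ bits may be taken deterministic, so they are pairs $(\alpha, \beta)$ with $\alpha:\setx\to\{0,1\}^{\log n}$ and $\beta:\sety\times\{0,1\}^{\log n}\to\{-1,+1\}$; what matters on $S_A\times S_B$ is only the restrictions of $\alpha$ and $\beta$, and the number of distinct restrictions is at most $n^M\cdot 2^{nM}=2^{O(nM)}$. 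A union bound over these $2^{O(nM)}$ strategies combined with the Hoeffding estimate then shows, with probability bounded away from zero, that every such strategy has empirical value on $G$ within $\delta$ of its value on $G'$.

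The balance is numerical: choosing $\delta=\Theta(1/\log n)$ and $M=cn^8$ yields $M^2\delta^2 = \Theta(n^{16}/\log^2 n)$, which dominates $\Omega(nM)=\Omega(n^9)$, so the union bound closes and both bounds transfer up to constant factors. Combining, with positive probability $G$ satisfies $\omega^*_{o.w-\log n}(G)/\omega_{o.w-\log n}(G)\geq C'\sqrt{n}/\log n$, as required. The central obstacle is precisely this \emph{simultaneous} control over all $2^{O(nM)}$ classical one-way strategies on the subsampled game; the polynomial exponent $8$ in $cn^8$ is essentially forced by this trade-off between strategy count and Hoeffding deviation, and the very particular structure of $G'$ enters in allowing the restricted coefficients to be handled by such an empirical concentration argument.
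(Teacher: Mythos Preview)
Your approach has a genuine gap, and it misses the structural point that makes the $n^8$ bound work.

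First, the quantitative choice $\delta=\Theta(1/\log n)$ cannot be an \emph{additive} deviation: the values you are trying to preserve are $\omega^*_{o.w-\log n}(G')\asymp n^{-1/2}$ and $\omega_{o.w-\log n}(G')\asymp (\log n)/n$, so an additive $\delta=1/\log n$ would make the quantum lower bound negative and would swamp the classical upper bound entirely. You would need $\delta=o((\log n)/n)$ at the very least, and you should check whether your concentration exponent then still beats the union bound.

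Second, and more seriously, the concentration-plus-union-bound scheme as you wrote it is circular. The $M^2$ summands $T_{x_i,y_j}\gamma_{x_i,y_j}$ are not independent (each $x_i$ appears in $M$ of them), so the Hoeffding exponent is not simply $M^2\delta^2$. Worse, the $2^{O(nM)}$ count is the number of \emph{restrictions} of classical strategies to $S_A\times S_B$; but those restriction classes depend on the random sample, while Hoeffding must be applied with the strategy fixed \emph{before} sampling. If you fix full strategies first, their number is $n^{|\setx|}2^{|\setx|}2^{n|\sety|}$, which is doubly exponential in $n$, and no polynomial $M$ survives that union bound.

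What you call ``the very particular structure of $G'$'' is exactly what the paper exploits, and it is not visible in your sketch. The game from Section~\ref{S:values} factors as $G'=(j_1\otimes j_2)(I)$ with $I\in\ell_2^{n^2}\otimes\ell_2^{n^2}$, so although $G'$ lives in $\ell_1^{2^{2n}}\otimes\ell_1^{2^{n^2}}$, it sits inside the image of an $n^2$-dimensional space on each side. Via the tensor-norm description of the one-way values (Section~\ref{S:tensor norms}), the relevant object $G'\otimes id_{S_1^n}$ lies in an $n^4$-dimensional subspace $\tilde X\subset\ell_1^{2^{2n}}(S_\infty^n)$ tensored with an $n^4$-dimensional subspace $\tilde Y\subset\ell_1^{2^{n^2}}(S_1^n)$. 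The atom reduction of \cite{JOP16} (a Schechtman-type empirical argument) then produces maps $J_i:\ell_1^{N_i}\to\ell_1^m$ with $m\approx(\dim)^2=n^8$ that are near-isometries on these subspaces; the injectivity and metric mapping property of the $\epsilon$-norm transfer both inequalities simultaneously. The union bound that makes this empirical method work is over an $\epsilon$-net in a ball of dimension $n^4$, not over $2^{O(nM)}$ strategies --- this is precisely why the argument closes with $m=O(n^8)$.
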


In fact, a slight modification of Theorem \ref{Thm:MainIV} together with Theorem \ref{Thm:MainI}  allows us to obtain:
\begin{corollary}\label{Corollary last}
There exists a Bell functional $B$ with $cn^{12}$ and $cn^{14}$ inputs and $2n^2$ and $2$ outputs for Alice and Bob respectively such that $$\frac{\omega^*(B)}{\omega(B)}\geq D'\frac{\sqrt{n}}{\log n},$$where $D'$ and $c$ are universal constants.
\end{corollary}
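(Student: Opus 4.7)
The strategy is to combine a mild strengthening of Theorem \ref{Thm:MainIV} with Theorem \ref{Thm:MainI}. Theorem \ref{Thm:MainI} with $d=n$ converts a separation between $\omega^*_{o.w-\log n}$ and $\omega_{o.w-2\log n}$ for a XOR game into a Bell violation for $B^G_{n^2}$, and a direct read of (\ref{bell functional definition}) shows that $B^G_{n^2}$ has $|\setx|$ and $|\sety|\cdot n^2$ inputs and $2n^2$ and $2$ outputs for Alice and Bob respectively. So to match the corollary's dimensions it is enough to produce a polynomial-input XOR game $\tilde G$ in which the denominator uses $2\log n$ bits instead of the $\log n$ bits appearing in Theorem \ref{Thm:MainIV}, while retaining the same $\Omega(\sqrt n/\log n)$ separation.

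Concretely, I would first establish the following variant of Theorem \ref{Thm:MainIV}: there is a XOR game $\tilde G$ with $cn^{12}$ inputs for each of Alice and Bob such that
$$\frac{\omega^*_{o.w-\log n}(\tilde G)}{\omega_{o.w-2\log n}(\tilde G)}\geq C'\frac{\sqrt n}{\log n}.$$
The starting point is Theorem \ref{Thm:MainIII} applied with $k=n^2$, which already delivers this ratio (up to the universal constant) on an exponential-input game. The atom reduction of [JOP16] that is used to prove Theorem \ref{Thm:MainIV} then compresses the input sets down to a polynomial by sampling a random sub-game; the only change in the present setting is that the concentration/union-bound step must beat every classical one-way protocol using up to $2^{2\log n}=n^2$ messages rather than $n$ messages. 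This stricter requirement is precisely what inflates the polynomial degree from $n^8$ to $n^{12}$, while the quantum value lower bound is unaffected because we still work with $\log n$ qubits of communication.

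Second, I would feed $\tilde G$ into the construction preceding Theorem \ref{Thm:MainI} with $d=n$, producing $B:=B^{\tilde G}_{n^2}$. By (\ref{bell functional definition}), the Bell functional $B$ has $|\tilde\setx|=cn^{12}$ inputs and $2n^2$ outputs for Alice, and $|\tilde\sety|\cdot n^2 = cn^{12}\cdot n^2 = cn^{14}$ inputs and $2$ outputs for Bob, as required. Theorem \ref{Thm:MainI} then yields
$$\frac{\omega^*(B)}{\omega(B)}\geq \frac{\omega^*_{o.w-\log n}(\tilde G)}{\omega_{o.w-2\log n}(\tilde G)}\geq D'\frac{\sqrt n}{\log n}, $$
which is the desired inequality, with $D'$ a universal constant absorbing $C'$.

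The main obstacle lies in step one: one must verify that the atom reduction of [JOP16] is robust enough to absorb the doubling of the classical communication budget in the denominator. Conceptually, the random subsampling must still simultaneously preserve the quantum value for $\log n$ qubits and deterministically dominate every classical one-way $2\log n$-bit protocol; since the number of such protocols is squared relative to Theorem \ref{Thm:MainIV}, the union bound forces a correspondingly larger pool of atoms. Once this parameter tracking is carried out carefully (and checked to yield an exponent no larger than $12$), the rest of the argument is purely bookkeeping and the application of Theorem \ref{Thm:MainI} is immediate.
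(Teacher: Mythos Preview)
Your high-level strategy is exactly the paper's: strengthen Theorem \ref{Thm:MainIV} so that the classical denominator uses $2\log n$ bits instead of $\log n$, obtain a XOR game with $cn^{12}$ inputs per player, and then feed it into Theorem \ref{Thm:MainI} with $d=n$ to produce $B^{\tilde G}_{n^2}$ with the stated input/output dimensions.

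Where your account diverges from the paper is in the mechanism behind the atom reduction, and this matters for actually carrying out step one. The reduction in Proposition \ref{Proposition reduction} is \emph{not} a union bound over one-way protocols; it is a Banach-space subsampling statement whose cost depends solely on the \emph{dimension} of the subspace one is trying to embed isomorphically into $\ell_1^m(X)$. In the proof of Theorem \ref{Thm:MainIV} the relevant subspaces are $j_i(\ell_2^{n^2})\otimes S_\infty^n$ and $j_i(\ell_2^{n^2})\otimes S_1^n$, each of dimension $n^2\cdot n^2=n^4$, so $m\sim (n^4)^2=n^8$. To accommodate $2\log n$ classical bits one needs the maps $J_i\otimes id$ to be good isomorphisms on the larger coefficient spaces $\ell_\infty^{n^2}$, $\ell_1^{n^2}$; the paper achieves this uniformly by replacing $S_\infty^n$, $S_1^n$ with $S_\infty^{n^2}$, $S_1^{n^2}$ throughout (the commutative spaces sit inside as diagonal matrices, and $S_1^n\hookrightarrow S_1^{n^2}$ isometrically, so the quantum bound for $\log n$ qubits is preserved as well). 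The new subspaces have dimension $n^2\cdot n^4=n^6$, hence $m\sim (n^6)^2=n^{12}$. Once you see it this way, the ``main obstacle'' you flag evaporates: there is no union bound over $n^2$ protocols to run, only a dimension count, and the analogues of (\ref{estimate I}) and (\ref{estimate II}) follow verbatim with $\ell_1^{n^2}$, $\ell_\infty^{n^2}$ in place of $\ell_1^n$, $\ell_\infty^n$.
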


 The paper is organized as follows. In Section \ref{S: communication} we introduce the basic definitions and we explain in detail the communication complexity models that we will use in the rest of the paper. In particular, we define properly the values $\omega_{o.w-\log n}(G)$ and $\omega^*_{o.w-\log n}(G)$ for XOR games. At the end of the section we provide an upper bound for the quotient $\omega^*_{o.w-\log n}(G)/\omega_{o.w-\log n}(G)$ (Proposition  \ref{prop optimality}) which proves the optimality part of Theorem \ref{Thm:MainIII}. In Section \ref{S:Bell violations} we prove Theorem \ref{Thm:MainI}. In Section \ref{S:BV} we explain how to obtain (large) Bell inequality violations from a separation in the classical vs. quantum communication complexity of a boolean function $f$ by proving Theorem \ref{Thm:MainII}. In Section \ref{S:values} we introduce a new XOR game $G$ and we prove the lower bound in Theorem \ref{Thm:MainIII}. In Section \ref{S:tensor norms} we explain that the study performed in this work can be understood in terms of tensor norms and, in particular, we express all the quantities introduced in the previous sections in terms of certain (well known) tensor norms. This point of view is crucial in Section \ref{S:atom reduction}, where we use the new language to reduce the number of inputs of the game defined in Section \ref{S:values} while preserving the quotient $\omega^*_{o.w-\log n}(G)/\omega_{o.w-\log n}(G)$. That is, we prove Theorem \ref{Thm:MainIV}.
\section{Communication complexity models}\label{S: communication}

In this section, we describe the mathematical models associated to the value of a XOR game when the players are assisted with $c$ classical or quantum bits. At the end of the section, we bound the maximal possible difference between both cases.  

Before we start describing these models, we will state the precise definition of the classical and the quantum value of a Bell functional. Given a Bell functional $B=(B_{a,b,x,y})_{a\in \seta, b\in \setb, x\in \setx, y\in \sety}$ we define:

- \noindent the \emph{classical value} of $B$ as
\begin{align}\label{classical value}
\omega(B)=\sup_{P\in \mathcal L} |\langle B, P\rangle|,
\end{align}where $\mathcal L$ is the set of probability distributions of the form
\begin{equation*}\label{classical}
P(a,b|x,y)=\int_\Omega P_\omega(a|x)Q_\omega(b|y)d\mathbb{P}(\omega)
\end{equation*}
for every $x,y,a,b$. Here, $(\Omega,\Sigma,\mathbb{P})$ is a probability space, $P_\omega(a|x)\ge 0$ for all $a,x,\omega$, $\sum_a
P_\omega(a|x)=1$ for all $x,\omega$, and the analogous conditions hold for $Q_\omega(b|y)$. 

-\noindent the \emph{quantum value} of $B$ as
\begin{align}\label{quantum value}
\omega^*(B)=\sup_{P\in \mathcal Q}|\langle B, P\rangle|,
\end{align}where $\mathcal Q$ is the set of probability distribution of the form
\begin{equation*}
P(a,b|x,y)=tr(E_x^a\otimes F_y^b \rho)
\end{equation*}
for every $x,y,a,b$. Here $\rho$ is a density operator acting on the tensor product of two Hilbert spaces $H_1\otimes H_2$ and $(E_x^a)_{x,a}$ and $(F_y^b)_{y,b}$ are two sets
of operators representing POVM
measurements acting on $H_1$ and $H_2$ respectively. That is, $E_x^a\geq 0$ for every  $x,a$, $\sum
_{a}E_x^a=\Id$ for every $x$, and the analogous conditions hold for $(F_y^b)_{y,b}$. 

An interesting measure to quantify nonlocality is then \emph{the  Bell violation of the functional $B$}, $\omega^*(B)/\omega(B)$. This magnitude has been deeply studied in the last years from the point of view of physics and computer sciences (since Bell functionals can be associated to two-prover one-round games) \cite{JP, PV-Survey}.

Note that for the particular case of a XOR game $G=(\pi, f)$ as in (\ref{coefficients XOR games}), for any bipartite probability distribution $P$ we have 
\begin{align}\label{bias}
\langle G, P\rangle&=\sum_{x\in \setx,y\in \sety} \pi(x,y)\big[ P(ab=f(x,y)|x,y)- P(ab=-f(x,y)|x,y)\big]\\&\nonumber=
\sum_{x\in \setx,y\in \sety} \pi(x,y)f(x,y) \mathbb E(ab|x,y).
\end{align}

This motivates us to consider the correlation matrix $(\gamma_{x,y})_{x,y}=\mathbb E(ab|x,y)$ associated to the strategy $P$ and to biunivocally characterize the XOR game $G$ by the coefficients  $T_{x,y}=\pi(x,y) f(x,y)$, so that 
\begin{align*}
\langle G, P\rangle =\sum_{x\in \setx, y\in\sety}T_{x,y}\gamma_{x,y}.
\end{align*}

It is very easy to see from the definitions above that for a given XOR game $G$ with coefficients $T_{x,y}=\pi(x,y) f(x,y)$ one has 
\begin{align}\label{classical bias XOR}
\omega(G)=\sup\Big|\sum_{x\in \setx, y\in\sety}T_{x,y}\int_\Omega A_\omega(x)B_\omega(y)d\mathbb{P}(\omega)\Big|=\sup\Big|\sum_{x\in \setx, y\in\sety}T_{x,y}t_xs_y\Big|,
\end{align}where the first supremum runs over all probability spaces $(\Omega,\Sigma,\mathbb{P})$ and all families of real numbers $(A_\omega(x))_x$, $(B_\omega(y))_x$  such that $|A_\omega(x)|\leq 1$, $|B_\omega(y)|\leq 1$ for every $x$, $y$, $\omega$; and the second supremum runs over all possible numbers $t_x=\pm1$, $s_y=\pm1$. The equality between both suprema follows from convexity.

We also have 
\begin{align}\label{quantum bias XOR}
\omega^*(G)=\sup\Big|\sum_{x\in \setx, y\in\sety}T_{x,y}tr(A_x\otimes B_y \rho)\Big|=\sup \Big|\sum_{x\in \setx,y\in \sety}T_{x,y}\langle u_x, v_y\rangle\Big|.
\end{align}
Here, the first supremum runs over all Hilbert spaces $H_1$ and $H_2$, all density operators $\rho$ acting on $H_1\otimes H_2$ and all families of self-adjoint norm-one operators $(A_{x})_{x}$, $(B_{y})_{y}$ acting on $H_1$ and $H_2$ respectively. The second equality is a well known result due to Tsirelson \cite{Tsirelson} and the supremum is taken over all families of vectors $(u_x)_x$, $(v_y)_y$ in a unit ball of a real Hilbert space. While there are many known XOR games $G$ for which $\omega^*(G)>\omega(G)$, Tsirelson's description of $\omega^*(G)$ allows us to use Grothendieck's inequality to state that, for any XOR game $G$,
\begin{align}\label{Gro-ineq}
1\leq \frac{\omega^*(G)}{\omega (G)}\leq K_G^{\R}\leq  1.7822\cdots .
\end{align}
\subsection*{One-way classical communication}

In this section we describe the one-way classical communication value of a XOR game when both players have unlimited classical resources (that is, they share an unlimited amount of randomness) and, additionally, Alice is allowed to send $c$ classical bits to Bob.

Let us assume that Alice and Bob receive inputs $x$ and $y$ respectively according to the probability distribution $\pi$. Then, Alice's answer  and message can depend only on the input $x$ and the randomness, which will be modeled via a probability space $(\Lambda, \lambda)$. Therefore, it can be modelled by a function $\theta:\setx \times \Lambda \longrightarrow \{-1,1\}\times [2^c]$ so that $\theta(x,\lambda)=(a,m)$, is the pair formed by Alice's answer $a$ and message $m$ when receiving input $x$ with shared randomness $\lambda$. We can consider the first and second components of $\theta$,  $a(x,\lambda)$ and $m(x,\lambda)$. At the same time, Bob's answer can only depend on the input $y$, the randomness $\lambda$ and the message $m$ received from Alice. Therefore, it can be modelled by a function $b: \sety\times \Lambda \times [2^c]\longrightarrow \{-1,1\}.$

That is, their joint correlation can be described by
$$\gamma(x,y)=\int_{\Lambda} a(x,\lambda) b(y, \lambda, m(x, \lambda)) d\lambda.$$

We will remark here that we could defined analogously the probability distribution $P=(P(a,b|x,y))_{a, b, x, y}$. However, since we will restrict the study of the quantities $\omega_{o.w-c}$ and  $\omega^*_{o.w-c}$ to the case of XOR games, it suffices to describe the correlation matrices.

Note also that, for every fixed $\lambda$, the correlation we obtain can be written as \begin{equation}\label{extremecorrelation} \gamma_\lambda(x,y)=a(x) b(y,  m(x)).\end{equation}

One can deduce easily from here that these are the extremal points of the set of the possible correlations. Now, we can easily prove the next result. 

\begin{lemma}\label{Lemma: classical}
Let $G=(\pi, f)$ be a XOR game with coefficients  $T_{x,y}=\pi(x,y) f(x,y)$ for every $x,y$. Then, 
\begin{align}\label{eq: classical-com}
\omega_{o.w-c}(G)=  \sup \Big|\sum_{x\in \setx, y\in\sety}\sum_{m=1}^{2^c} T_{x,y}a(x,m)b(y,m)\Big|.
\end{align}Here, the supremum runs over all families of real numbers $(a(x,m))_{x,m}$, $(b(y,m))_{y,m}$ verifying $\sum_{m=1}^{2^c} |a(x,m)|\leq 1$ for every $x$ and $|b(y,m)|\leq 1$ for every $y,m$.
\end{lemma}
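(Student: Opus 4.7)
The plan is to establish both inequalities by exploiting the extremal description of one-way classical correlations recorded just before the lemma in equation \eqref{extremecorrelation}: every extremal correlation has the form $\gamma(x,y) = a(x)\, b(y, m(x))$ with $a(x), b(y,m) \in \{-1,1\}$ and $m\colon \setx \to \{1,\dots,2^c\}$. The key algebraic identity behind both directions is
\begin{equation*}
a(x)\, b(y, m(x)) \;=\; \sum_{m=1}^{2^c} \bigl[\, a(x)\, \delta_{m, m(x)}\,\bigr]\, b(y,m),
\end{equation*}
which linearises the ``choose-one-message'' rule into a sum over $m$ with an $\ell_1$-type constraint on the first factor.

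For the upper bound $\omega_{o.w-c}(G) \le \text{RHS}$, I will first use linearity in the shared randomness to reduce to deterministic (extremal) strategies, so that $\omega_{o.w-c}(G)=\sup_{a,b,m}\bigl|\sum_{x,y}T_{x,y}\,a(x)\,b(y,m(x))\bigr|$. For any deterministic triple $(a,b,m)$, set $\tilde a(x,m):=a(x)\,\delta_{m,m(x)}$ and $\tilde b(y,m):=b(y,m)$. Then $\sum_{m}|\tilde a(x,m)|=|a(x)|=1$ and $|\tilde b(y,m)|=1$, so $(\tilde a,\tilde b)$ lies in the feasible set of the right-hand side, and the identity above shows that the bilinear value is preserved. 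This yields $\omega_{o.w-c}(G)\le\text{RHS}$.

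For the lower bound $\omega_{o.w-c}(G)\ge\text{RHS}$, I will construct, from any admissible pair $(a(x,m),b(y,m))$ with $\sum_{m}|a(x,m)|\le 1$ and $|b(y,m)|\le 1$, an explicit classical one-way strategy whose game value equals $\sum_{x,y,m}T_{x,y}\,a(x,m)\,b(y,m)$ up to a sign. The interpretation is probabilistic: Alice, on input $x$, sends message $m$ with probability $|a(x,m)|$ and answers $\sign(a(x,m))\in\{-1,1\}$; with residual probability $1-\sum_{m}|a(x,m)|\ge 0$ she sends an arbitrary fixed message and outputs a uniformly random $\pm 1$, contributing $0$ in expectation. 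Bob, on input $y$ and received message $m$, outputs $+1$ with probability $(1+b(y,m))/2$ and $-1$ otherwise (valid since $|b(y,m)|\le 1$). A direct computation then gives $\mathbb E(ab\mid x,y)=\sum_{m}a(x,m)b(y,m)$, and hence $\langle G,P\rangle=\sum_{x,y,m}T_{x,y}\,a(x,m)\,b(y,m)$; flipping the sign of $b$ if needed, the absolute value in the definition of $\omega_{o.w-c}(G)$ absorbs the sign, giving the claimed bound.

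The two inequalities combine to give the equality. No serious obstacle is anticipated beyond the extremal-point reduction in the forward direction and the probabilistic dictionary $a(x,m)\leftrightarrow(|a(x,m)|,\sign a(x,m))$ in the reverse direction; the lemma is essentially the statement that the deterministic ``send one message and output one bit'' rule can be unfolded, under the $\ell_1$ constraint $\sum_m|a(x,m)|\le 1$, into amplitudes distributed over all $m$.
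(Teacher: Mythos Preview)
Your proof is correct. The forward inequality (upper bound) is identical to the paper's: both reduce to extremal deterministic correlations $a(x)b(y,m(x))$ and then set $\tilde a(x,m)=a(x)\delta_{m,m(x)}$ to land inside the right-hand feasible set.

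For the reverse inequality you take a genuinely different route. The paper argues structurally: it observes that the right-hand side is a bilinear form maximized over a product of convex sets, identifies the extreme points of $\{(a(x,m))_m:\sum_m|a(x,m)|\le 1\}$ as the vectors with a single nonzero entry equal to $\pm 1$ and the extreme points of the hypercube as the $\pm 1$ vectors, and concludes that the supremum on the right is already attained on the deterministic strategies appearing on the left. You instead give a direct probabilistic construction: interpret $|a(x,m)|$ as the probability that Alice sends message $m$, let her output $\sign(a(x,m))$, handle the mass defect $1-\sum_m|a(x,m)|$ with a dummy message plus a uniformly random sign, and have Bob output a biased coin with mean $b(y,m)$. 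This realizes any feasible pair as an honest one-way protocol with the correct correlation $\sum_m a(x,m)b(y,m)$. Your approach is more constructive and avoids the extreme-point bookkeeping; the paper's approach has the advantage of making transparent why the right-hand side is naturally a tensor norm (the $\ell_1^{\cardx}(\ell_\infty^{2^c})\otimes_\epsilon \ell_1^{\cardy}(\ell_1^{2^c})$ description used later in Section~\ref{S:tensor norms}), since it pins down the extremal structure of the unit balls involved.
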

\begin{proof}
According to (\ref{bias}) we only need to consider the correlations of our strategies. Then, it follows from convexity that we only need to maximize on the strategies as in Eq. (\ref{extremecorrelation}). That is,
\begin{align}\label{classical simplified}
\omega_{o.w-c}(G)=  \sup \Big|\sum_{x\in \setx, y\in\sety}\sum_{m=1}^{2^c} T_{x,y}\tilde{a}(x)\delta_{m,m(x)}\tilde{b}(y,m)\Big|,
\end{align}where the supremum runs over all families of real sequences $(\tilde{a}(x))_x$, $(\tilde{b}(y,m))_{y,m}$ and functions $m:\setx\rightarrow [2^c]$ with $\tilde{a}(x)=\pm 1$ and $\tilde{b}(y,m)=\pm 1$ for every $x$, $y$ ,$m$. Here, $\delta_{m,m(x)}$ equals one if $m=m(x)$ and equals zero otherwise.

If we define $\tilde{a}(x,m)=\tilde{a}(x)\delta_{m,m(x)}$ for every $x$ and $m$, it is straightforward to check that the supremum in (\ref{eq: classical-com}) is an upper bound of $\omega_{o.w-c}(G)$. In order to show that the suprema in  (\ref{eq: classical-com}) and (\ref{classical simplified}) are the same we will see that the elements $(\tilde{a}(x,m))_{x,m}$, $(\tilde{b}(y,m))_{y,m}$ are actually extreme points of the set where the supremum in (\ref{eq: classical-com}) is taken.

On the one hand, note that any vector $(b(y,m))_{y,m}$ with $|b(y,m)|\leq 1$ for every $y,m$ can be written as as a convex combination of vectors whose coordinates are $\pm 1$; that is, vectors like $(\tilde{b}(y,m))_{y,m}$. Indeed,  this means that the set of extreme points of a hypercube consists of its vertices. On the other hand, it is easy to see that any vector $(a(x,m))_{x,m}$ verifying $\sum_{m=1}^{2^c} |a(x,m)|\leq 1$ for every $x$ can be written as a convex combination of vectors of the form $(c(x,m))_{x,m}$ such that for every $x$ there exists an $m(x)$ such that $c(x,m(x))=\pm 1$ and $c(x,m)=0$ for $m\neq m(x)$. These are precisely vectors like $(\tilde{a}(x,m))_{x,m}$.
\end{proof}
\begin{remark}\label{r:classicalnorm}
Note that the expression for the quantity $\omega_{o.w-c}(G)$ appearing in  Lemma (\ref{Lemma: classical}) is not the simplest one. Indeed, if one restricts the optimization to the extreme points one can optimize as in (\ref{classical simplified}). However, writing $\omega_{o.w-c}(G)$ in its most general form will be useful to understand it as a norm of certain tensor. We will discuss this point in Section \ref{S:tensor norms}.
\end{remark}
\subsection*{One-way quantum communication}

Let us now describe the one-way quantum communication value of the game. In this model, the players have unlimited computational resources and shared randomness. Also, they are assisted with $c$ qubits which can be sent from Alice to Bob, but they are not allowed to share additional entanglement. Here, $S^{2^c}$ denotes the space of the $2^c$-dimensional quantum states.

Again, Alice and Bob receive inputs $x$ and $y$ respectively according to the probability distribution $\pi$. In this situation, Alice's answer  and message can be modelled by a function $\theta:\setx \times \Lambda \longrightarrow \{-1, 1\}\times [S^{2^c}]$ so that $\theta(x,\lambda)=(a,\rho)$, is the pair formed by Alice's answer $a$ and  a $2^c$ dimensional  quantum state  $\rho$ when receiving input $x$ with shared randomness $\lambda$. We can consider the first and second components of $\theta$,  $a(x,\lambda)$ and $\rho(x,\lambda)$. Note that $\rho(x,\lambda)$ could be entangled with a state on Alice's side, but whatever measurements she does on her side can be considered to be done prior to the sending of $\rho(x,\lambda)$. Bob's answer can be modelled by the result of a $\pm 1$ valued measurement he does on the quantum state he receives. The measurement he uses can  depend on the input $y$ and  the randomness $\lambda$.   Therefore, we can consider a function $B: \sety\times \Lambda \longrightarrow M_{2^c}^{s.a},$
where $M_{2^c}^{s.a}$ is the space of self-adjoint $2^c\times 2^c$ complex matrices (endowed with the operator norm) on Bob's side and, for every $(y,\lambda)$, $B(y,\lambda)=F_{y,\lambda}^{1}- F_{y,\lambda}^{1-}$, where $\{F_{y,\lambda}^{1}, F_{y,\lambda}^{1-}\}$ is a dichotomic POVM taking values $\pm 1$.  Note that this is equivalent to require that $B(y,\lambda)$ is a self-adjoint operator verifying $\|B(y,\lambda)\|_{M_{2^c}}\leq 1.$ After receiving $y$ and $\rho$, Bob will measure $\rho$ with the POVM and will answer $b=\pm 1$ depending on the result he obtains.

Hence, any correlation can be written as $$ \gamma(x,y)= \int_\Lambda a(x,\lambda)tr\big(B(y,\lambda) \rho(x,\lambda)\big) d\lambda,$$ so that they are all convex combinations of correlations of the form
\begin{equation}\label{qg1}
\gamma(x,y)=a(x)tr\big(B(y) \rho(x)\big).
\end{equation}
For the next lemma, given any $2^c\times 2^c$ complex matrix $A$ we use $\|A \|_{S_1^{2^c}}$ to denote the trace class norm of $A$. 
\begin{lemma}\label{l:quantumvalue}
Let $G=(\pi, f)$ be a XOR game with coefficients  $T_{x,y}=\pi(x,y) f(x,y)$ for every $x,y$. Then, 
\begin{equation}\label{qg}
\omega^*_{o.w-c}(G)= \sup \Big|\sum_{x\in \setx,y\in \sety}T_{x,y}tr\big(B_y  R_x\big)\Big|.
\end{equation}Here, the supremum runs over all families of self-adjoint operators $(B_{y})_{y}$, $(R_x)_{x}$ in $M_{2^c}$ verifying $\|R_x\|_{S_1^{2^c}}\leq 1$ for every $x$ and $\|B_y\|_{M_{2^c}}\leq 1$.
\end{lemma}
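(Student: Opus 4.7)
The plan is to prove the two inequalities separately, using the representation of every one-way quantum correlation as a mixture of correlations of the extremal form \eqref{qg1}, together with the spectral decomposition of a self-adjoint trace-class contraction on Alice's side and the parametrization of dichotomic POVMs via operator-norm contractions on Bob's side.

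\textbf{Upper bound $\omega^*_{o.w-c}(G)\le \sup(\ldots)$.} Using \eqref{bias} and the discussion preceding the lemma, for any feasible strategy we have
\begin{equation*}
\sum_{x,y}T_{x,y}\gamma(x,y)
=\int_{\Lambda} \Big(\sum_{x,y} T_{x,y}\, a(x,\lambda)\,\tr\bigl(B(y,\lambda)\rho(x,\lambda)\bigr)\Big)\,d\lambda,
\end{equation*}
so by convexity it suffices to bound the integrand for a fixed $\lambda$, i.e.\ to bound $\bigl|\sum_{x,y}T_{x,y}a(x)\tr(B(y)\rho(x))\bigr|$ where $a(x)\in\{-1,1\}$, $\rho(x)\in S^{2^c}$ and $B(y)$ is self-adjoint with $\|B(y)\|_{M_{2^c}}\le 1$. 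Setting $R_x:=a(x)\rho(x)$, the operator $R_x$ is self-adjoint and $\|R_x\|_{S_1^{2^c}}=\tr(\rho(x))=1$; taking $B_y:=B(y)$ yields an admissible pair $(R_x),(B_y)$ realizing the same value. Hence $\omega^*_{o.w-c}(G)$ is dominated by the supremum in \eqref{qg}.

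\textbf{Lower bound $\omega^*_{o.w-c}(G)\ge \sup(\ldots)$.} Conversely, fix admissible $(R_x)$ and $(B_y)$. Since $B_y$ is self-adjoint with $\|B_y\|_{M_{2^c}}\le 1$, the operators $F_y^{\pm}:=\tfrac{1}{2}(I\pm B_y)$ are positive, sum to $I$, and satisfy $B_y=F_y^{+}-F_y^{-}$, so Bob has a $\pm 1$ valued POVM with which $B(y,\lambda):=B_y$ is a legitimate measurement. For Alice's side, use the spectral decomposition of $R_x$ to write
\begin{equation*}
R_x=\alpha_x\sigma_x^{+}-\beta_x\sigma_x^{-},
\qquad \alpha_x,\beta_x\ge 0,\quad \alpha_x+\beta_x=\|R_x\|_{S_1^{2^c}}\le 1,
\end{equation*}
with $\sigma_x^{\pm}$ density matrices on the positive and negative eigenspaces of $R_x$. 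Let $\tau$ be any fixed state and let $\lambda$ be uniform on $[0,1]$ (shared randomness); define Alice's strategy on input $x$ by: send $(+1,\sigma_x^{+})$ on $[0,\alpha_x]$, $(-1,\sigma_x^{-})$ on $(\alpha_x,\alpha_x+\beta_x]$, and $(\pm 1,\tau)$ with equal probability on the remaining interval of length $1-\alpha_x-\beta_x$. Then
\begin{equation*}
\int_{0}^{1} a(x,\lambda)\rho(x,\lambda)\,d\lambda
=\alpha_x\sigma_x^{+}-\beta_x\sigma_x^{-}+\tfrac{1-\alpha_x-\beta_x}{2}\tau-\tfrac{1-\alpha_x-\beta_x}{2}\tau=R_x,
\end{equation*}
and therefore the correlation produced equals $\tr(B_y R_x)$, giving the reverse inequality.

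\textbf{Where the main work lies.} The upper bound is a direct extreme-point/convexity argument. The only nontrivial step is the lower bound: one has to realize an arbitrary self-adjoint $S_1^{2^c}$-contraction $R_x$ as a signed convex mixture of states of the form $a(x,\lambda)\rho(x,\lambda)$. The spectral decomposition together with the ``slack'' coming from $\alpha_x+\beta_x\le 1$ (absorbed symmetrically by $\pm\tau$) does this cleanly; the accompanying parametrization $B_y=\tfrac{1}{2}(I+B_y)-\tfrac{1}{2}(I-B_y)$ on Bob's side is the standard dichotomy identification and presents no difficulty.
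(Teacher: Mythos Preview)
Your proof is correct and follows essentially the same approach as the paper: convexity plus $R_x:=a(x)\rho(x)$ for the upper bound, and the positive/negative spectral decomposition of $R_x$ together with a randomized Alice strategy for the lower bound. The only cosmetic difference is that the paper first reduces $\|R_x\|_{S_1^{2^c}}\le 1$ to $\|R_x\|_{S_1^{2^c}}=1$ by a convex-combination argument before decomposing, whereas you handle the slack $1-\alpha_x-\beta_x$ directly via the symmetric $\pm\tau$ contribution; both are equivalent.
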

\begin{proof}
It suffices to consider correlation matrices and, in addition, to restrict to extreme points.
Hence, 
\begin{align}\label{eq: quantum-com}
\omega^*_{o.w-c}(G)= \sup \Big|\sum_{x\in \setx,y\in \sety}T_{x,y}a(x)tr\big(B(y) \rho(x)\big)\Big|,
\end{align}where $a(x)=\pm 1$, $\rho_x\in S^{2^c}$ for every $x$, and $(B_{y})_{y}$ is a family of self-adjoint operators verifying $\|B_y\|_{M_{2^c}}\leq 1$ for every $y$. 

Now, if we define $R_x=a(x) \rho(x)$, it is clear that $\omega^*_{o.w-c}(G)$ is upper bounded by the supremum in (\ref{qg}). In order to show the equality between the suprema in (\ref{qg}) and (\ref{eq: quantum-com}) we will see that the extreme points of the corresponding sets are the same. 

To this end, let us consider a family of self-adjoint  operators $(R_x)_{x}$ in $M_{2^c}$ verifying $\|R_x\|_{S_1^{2^c}}\leq 1$ for every $x$. It is easy to see that $(R_x)_{x}$  can be written as a convex combination of families of self-adjoint  operators $(\tilde{R}_x)_{x}$ in $M_{2^c}$ verifying $\|\tilde{R}_x\|_{S_1^{2^c}}= 1$ for every $x$. Indeed, this can be seen by considering the singular value decomposition of any $R_{x}$ and writing  the corresponding diagonal matrix $D_x$ as a suitable convex combination of diagonal matrices with only one non zero entry equal $\pm 1$.
 
On the other hand, if we consider a family of self-adjoint operators verifying $\|R_x\|_{S_1^{2^c}}= 1$, for every $x$, we can decompose  each operator $R_x$ in its positive and negative part $R_x=R_x^+-R_x^-$ (so that $tr(R_x^+)+tr(R_x^-)=1$) and write $$R_x=tr(R_x^+)\frac{R_x^+}{tr(R_x^+)}-tr(R_x^-)\frac{R_x^-}{tr(R_x^-)}.$$

It is clear that the operators $\rho^{\pm}_x=R_x^\pm/tr(R_x^\pm)$ are states. In addition, it is easy to define from here a probability space  $(\Lambda, \lambda)$, a family of $\pm 1$ random variables $(\alpha(x,\lambda))_{\lambda, x}$ and a family of states $(\rho(x,\lambda))_{\lambda, x}$ such that for fixed $(B_y)_y$, and for every $x\in \setx$,  we have $$tr(B_yR_x)=\int_{\Lambda} \alpha(x, \lambda)tr(B_y\rho(x,\lambda))d\lambda.$$



This concludes the proof.
\end{proof}

\begin{remark}\label{remark norm 1}
It follows from convexity reasonings that 
\begin{equation*}
\omega^*_{o.w-c}(G)= \sup \Big|\sum_{x\in \setx,y\in \sety}T_{x,y}tr\big(B_y  R_x\big)\Big|,
\end{equation*}where here the supremum runs over all families of self-adjoint operators $(B_{y})_{y}$, $(R_x)_{x}$ in $M_{2^c}$ verifying $\|R_x\|_{S_1^{2^c}}=1$ and $\|B_y\|_{M_{2^c}}= 1$ for every $x$, $y$. In fact, a compactness argument shows that the supremum is actually a maximum.
\end{remark}

Let us conclude this section by providing an upper bound for the quotient $\omega^*_{o.w-\log d}(G)/\omega_{o.w-\log d}(G)$.
\begin{prop}\label{prop optimality}
Let $G$ be a XOR game. Then, for every natural number $n$ the following inequalities hold:
\begin{align}\label{eq: upper bound}
\omega^*_{o.w-\log n}(G)\leq \sqrt{n}\cdot \omega^*(G)\leq K_G^{\mathbb R}\sqrt{n}\cdot \omega (G).
\end{align}
\end{prop}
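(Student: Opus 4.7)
The second inequality $\omega^*(G)\leq K_G^{\mathbb R}\,\omega(G)$ is simply the Grothendieck--Tsirelson bound recorded in (\ref{Gro-ineq}), so no work is needed there. The real content is the first inequality $\omega^*_{o.w-\log n}(G)\leq \sqrt n\,\omega^*(G)$. The plan is to realise the one-way quantum quantity as an inner product in a real Hilbert space, then convert it into a Tsirelson-type correlation, paying a factor of $\sqrt n$ for the passage between the Schatten classes $S_1^n$ and $S_2^n$.

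More concretely, by Lemma \ref{l:quantumvalue} together with Remark \ref{remark norm 1}, I may fix self-adjoint operators $R_x,B_y\in M_n$ with $\|R_x\|_{S_1^n}=1$ and $\|B_y\|_{M_n}=1$ attaining
$$\omega^*_{o.w-\log n}(G)=\Big|\sum_{x,y}T_{x,y}\,\tr(B_y R_x)\Big|.$$
I regard the space of self-adjoint $n\times n$ matrices as a real Hilbert space of dimension $n^2$ equipped with the Hilbert--Schmidt pairing $\langle A,B\rangle_{HS}=\tr(AB)$, and set $u_x:=R_x$ and $v_y:=B_y/\sqrt n$. For any self-adjoint $A\in M_n$ with eigenvalues $\{\lambda_i\}$ the inequalities $\|A\|_{HS}^2=\sum_i\lambda_i^2\leq (\sum_i|\lambda_i|)^2=\|A\|_{S_1^n}^2$ and $\|A\|_{HS}^2\leq n\,\|A\|_{M_n}^2$ give $\|u_x\|_{HS}\leq 1$ and $\|v_y\|_{HS}\leq 1$.

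Substituting, one obtains
$$\omega^*_{o.w-\log n}(G)=\sqrt n\,\Big|\sum_{x,y}T_{x,y}\,\langle u_x,v_y\rangle_{HS}\Big|\leq \sqrt n\,\omega^*(G),$$
where the last step is Tsirelson's characterisation (\ref{quantum bias XOR}), whose supremum is taken over unit vectors in an arbitrary real Hilbert space. Chaining this with (\ref{Gro-ineq}) produces the full estimate (\ref{eq: upper bound}).

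There is no genuine obstacle: the only subtle point is making sure the inner product one manufactures actually lives in a \emph{real} Hilbert space (which is why one works with self-adjoint matrices under the HS pairing, so that all scalars are automatically real) and that the Schatten norm comparisons produce precisely the factor $\sqrt n$. Once these bookkeeping items are in place, the estimate reduces mechanically to Tsirelson's theorem.
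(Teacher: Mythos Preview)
Your proof is correct and follows essentially the same route as the paper: both arguments embed the self-adjoint matrices into the real Hilbert space with the Hilbert--Schmidt inner product, use the Schatten norm comparisons $\|\cdot\|_{S_2^n}\leq \|\cdot\|_{S_1^n}$ and $\|\cdot\|_{S_2^n}\leq \sqrt n\,\|\cdot\|_{M_n}$ to place $R_x$ and $B_y/\sqrt n$ in the unit ball, and then invoke Tsirelson's characterisation (\ref{quantum bias XOR}). The only cosmetic difference is that you appeal to Remark \ref{remark norm 1} to normalise the operators, whereas the paper works directly with the inequality form of Lemma \ref{l:quantumvalue}; this changes nothing substantive.
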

\begin{proof}
The second inequality in (\ref{eq: upper bound}) is a consequence of (\ref{Gro-ineq}). 

Hence, it suffices to show the inequality $\omega^*_{o.w-\log n}(G)\leq \sqrt{n}\cdot \omega^*(G)$. To this end, let us consider the coefficients associated to the game $G$, $(T_{x,y})_{x,y}$. According to Lemma \ref{l:quantumvalue} and Eq. (\ref{quantum bias XOR}), we must prove that
\begin{equation*}
\sup \Big|\sum_{x\in \setx,y\in \sety}T_{x,y}tr\big(R_xB_y\big)\Big|\leq K_G^{\mathbb R}\sqrt{n}\sup \Big|\sum_{x\in \setx,y\in \sety}T_{x,y}\langle u_x, v_y\rangle\Big|, 
\end{equation*} where the supremum on the left hand side runs over all families of self-adjoint operators $(R_x)_{x}$, $(B_{y})_{y}$ verifying $\|R_x\|_{S_1^{n}}\leq 1$ and $\|B_y\|_{M_{n}}\leq 1$ for every $x$, $y$; and the supremum on the right hand side is taken over all families of vectors $(u_x)_x$, $(v_y)_y$ in a unit ball of a real Hilbert space. 

Now, for one such family of self-adjoint operators $(R_x)_{x}$,  $(B_{y})_{y}$, we know that $\|R_x\|_{S_2^{n}}\leq 1$ and $\|B_y\|_{S_2^{n}}\leq \sqrt{n}$. Indeed, these inequalities follow from the well known facts $\|\cdot \|_{S_2^{n}}\leq \|\cdot \|_{S_1^{n}}$ and $\|\cdot \|_{M_n}\leq \sqrt{n}\|\cdot \|_{S_2^{n}}$ (which can be easily checked by considering the singular value decomposition of the matrices). Then, we can realize $(R_x)_x$, $(\frac{1}{\sqrt{n}}B_y)_y$ as two families of elements in the unit ball of the real Hilbert space of self-adjoints operators in $M_{n}$ endowed with the inner product $\langle A, B\rangle=tr(AB)$. That is, we can see 
$$tr\big(B_y  R_x\big)=\sqrt{n}\langle u_x, v_y\rangle$$for some vectors $(u_x)_x$, $(v_y)_y$ in the unit ball of a real Hilbert space. This concludes the proof.
\end{proof}
\section{Bell violations from advantages in the value of the game with communication}\label{S:Bell violations}
In this section we prove the first of our main results, Theorem \ref{Thm:MainI}. For the convenience of the reader let us recall the statement of the theorem. Given a XOR game $G=(\pi,f)$ with coefficients $T_{x,y}=\pi(x,y)f(x,y)$ for every $x\in \setx$, $y\in \sety$, and given a natural number $d$, we will consider a Bell functional $B^G_d$ defined as follows:

- Set of inputs for Alice and Bob: $\tilde{\setx}=\setx$ and $\tilde{\sety}=\sety\times \{1,\cdots, d\}$, respectively.

- Set of outputs for Alice and Bob: $\tilde{\seta}=\{1,\cdots, d\}\times \{-1,1\}$ and $\tilde{\setb}=\{-1,1\}$, respectively.

The coefficients of the Bell functional are defined, for every $x\in \setx$, $(y,k)\in \sety\times \{1,\cdots, d\}$, $(a,\tilde{a})\in \{1,\cdots, d\}\times \{-1,1\}$, $b\in \{-1,1\}$, as:
\begin{align*}
B^G_d(a,\tilde{a},b,x,y,k)=T_{x,y}\cdot \delta_{a,k} \cdot \tilde{a}\cdot b,
\end{align*}where $\delta_{a,k}$ equals one if $a=k$ and equals zero otherwise.

Then, Theorem \ref{Thm:MainI} states that 
$$\frac{\omega^*(B^G_{d^2})}{\omega(B^G_{d^2})}\geq \frac{\omega^*_{o.w-\log d}(G)}{\omega_{o.w-2\log d}(G)}.$$

Theorem \ref{Thm:MainI} will be obtained as a direct consequence of the following two lemmas, which relate the values $\omega_{o.w-2\log d}(G)$ and $\omega^*_{o.w-\log d}(G)$ with the classical and quantum values of the Bell functional $B^G_{d^2}$ respectively.
\begin{lemma}\label{lemma 1}
Let $G$ be a XOR and let $d$ be a natural number. Let $B^G_{d^2}$ be the Bell functional defined above. Then, 
$$\omega(B^G_{d^2})\leq \omega_{o.w-2\log d}(G).$$
\end{lemma}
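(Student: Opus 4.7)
The plan is to reduce the computation of $\omega(B^G_{d^2})$ to deterministic local strategies, and then to observe that every such strategy corresponds, coordinate-for-coordinate, to a one-way classical communication strategy for $G$ using $2\log d$ bits.

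First, since $\mathcal L$ is a convex polytope and the map $P \mapsto \langle B^G_{d^2}, P\rangle$ is linear, the supremum in (\ref{classical value}) is attained at an extreme point, i.e., a deterministic local strategy. Such a strategy is specified by functions $a:\setx\to\{1,\ldots,d^2\}$ and $\tilde a:\setx\to\{-1,1\}$ on Alice's side (producing her two-part output $(a(x),\tilde a(x))$), together with a function $b:\sety\times\{1,\ldots,d^2\}\to\{-1,1\}$ on Bob's side.

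Next, I would substitute this strategy into $\langle B^G_{d^2}, P\rangle$. The Kronecker delta $\delta_{a,k}$ in the coefficients collapses the sum over $k$ to the single term $k=a(x)$, yielding
$$\langle B^G_{d^2}, P\rangle \;=\; \sum_{x\in\setx,\, y\in\sety} T_{x,y}\, \tilde a(x)\, b\bigl(y,a(x)\bigr).$$

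Finally, I would match this expression against the extreme-point form (\ref{classical simplified}) of $\omega_{o.w-c}(G)$ derived inside the proof of Lemma \ref{Lemma: classical}. Choosing $c=2\log d$, so that the message alphabet has size $2^{c}=d^{2}$, and reading Alice's first output $a(x)$ as her message $m(x)$, her second output $\tilde a(x)$ as her $\pm 1$ answer, and $b(y,m)$ as Bob's $\pm 1$ response to $y$ after receiving $m$, the displayed sum coincides with the quantity maximized in (\ref{classical simplified}). Hence its absolute value is bounded by $\omega_{o.w-2\log d}(G)$, and taking the supremum over deterministic $P\in\mathcal L$ gives the claim.

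The argument is largely bookkeeping, so I do not anticipate a serious obstacle; the only point to keep straight is that the message-alphabet size $2^{2\log d}=d^{2}$ has to agree with the cardinality of Alice's first output coordinate in $B^G_{d^2}$, which is precisely why the Bell functional is defined with $d^{2}$ (rather than $d$) values in that coordinate.
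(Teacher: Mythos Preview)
Your argument is correct and follows essentially the same route as the paper: reduce by convexity, then recognize the resulting expression as a one-way classical communication strategy for $G$ with message alphabet of size $d^{2}$. The only cosmetic difference is that the paper stops at product (not necessarily deterministic) distributions, obtains real-valued correlations $\gamma(k,x)=P(k,1|x)-P(k,-1|x)$ and $\beta(y,k)=Q(1|y,k)-Q(-1|y,k)$ satisfying $\sum_{k}|\gamma(k,x)|\le 1$ and $|\beta(y,k)|\le 1$, and then invokes the general form (\ref{eq: classical-com}) of Lemma~\ref{Lemma: classical}; you instead pass all the way to deterministic extreme points and match directly with (\ref{classical simplified}).
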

\begin{proof}
According to the definition of the classical value of a Bell functional (\ref{classical value}), by convexity we just need to look at elements of the form $P(x, y,k|a,\tilde{a},b)=P(a, \tilde{a}|x)Q(b|y,k)$ for every $x$, $(y,k)$, $(a, \tilde{a})$, $b$; where $(P(a, \tilde{a}|x))_{a,\tilde{a},x}$, $(Q(b|y,k))_{b,y,k}$ are nonnegative numbers verifying $\sum_{a=1}^{d^2}\sum_{\tilde{a}=\pm 1}P(a, \tilde{a}|x)=1$ and  $Q(1|y,k)+Q(-1|y,k)=1$ for every $x$, $y$ and $k$. Then,

\begin{align*}
\langle B^G_{d^2}, P\rangle&=\sum_{x\in \setx, y\in \sety}\sum_{k=1}^{d^2}\sum_{a=1}^{d^2}\sum_{\tilde{a}=\pm 1}\sum_{b=\pm 1} T_{x,y}\cdot \delta_{a,k}\cdot \tilde{a}\cdot b \cdot P(a, \tilde{a}|x) Q(b|y,k)\\
&=\sum_{x\in \setx, y\in \sety}\sum_{k=1}^{d^2} T_{x,y}\big(P(k, 1|x)-P(k,-1|x)\big)\big(Q(1|y,k)-Q(-1|y,k)\big)\\
&=\sum_{x\in \setx, y\in \sety}\sum_{k=1}^{d^2} T_{x,y}\gamma(k,x) \beta(y,k),
\end{align*}where $\gamma(k,x)=P(k, 1|x)-P(k,-1|x)$ and $\beta(y,k)=Q(1|y,k)-Q(-1|y,k)$ for every $x,y,k$. Note that these are real numbers verifying $|\beta(y,k)|\leq 1$ for every $y,k$, and $$\sum_{k=1}^{d^2}|\gamma(x,k)|=\sum_{k=1}^{d^2}|P(k, 1|x)-P(k,-1|x)|\leq \sum_{k=1}^{d^2}\Big(P(k, 1|x)+P(k,-1|x) \Big)= 1$$for every $x$.

According to Lemma \ref{Lemma: classical}, $$\omega(B^G_{d^2})\leq\omega_{o.w-\log d^2}(G)=\omega_{o.w-2\log d}(G).$$
\end{proof}

In order to study the quantum case, we will consider, for every natural number $n$, the following unitaries on $\C^n$ defined as $$u_j |l\rangle=e^{\frac{2\pi ijl}{n}}|l\rangle\text{    }\text{    }\text{ and    }\text{    }\text{    }v_k |l\rangle=|l+k \rangle\text{    }\text{    }\text{   for every  }j,k,l=1,\cdots, n,$$where $j+l$ is understood mod $n$. Then, we consider the new unitaries $$W_{k,j}=v_ku_j \text{    }\text{   for every  }j,k=1,\cdots, n.$$

An important property of these unitaries is that 
\begin{align}\label{unitary properties}
\frac{1}{n}\sum_{j,k=1}^nW_{k,j}A W_{k,j}^*=tr(A)\uno_{M_n}
\end{align}for every matrix $A$ in $M_n$.

The previous unitaries have been used in different context of quantum information. In fact, the proof of the following result is motivated by the embeddings between noncommutative $L_p$-spaces considered in \cite{JPII}, which are themselves based on the quantum teleportation protocol. However, no knowledge about noncommutative $L_p$-spaces will be needed here.

\begin{lemma}\label{lemma 2}
Let $G$ be a XOR and let $d$ be a natural number. Let $B^G_{d^2}$ be the Bell functional defined above. Then, 
$$\omega^*(B^G_{d^2})\geq  \omega^*_{o.w-\log d}(G).$$
\end{lemma}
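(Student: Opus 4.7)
The plan is to exhibit, for any $\epsilon>0$, a quantum strategy for the Bell functional $B^G_{d^2}$ achieving value at least $\omega^*_{o.w-\log d}(G)-\epsilon$. The construction is modelled on quantum teleportation with the family $\{W_{k,j}\}_{k,j=1}^d$ playing the role of a generalized Pauli basis, and the identification $\{1,\dots,d^2\}\cong\{1,\dots,d\}^2$ is precisely what aligns the input/output labels of $B^G_{d^2}$ with the labels of the generalized Bell states $|\psi_{k,j}\rangle=(W_{k,j}\otimes I)|\phi\rangle$, where $|\phi\rangle=\tfrac{1}{\sqrt d}\sum_i|i\rangle|i\rangle\in\C^d\otimes\C^d$.

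First, by Lemma \ref{l:quantumvalue} together with the extremal decomposition carried out in its proof, pick $a:\setx\to\{\pm1\}$, states $(\rho(x))_x$ in $M_d$, and self-adjoint $(B_y)_y$ in $M_d$ with $\|B_y\|_{M_d}\le 1$ such that $\bigl|\sum_{x,y} T_{x,y}\,a(x)\,\tr(B_y\rho(x))\bigr|\ge\omega^*_{o.w-\log d}(G)-\epsilon$. The proposed strategy is: Alice and Bob share $|\phi\rangle$, with Alice also holding a local $d$-dimensional ancilla. On input $x$, Alice prepares $\rho(x)$ on her ancilla, measures the ancilla together with her half of $|\phi\rangle$ in the Bell basis $\{|\psi_{k,j}\rangle\}_{k,j}$, and outputs $((k,j),a(x))\in\{1,\dots,d\}^2\times\{-1,1\}$. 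On input $(y,(k,j))$, Bob measures his half of $|\phi\rangle$ with the $\pm1$-valued observable $W_{k,j}^\dagger B_yW_{k,j}$, whose operator norm equals $\|B_y\|_{M_d}\le 1$.

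The key calculation, carried out via the partial-trace identity $\tr_A(M_A|\phi\rangle\langle\phi|_{AA'})=\tfrac{1}{d}M^T_{A'}$, shows that Alice's outcome $(k,j)$ has probability $1/d^2$ independently of $\rho(x)$, and collapses Bob's (normalized) reduced state to $W_{k,j}^\dagger\rho(x)W_{k,j}$. The factor $\delta_{a,k}$ in the coefficients of $B^G_{d^2}$ keeps only the terms where Alice's first output $(k,j)$ equals Bob's second input, and for those terms cyclicity of the trace yields $\tr\bigl(W_{k,j}^\dagger B_yW_{k,j}\cdot W_{k,j}^\dagger\rho(x)W_{k,j}\bigr)=\tr(B_y\rho(x))$. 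Combining these facts with the deterministic assignment $\tilde{a}=a(x)$, we obtain $\langle B^G_{d^2},P\rangle=\sum_{x,y}T_{x,y}\sum_{(k,j)}\tfrac{1}{d^2}\,a(x)\tr(B_y\rho(x))=\sum_{x,y}T_{x,y}\,a(x)\tr(B_y\rho(x))$, and the desired inequality follows by letting $\epsilon\to 0$. The main technical step is the teleportation computation in this paragraph: writing $d^2=d\cdot d$ and including the Kronecker delta $\delta_{a,k}$ in $B^G_{d^2}$ are precisely what enables Bob to incorporate $W_{k,j}^\dagger$ into his measurement and cancel the random generalized Pauli produced by Alice's Bell measurement, so that the game value "sees" only the clean correlator $a(x)\tr(B_y\rho(x))$.
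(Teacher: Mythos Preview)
Your proof is correct and follows essentially the same teleportation-based idea as the paper's: both use the maximally entangled state in $\C^d\otimes\C^d$ together with the generalized Pauli/Weyl operators $W_{k,j}$ so that the $\delta_{a,k}$ in $B^G_{d^2}$ cancels the random unitary and leaves the clean correlator $\tr(B_y\rho(x))$. The only cosmetic difference is that you work with the extreme-point form $R_x=a(x)\rho(x)$ and an explicit ancilla-plus-Bell-measurement on Alice's side, whereas the paper absorbs the ancilla and the sign into POVM elements $E_x^{a,\tilde a}=\tfrac{1}{d}W_aR_x^{\tilde a}W_a^*$ (with $R_x=R_x^{1}-R_x^{-1}$) acting directly on Alice's half of $|\varphi\rangle$, and puts the transpose on Bob's side.
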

\begin{proof}
According to Lemma \ref{l:quantumvalue} and Remark \ref{remark norm 1} there exist families of self-adjoint operators $(R_x)_{x}$, $(B_{y})_{y}$ verifying $\|R_x\|_{S_1^d}=  1$, $\|B_y\|_{M_d}= 1$ for every $x$, $y$; and such that $$\omega^*_{o.w-\log d}(G)= \Big|\sum_{x,y}T_{x,y}tr\big(R_xB_y  \big)\Big|.$$ 

In particular, for every $y$ we can write $B_y=F_y^1-F_y^{-1}$ for certain semidefinite positive operators $F_y^i$, $i=-1,1$ verifying $F_y^1+F_y^{-1}=\uno$. At the same time, for every $x$ we can write $R_x=R_x^{1}-R_x^{-1}$ for some semidefinite positive operators $R_x^{i}$, $i=1,2$ verifying $1=\|R_x\|_{S_1^d}=tr(R_x^{1})+tr(R_x^{-1})$.

Let us identify the sets $\{1,\cdots, d^2\}=\{(j,k): j,k=1,\cdots , d\},$ so that we can talk about $W_a$ for every $a=1,\cdots, d^2$.

Then, for every $x\in \setx$ and $a=1,\cdots, d^2$ we define $$E_x^{a,1}=\frac{1}{d}W_{a}R_x^{1}W_{a}^*, \text{     }\text{     }\text{  and   } \text{     }\text{     } E_x^{a,-1}=\frac{1}{d}W_{a}R_x^{-1}W_{a}^*.$$These operators are semidefinite positive and, according to (\ref{unitary properties}), for every $x$ we have $$\sum_{a=1}^{d^2}\sum_{\tilde{a}=\pm 1}E_x^{a, \tilde{a}}=\frac{1}{d}\sum_{a=1}^{d^2}W_{a}(R_x^{1}+R_x^{-1})W_{a}^*=tr(R_x^{1}+R_x^{-1})\uno_{M_d}= \uno_{M_d}.$$

Now, for every $y\in \sety$, $k=1,\cdots , d^2$ and $i=1,-1$, we define $P^i_{y,k}=\bar{W}_k(F_y^i)^TW_k^T$, where $\bar{W}_k$ and $W_k^T$ are the conjugate operator and the transpose operator of $W_k$ respectively. The operators $P^i_{y,k}$ are semidefinite positive and they verify $P^1_{y,k}+P^{-1}_{y,k}=\uno$ for every $y,k$. 

The set of operators $(E_x^{a, \tilde{a}})_{a, \tilde{a}, x}$ and $\{P^b_{y,k}\}_{i,y,k}$ define families of POVMs acting on $\C^d$ for Alice and Bob respectively. Let us assume that Alice and Bob share the maximally entangled state $|\varphi\rangle=\frac{1}{\sqrt{d}}\sum_{i=1}^d|ii\rangle$. Then,
\begin{align*}
\omega^*(B^G_{d^2})&\geq\Big| \sum_{x\in \setx, y\in \sety}\sum_{k=1}^{d^2}\sum_{a=1}^{d^2}\sum_{\tilde{a}=\pm 1}\sum_{b=\pm 1}  T_{x,y}\cdot \delta_{a,k}\cdot \tilde{a}\cdot b \cdot \langle \varphi |E_x^{a, \tilde{a}}\otimes P^b_{y,k}|\varphi \rangle\Big|\\&=\frac{1}{d^2}\Big|\sum_{x\in \setx, y\in \sety}\sum_{k=1}^{d^2} T_{x,y}tr\Big (W_{k}\big (R_x^{1}-R_x^{-1}\big )W_{k}^* W_k(F_y^1- F_y^{-1})W_k^*\Big)\Big|\\&=\frac{1}{d^2}\Big|\sum_{x\in \setx, y\in \sety}\sum_{k=1}^{d^2}  T_{x,y}tr\Big (R_xB_y\Big)\Big|=\\&=\Big|\sum_{x\in \setx, y\in \sety}T_{x,y}tr\Big (R_xB_y\Big)\Big|\\&= \omega^*_{o.w-\log d}(G).
\end{align*}
\end{proof}

As we said before, Theorem \ref{Thm:MainI} follows immediately from Lemma \ref{lemma 1} and Lemma \ref{lemma 2}.
\section{Bell violations from  communication complexity advantages}\label{S:BV}

We start this section showing  a relation between the quotient $\omega^*_{o.w.-c}(G)/\omega_{o.w.-2c}(G)$ for XOR games $G=(f,\pi)$ and the advantage in quantum versus classical communication complexity of the function $f$. This relation will allows us to obtain, in Theorem \ref{Thm:MainII}, Bell inequality violations starting from a constant ratio advantage in the communication complexity of a function $f$. Note that the main result in \cite{Buhrman16} is the derivation of Bell inequality violations from advantage in communication complexity, but in that paper it is required that $CC(f,\frac{2}{3})\geq QC(f,\frac{2}{3})^4$, whereas we only require $CC(f,\frac{2}{3})\geq \alpha QC(f,\frac{2}{3})$ for constant $\alpha$. 

The following lemma uses standard techniques.  A proof appears in \cite[Appendix A]{Buhrman16}.

\begin{lemma}\label{l:highcomplexity}
Let $0<\epsilon<\frac{1}{6}$ and $f:\setx\times \sety\longrightarrow \{-1,1\}$. Then  $$CC(f, \frac{1}{2} +\epsilon)\geq \frac{\epsilon^2}{3} CC(f, \frac{2}{3}).$$
\end{lemma}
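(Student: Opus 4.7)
The plan is a standard success-probability boosting argument. Let $c_\epsilon := CC(f, \frac{1}{2}+\epsilon)$ and fix an optimal randomized protocol $\Pi$ that uses $c_\epsilon$ bits of communication and, for every input $(x,y)$, outputs $f(x,y)$ with probability at least $\frac{1}{2}+\epsilon$. The goal is to build, from $\Pi$, a new protocol $\Pi'$ with communication $\leq \tfrac{3}{\epsilon^2}\, c_\epsilon$ and success probability at least $\tfrac{2}{3}$; rearranging this inequality gives the lemma.

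The construction of $\Pi'$ is the usual parallel repetition with majority vote. Set $N := \lceil 3/\epsilon^2 \rceil$. Let Alice and Bob run $N$ independent copies of $\Pi$ in parallel, each copy using its own fresh block of shared randomness, and let Bob output the majority of the $N$ resulting bits. Since the runs are independent and each uses $c_\epsilon$ bits of one-way or two-way communication, the total number of bits exchanged is at most $N\, c_\epsilon \leq \tfrac{3}{\epsilon^2} c_\epsilon$.

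For the correctness analysis, fix an input $(x,y)$ and let $Z_i \in \{0,1\}$ indicate whether the $i$-th copy of $\Pi$ produced the correct value $f(x,y)$. The $Z_i$ are independent Bernoulli random variables with $\expe[Z_i] \geq \frac{1}{2}+\epsilon$. The majority vote is incorrect only if $\sum_i Z_i \leq N/2$, i.e.\ the empirical mean deviates from its expectation by at least $\epsilon$. Hoeffding's inequality bounds this event by $\exp(-2N\epsilon^2) \leq e^{-6} < \tfrac{1}{3}$, so $\Pi'$ succeeds with probability at least $\tfrac{2}{3}$ on every input.

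Combining these two points, $CC(f,\tfrac{2}{3}) \leq \tfrac{3}{\epsilon^2} CC(f, \tfrac{1}{2}+\epsilon)$, which is equivalent to the stated bound. The only technical item worth verifying is the Chernoff/Hoeffding constant: the exponent $2N\epsilon^2 \geq 6$ comfortably beats the threshold $\ln 3$ needed for success probability $\tfrac{2}{3}$, so the factor $3$ in the statement is consistent and no further tuning is required. There is no substantive obstacle; the argument is identical whether the communication model is one-way or two-way and applies with the same constants.
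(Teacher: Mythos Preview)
Your approach is exactly the standard boosting-by-majority-vote argument the paper intends (the paper does not give its own proof but cites \cite[Appendix A]{Buhrman16}, which uses precisely this Chernoff/Hoeffding repetition technique). One small arithmetic slip: with $N=\lceil 3/\epsilon^2\rceil$ you have $N\,c_\epsilon\geq \tfrac{3}{\epsilon^2}c_\epsilon$, not $\leq$; take $N=\lfloor 3/\epsilon^2\rfloor$ instead, which still gives $2N\epsilon^2\geq 6-2\epsilon^2>\ln 3$ (since $\epsilon<\tfrac16$) and then the desired inequality $N\,c_\epsilon\leq \tfrac{3}{\epsilon^2}c_\epsilon$ holds.
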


Before we state and prove our next result, we need a couple of observations. 

First, we recall the following definition (see, for instance, \cite[Definition 3.19]{NiKu}). For a function $f:\setx\times \sety\longrightarrow \{-1,1\}$, and a probability distribution $\pi:\setx\times \sety\longrightarrow [0,1]$, the (one-way) $(\pi,\delta)$-{\em distributional complexity} of $f$, $D_\pi(f,\delta)$ is the cost of the best (one-way) deterministic protocol that gives the correct answer for $f$ on at least a $\pi$ fraction of all inputs in $\setx\times \sety$, weighted by $\mu$. 

Next, we observe that if we have a XOR game $G=(f,\pi)$ for which we have a strategy which allows us to guess the correct answer with probability $\frac{1}{2}+\epsilon$, then, according to (\ref{bias}), for that strategy we have  $$\omega(G)=(\frac{1}{2}+\epsilon)-(\frac{1}{2}-\epsilon)=2\epsilon.$$

We recall again that our quantity $\omega(G)$ corresponds to the bias of the XOR games $G$ and not to the value of it.

Now we can state and prove the following lemma.

\begin{lemma}\label{lemma Thm II}
Let $\alpha>216$ and let $f:\setx\times \sety \longrightarrow \{-1, 1\}$ be a function for which $$\frac{CC(f,\frac{2}{3})}{QC(f,\frac{2}{3})}> \alpha.$$ 

Then, calling $c=QC(f,\frac{2}{3})$, there exists a probability distribution $\pi: \setx\times \sety \longrightarrow [0,1]$ such that the XOR game $G=(f, \pi)$ verifies $$\frac{\omega^*_{o.w.-c}(G)}{\omega_{o.w.-2c}(G)}> \frac{\sqrt{\alpha}}{6\sqrt{6}}.$$
\end{lemma}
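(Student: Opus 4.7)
The plan is to combine three ingredients: (i) a one-way quantum protocol for $f$ using $c$ qubits, guaranteed by $QC(f,2/3)=c$, which yields a universal lower bound on $\omega^*_{o.w.-c}(G)$ independent of the choice of $\pi$; (ii) Yao's minimax principle, which converts the lower bound on the one-way randomized classical communication complexity of $f$ into the existence of a hard input distribution $\pi$; and (iii) the amplification Lemma \ref{l:highcomplexity}, needed to turn the ratio hypothesis on $CC(f,2/3)/QC(f,2/3)$ into a usable lower bound on $CC(f,1/2+\epsilon)$ for a suitable small $\epsilon$.

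Concretely, I would fix $\epsilon = \sqrt{6/\alpha}$. The hypothesis $\alpha > 216$ is precisely what ensures $\epsilon < 1/6$, so Lemma \ref{l:highcomplexity} applies; combining it with the hypothesis $CC(f,2/3) > \alpha c$ gives
\[
CC(f, 1/2 + \epsilon) \;\geq\; \frac{\epsilon^2}{3}\, CC(f, 2/3) \;>\; \frac{\epsilon^2 \alpha}{3}\, c \;=\; 2c.
\]
By Yao's minimax principle applied to the one-way randomized model, there then exists a probability distribution $\pi$ on $\setx \times \sety$ whose one-way distributional complexity satisfies $D_\pi(f, 1/2 + \epsilon) > 2c$. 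Equivalently, no deterministic one-way $2c$-bit protocol (and hence, by averaging over shared randomness, no randomized one-way $2c$-bit protocol) computes $f$ correctly with $\pi$-weighted probability at least $1/2+\epsilon$.

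Fix this $\pi$ and set $G = (f, \pi)$. Using the identification of $\omega_{o.w.-2c}(G)$ with twice the best achievable bias recalled just before the lemma statement (together with the trivial symmetry of flipping outputs), the previous distributional bound reads $\omega_{o.w.-2c}(G) < 2\epsilon$. For the quantum side, the $c$-qubit one-way quantum protocol for $f$ converts into a one-way XOR strategy by having Alice output $a=+1$ deterministically while transmitting her $c$ qubits, and letting Bob output the protocol's answer as $b$; this yields $ab = f(x,y)$ with worst-case, and hence $\pi$-weighted, probability at least $2/3$, so $\omega^*_{o.w.-c}(G) \geq 2\cdot(2/3) - 1 = 1/3$. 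Dividing gives
\[
\frac{\omega^*_{o.w.-c}(G)}{\omega_{o.w.-2c}(G)} \;>\; \frac{1/3}{2\epsilon} \;=\; \frac{\sqrt{\alpha}}{6\sqrt{6}},
\]
as claimed.

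The main obstacle in carrying this out cleanly is the careful invocation of Yao's minimax principle in the one-way randomized model, together with the reduction of randomized to deterministic protocols via averaging over the shared randomness; the remaining steps amount to bookkeeping to verify that the threshold $\alpha > 216$ is precisely the value at which $\epsilon = \sqrt{6/\alpha}$ drops below $1/6$, enabling the application of Lemma \ref{l:highcomplexity}, while simultaneously making $\epsilon^2\alpha/3$ equal to $2$ with the strict slack inherited from the hypothesis $CC(f,2/3) > \alpha c$.
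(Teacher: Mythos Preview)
Your proposal is correct and follows essentially the same route as the paper: choose $\epsilon=\sqrt{6/\alpha}$ (using $\alpha>216$ to ensure $\epsilon<1/6$), apply Lemma~\ref{l:highcomplexity} together with the hypothesis to get $CC(f,1/2+\epsilon)>2c$, invoke Yao's minimax (one-way version) to obtain the hard distribution $\pi$, and then read off $\omega_{o.w.-2c}(G)<2\epsilon$ and $\omega^*_{o.w.-c}(G)\ge 1/3$. The only cosmetic difference is that you spell out the conversion of the quantum protocol into an XOR strategy (Alice outputs $+1$ and sends her qubits), which the paper leaves implicit.
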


\begin{proof}
Since $CC(f, \frac{2}{3})\geq \alpha QC(f, \frac{2}{3})$, Lemma \ref{l:highcomplexity} implies that for every $0<\epsilon <\frac{1}{6}$, $$CC(f, \frac{1}{2}+\epsilon ) > \frac{\alpha QC(f, \frac{2}{3})\epsilon^2}{3}.$$

Choosing $\epsilon=\sqrt{\frac{6}{\alpha}}$, we get  $CC(f, \frac{1}{2}+ \epsilon )> 2QC(f, \frac{2}{3})$. Note that the fact $\alpha>216$ guarantees that $0<\epsilon<\frac{1}{6}$.

We apply now  \cite[Theorem 3.20]{NiKu}, and we have that there exists a probability distribution $\pi:\setx\times \sety\longrightarrow [0,1]$ such that  $D_\pi(f, \frac{1}{2}+\epsilon) > 2QC(f, \frac{2}{3})$, where $D_\pi(f, \frac{1}{2}+\epsilon) $ denotes the one-way distributional complexity  of $f$ with the probabiity distribution $\pi$ (Note that \cite[Theorem 3.20]{NiKu} is stated and proved for two-way communication complexity, but the same result, with the same proof, applies for the case of one-way communication complexity). Calling $c=QC(f,\frac{2}{3})$  and $G=(f, \pi)$, this  implies that ${\omega_{o.w.-2c}(G)}\leq 2\epsilon=2\sqrt{\frac{6}{\alpha}}$. At the same time, the fact that $c=QC(f,\frac{2}{3})$ implies that, for every choice of a probability distribution $\nu:\setx\times \sety\longrightarrow [0,1]$, in particular for $\nu=\pi$, the game $G=(f, \nu)$ verifies $\omega^*_{o.w.-c}(G)\geq 2(\frac{2}{3}-\frac{1}{2})=\frac{1}{3}$. The result now follows. 
\end{proof}

Theorem \ref{Thm:MainII} is now straightforward from Theorem \ref{Thm:MainI} and Lemma \ref{lemma Thm II}.

To compare this result with the main result of \cite{Buhrman16}, note that in that paper, to achieve a Bell inequality violation it is required that  $CC(f,\frac{2}{3})\geq QC(f,\frac{2}{3})^4$, whereas we only need constant separation. In addition, let us mention that condition $\alpha>216$ comes from the application of Lemma \ref{l:highcomplexity} and the definition of $\epsilon=\sqrt{\frac{6}{\alpha}}$ in the proof of Lemma \ref{lemma Thm II}. Since our goal was to show that  Theorem \ref{Thm:MainI} allows to obtain Bell inequality violations from a constant separation in communication complexity problems we did not make a great effort to improve condition $\alpha>216$. However, we think that an alternative proof of Lemma \ref{lemma Thm II} should allow to decrease the value $216$. 

Let us also remark that, whereas Theorem \ref{Thm:MainII} applies only to the one-way communication scenario, the main result in \cite{Buhrman16} covers the case of two-way communication complexity scenarios. This is in fact one of the main points in \cite{Buhrman16}. We still do not know if our techniques can be adapted to deal with that situation.

\section{A game almost maximizing the ratio $\omega^*_{o.w.-c}(G)/\omega_{o.w.-c}(G)$}\label{S:values}

In this section we define a XOR game $G$  for which the  ratio $\omega^*_{o.w-c}(G)/ \omega_{o.w-c}(G)$  is essentially optimal as a function of $c$. Indeed, we will prove Theorem \ref{Thm:MainIII}, which could be stated in the following more precise form:
\begin{theorem}
There exists a family of XOR games $(G_n)_n$ so that $G_n$ has $2^{2n}$ inputs for Alice and $2^{n^2}$ inputs for Bob and such that for every $k\geq e^2$ the following inequality holds:$$\frac{\omega^*_{o.w-\log n}(G_n)}{\omega_{o.w-\log k}(G_n)} \geq C\frac{\sqrt{n}}{\log k},$$where $C$ is a universal constant.

Moreover, this statement is essentially optimal, up to a logarithmic factor. 
\end{theorem}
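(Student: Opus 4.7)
For the optimality clause, note that $\omega_{o.w-\log k}(G)\geq \omega(G)$ for every $k\geq 1$ (the case $k=1$ is the no-communication case of the one-way classical model), so Proposition~\ref{prop optimality} immediately yields
$$\frac{\omega^*_{o.w-\log n}(G_n)}{\omega_{o.w-\log k}(G_n)}\;\leq\;\frac{\omega^*_{o.w-\log n}(G_n)}{\omega(G_n)}\;\leq\; K_G^{\R}\sqrt n.$$
The remaining $\log k$ gap with the lower bound $C\sqrt n/\log k$ is precisely the ``logarithmic factor'' appearing in the statement, so the substance of the theorem is the explicit construction of a family $G_n$ attaining this lower bound.

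I propose to construct $G_n$ as a bilinear-form XOR game encoding a rank-one trace pairing on $M_n$, which is the natural structure for saturating the $\sqrt n$ factor of Proposition~\ref{prop optimality} (whose proof relies on the gap $\|\cdot\|_{M_n}\leq \sqrt n\,\|\cdot\|_{S_2^n}$ for operators in $M_n$). Concretely, take Alice's input to be a pair of sign vectors $(u,v)\in\{\pm 1\}^n\times\{\pm 1\}^n$ (cardinality $2^{2n}$), Bob's input to be a sign matrix $A\in\{\pm 1\}^{n\times n}$ (cardinality $2^{n^2}$), $\pi$ uniform (possibly conditioned on the constant-probability event $\|A\|_{M_n}\leq C\sqrt n$), and the XOR predicate $f((u,v),A)=\mathrm{sign}\bigl(u^t(A+A^t)v\bigr)=\mathrm{sign}\bigl(\tr((uv^t+vu^t)A)\bigr)$ so that it matches the trace pairing appearing in Lemma~\ref{l:quantumvalue}. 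The symmetrization by $A+A^t$ is what makes the corresponding Alice-side operator self-adjoint without increasing the qubit count.

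For the quantum lower bound, I apply Lemma~\ref{l:quantumvalue} with $c=\log n$, $R_{(u,v)}=(uv^t+vu^t)/(2n)$ (self-adjoint with $\|R_{(u,v)}\|_{S_1^n}\leq 1$) and $B_A=A/(C\sqrt n)$ (with $\|B_A\|_{M_n}\leq 1$ on the support of $\pi$, by standard random-matrix concentration for i.i.d.\ sign matrices). A direct computation gives $\tr(R_{(u,v)}B_A)\asymp u^t(A+A^t)v/n^{3/2}$, and since $u^t(A+A^t)v$ is a sum of $\Theta(n^2)$ centered sign-like terms, its typical absolute value is $\asymp n$ by Khintchine, yielding $\omega^*_{o.w-\log n}(G_n)\gtrsim n/n^{3/2}=1/\sqrt n$ once multiplied by $f=\mathrm{sign}$.

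The main obstacle is the matching classical upper bound $\omega_{o.w-\log k}(G_n)=O(\log k/n)$. By Lemma~\ref{Lemma: classical} this is a bilinear norm of the normalized sign tensor $T_{(u,v),A}\propto \mathrm{sign}(u^t(A+A^t)v)$ against sequences $(a((u,v),m))_m$ with $\sum_m|a((u,v),m)|\leq 1$ and $(b(A,m))_m$ with $|b(A,m)|\leq 1$. The no-communication case ($k=1$) reduces to showing that the underlying Grothendieck-type bilinear norm of this sign tensor is $O(1/n)$; this should follow from an arcsin-law/random-matrix estimate for the correlation of $\mathrm{sign}(u^t(A+A^t)v)$ and $\mathrm{sign}((u')^t(A+A^t)v')$, which is $\asymp 1/n$ because the natural correlation $[(u\cdot u')(v\cdot v')+(u\cdot v')(v\cdot u')]/n^2$ is $O(1/n)$ in typical position. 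The $k$-message case is the subtle step: the naive triangle inequality over $k$ messages costs a factor $k$, and extracting the sharp $\log k$ factor should follow from a Dudley/entropy-chaining argument exploiting the $\ell_1$-constraint $\sum_m|a((u,v),m)|\leq 1$ across messages, effectively replacing the message count by the metric entropy $\log k$ of the relevant strategy set. I expect this chaining/entropy step to be the crux of the proof; once it is established, combining the two bounds yields the desired ratio $\omega^*_{o.w-\log n}(G_n)/\omega_{o.w-\log k}(G_n)\geq C\sqrt n/\log k$.
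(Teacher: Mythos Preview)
Your handling of the optimality clause and your quantum lower bound are essentially correct and match the paper's approach: the same rank-one operators $R_{(u,v)}\propto uv^t$ and $B_A\propto A$ appear there (the paper skips your symmetrization $A+A^t$ and simply splits into real/imaginary parts at the end, paying a factor $4$). So the quantum side is fine.

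The genuine gap is on the classical side, and it stems from your choice of \emph{uniform} $\pi$. The paper does \emph{not} take $\pi$ uniform: it sets
\[
\pi\big((u,v),A\big)\;=\;\frac{1}{M}\,\bigl|u^tAv\bigr|,\qquad f\big((u,v),A\big)\;=\;\sign(u^tAv),
\]
so that the game coefficients become \emph{multilinear} in the Rademacher variables:
\[
T_{(u,v),A}\;=\;\pi\cdot f\;=\;\frac{1}{M}\,u^tAv\;=\;\frac{1}{M}\sum_{i,j} r_i(u)r_j(v)r_{i,j}(A).
\]
This linearity is exactly what makes the classical upper bound tractable: one applies H\"older with conjugate exponents $(p,p')$, then the Khintchine inequalities (single in $A$, double in $(u,v)$) directly to $T_{(u,v),A}$, and finally chooses $p'=\ln k$ so that the Khintchine constant $b_{p'}\leq\sqrt{2ep'}$ produces the factor $\log k$ while $k^{1/p'}=e$ stays bounded. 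No chaining or entropy argument is needed; the $\log k$ is the Khintchine $L_{p'}$ constant.

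With your uniform $\pi$, the coefficients are $\sign(u^t(A+A^t)v)$, which are not multilinear, and the Khintchine machinery no longer applies. Your proposed route (arcsin-law correlations plus a Dudley-type chaining over messages) is plausible-sounding but far from a proof; in particular it is not clear how to extract the sharp $O(\log k/n)$ from the $\ell_1$ message constraint this way. The fix is not to work harder on the bound but to change the game: reweight $\pi$ by $|u^tAv|$ as above, and the classical estimate becomes a two-page Khintchine/H\"older computation.
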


Note that the optimality follows from Proposition \ref{prop optimality}. So we must prove the existence of such a family of games. For the sake of simplicity we will remove the dependence on $n$ and we will just write $G$. 

In our game, the set of inputs for Alice and Bob are respectively $\tilde{\setx}=\setx\times \setz =\{-1,1\}^{n}\times \{-1,1\}^{n}=\{-1,1\}^{2n} $ and $\sety=\{-1,1\}^{n^2}$. In the following, we will often write $\tilde{x}=(x,z)\in \tilde{\setx}=\setx\times \setz$. To define the probability distribution on the the set of inputs, first we define the number 
\begin{align}\label{Def M}
M=\sum_{\tilde{x}\in \tilde{\setx}, y\in \sety}\left |\sum_{i,j=1}^n x_iz_jy_{i,j}\right|.
\end{align}

Then, the probability distribution $\pi$ on the set of inputs and the function $f$ will be given, for every $\tilde{x}\in \tilde{\setx}$ and $y\in \sety$, by
\begin{align}\label{pi-f Game}
\pi(\tilde{x},y)=\frac{1}{M}\left|\sum_{i,j=1}^nx_iz_jy_{i,j}\right|, \text{    }\text{    }\text{  and   }\text{    }\text{    } f(\tilde{x},y)=\sign\left(\sum_{i,j=1}^nx_iz_jy_{i,j}\right).
\end{align}

In particular, the coefficients of our game are
\begin{align}\label{coefficients of G}
T_{\tilde{x},y}=\pi(\tilde{x},y)f(\tilde{x},y)=\frac{1}{M}\sum_{i,j=1}^nx_iz_jy_{i,j}
\end{align}for every $\tilde{x}=(x,z)\in  \{-1,1\}^{n}\times \{-1,1\}^{n}$ and $y\in \{-1,1\}^{n^2}$.

Given any natural number $n$ and $1\leq i\leq n$, we denote the\emph{ Rademacher function} $r_i:\{-1,1\}^n\rightarrow \{-1,1\}$ by $$r_i(w)=w(i) \text{        }\text{   for any     }\text{        }w\in \{-1,1\}^n.$$

Let us also denote, for any $1\leq p<\infty$, by $\ell_p^{2^n}$ the space of all functions $f:\{-1,1\}^n\rightarrow \R$ with the norm
\begin{align}\label{Def p-norm}
\|f\|_p=\Big(\sum_{w\in \{-1,1\}^n} |f(w)|^p\Big)^\frac{1}{p}.
\end{align}

It is well known that, with the previous notation, we have the duality relation
\begin{align}\label{duality relation}
\|f\|_p=\sup \{|\langle f, g\rangle|: \|g\|_{p'}\leq 1\}
\end{align} 
for every $1<p, p'<\infty$   such that    $\frac{1}{p}+\frac{1}{p'}=1$, where we denote $$\langle f, g\rangle=\sum_{w\in \{-1,1\}^n}f(w)g(w).$$

A key point in our analysis is the Khintchine inequality, that we state here. In fact, we will also use the double Khintchine inequality. The proof of these results can be found in \cite[pag. 96]{DF} and \cite[pag. 455]{DF} respectively.

\begin{theorem}\label{Khintchine ineq}
For $1\leq p <\infty$ there exist constants $a_p, \text{}b_p\geq 1$ such that
\begin{align}\label{standard Khintchine ineq}
a_p^{-1}\left(\sum_{i=1}^n |\alpha_i|^2\right)^\frac{1}{2}\leq \left(\sum_{w\in \{-1,1\}^n}\frac{1}{2^n}\Big|\sum_{i=1}^n \alpha_ir_i(w)\Big|^p\right)^{\frac{1}{p}} \leq b_p\left(\sum_{i=1}^n |\alpha_i|^2\right)^\frac{1}{2}
\end{align}for every $n$ and all $\alpha_1, \cdots, \alpha_n \in \mathbb C$.

Moreover, 
\begin{align}\label{double Khintchine ineq}
a_p^{-2}\left(\sum_{i,j=1}^n |\alpha_{i,j}|^2\right)^\frac{1}{2}\leq \left(\sum_{w,w'\in \{-1,1\}^n}\frac{1}{2^{2n}}\Big|\sum_{i,j=1}^n \alpha_{i,j}r_i(w)r_j(w')\Big|^p\right)^{\frac{1}{p}} \leq b_p^2\left(\sum_{i,j=1}^n |\alpha_{i,j}|^2\right)^\frac{1}{2}
\end{align}for every $n$ and all $\alpha_{1,1}, \alpha_{1,2}, \cdots, \alpha_{n,n} \in \mathbb C$.
\end{theorem}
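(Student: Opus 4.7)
The statement is the classical Khintchine inequality (single and double versions), so my plan is to sketch the standard route and then bootstrap to the double version by Fubini. The case $p=2$ is the starting point: orthonormality of the Rademacher system $\{r_i\}$ in $L^2(\{-1,1\}^n,2^{-n}\#)$ gives $\bigl\|\sum_i\alpha_i r_i\bigr\|_2^2=\sum_i|\alpha_i|^2$, so $a_2=b_2=1$. All other cases will be deduced from $p=2$ plus a one-sided estimate on one side of $2$ together with Hölder-interpolation on the other side.

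For $p\ge 2$, the plan is to establish the upper bound $\bigl\|\sum\alpha_i r_i\bigr\|_p\le b_p\bigl(\sum|\alpha_i|^2\bigr)^{1/2}$ via the sub-Gaussian moment method: since each $r_i$ is centered and bounded by $1$, one has the moment generating function bound $\expe\bigl[e^{t\sum_i\alpha_i r_i}\bigr]=\prod_i\cosh(t\alpha_i)\le e^{t^2\sum_i|\alpha_i|^2/2}$. A Chernoff estimate then yields Gaussian tails $\mathbb P\bigl(|\sum\alpha_i r_i|>u\sigma\bigr)\le 2e^{-u^2/2}$, where $\sigma=(\sum|\alpha_i|^2)^{1/2}$, and integrating $p$-th moments against this tail gives the constant $b_p=O(\sqrt p)$. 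The lower bound for $p\ge 2$ is trivial from Jensen: $\|\cdot\|_p\ge \|\cdot\|_2$, so $a_p\le 1$ suffices. For complex coefficients one reduces to the real case by splitting into real and imaginary parts and absorbing a factor of $2$ into the constants.

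For $1\le p<2$ the upper bound is again immediate from $\|\cdot\|_p\le\|\cdot\|_2$. The interesting direction is the lower bound, which I would obtain by log-convexity of $L^p$-norms: choose $q>2$ (e.g.\ $q=4$) and $0<\theta\le 1$ with $\tfrac12=\tfrac{\theta}{p}+\tfrac{1-\theta}{q}$. Setting $S=\sum_i\alpha_i r_i$ and writing $\|S\|_2\le \|S\|_p^{\theta}\|S\|_q^{1-\theta}$, one plugs in $\|S\|_2=\sigma$ and the already-proved upper bound $\|S\|_q\le b_q\sigma$ to get $\sigma\le \|S\|_p^{\theta}(b_q\sigma)^{1-\theta}$, which rearranges to $\sigma\le b_q^{(1-\theta)/\theta}\|S\|_p$; this gives the required $a_p$. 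The main technical point in the whole argument is the sub-Gaussian moment bound for $p\ge 2$; everything else is soft.

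For the double inequality, I would use Fubini together with the single inequality applied twice. Writing $T(w,w')=\sum_{i,j}\alpha_{i,j}r_i(w)r_j(w')$ and fixing $w'$, the inner function $w\mapsto \sum_i\bigl(\sum_j\alpha_{i,j}r_j(w')\bigr)r_i(w)$ is a Rademacher sum with coefficients $\beta_i(w'):=\sum_j\alpha_{i,j}r_j(w')$. Applying \eqref{standard Khintchine ineq} in the $w$ variable yields
\[
a_p^{-1}\Bigl(\sum_i|\beta_i(w')|^2\Bigr)^{1/2}\le \Bigl(2^{-n}\sum_{w}|T(w,w')|^p\Bigr)^{1/p}\le b_p\Bigl(\sum_i|\beta_i(w')|^2\Bigr)^{1/2}.
\]
Raising to the $p$-th power, averaging over $w'$, and applying \eqref{standard Khintchine ineq} a second time to each sum $\sum_j\alpha_{i,j}r_j(w')$ (after pulling the $p$-th power inside via Minkowski in $L^{2/p}$ or $L^{p/2}$ as appropriate, depending on whether $p\ge 2$ or $p\le 2$) produces the squared constants $a_p^{-2}$ and $b_p^2$ and the target quantity $\bigl(\sum_{i,j}|\alpha_{i,j}|^2\bigr)^{1/2}$. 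The only subtle bookkeeping step is handling the order of the $L^p$ and $\ell^2$ norms on the coefficient side, which is resolved by Minkowski's integral inequality in the direction dictated by whether $p\ge 2$ or $p\le 2$; since we lose only multiplicative factors of the single-Khintchine constants, we recover exactly the claimed $a_p^{-2}$ and $b_p^2$.
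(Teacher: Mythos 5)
The paper does not supply a proof of this theorem at all: it is a standard result and the text just points to Defant--Floret (\cite[p.~96]{DF} for the single inequality and \cite[p.~455]{DF} for the double), so there is no "paper proof" to compare against. Your self-contained argument is the standard one and is correct in substance: $p=2$ by orthonormality; $p\ge 2$ upper bound by the sub-Gaussian MGF estimate $\expe e^{tS}\le e^{t^2\sigma^2/2}$, Chernoff, and integration of the tail (yielding $b_p=O(\sqrt p)$); the $p\ge 2$ lower and $p\le 2$ upper bounds are monotonicity of $L^p$-norms; and the $p<2$ lower bound by Littlewood's log-convexity interpolation against the already-proved $q=4$ (or any $q>2$) upper bound. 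The reduction of the complex case to the real case by real/imaginary splitting is fine, though it costs a factor $\sqrt 2$ rather than $2$.

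One small imprecision worth fixing in the double-Khintchine step. After applying the single inequality in $w$ and averaging over $w'$, the resulting intermediate quantity is $\bigl\|(\beta_i)_i\bigr\|_{L^p_{w'}(\ell^2_i)}$ with $\beta_i(w')=\sum_j\alpha_{i,j}r_j(w')$. Minkowski's integral inequality gives $\|\cdot\|_{L^p(\ell^2)}\le\|\cdot\|_{\ell^2(L^p)}$ only when $p\ge 2$, and $\|\cdot\|_{\ell^2(L^p)}\le\|\cdot\|_{L^p(\ell^2)}$ only when $p\le 2$; so the Fubini-plus-Minkowski route proves precisely the two nontrivial directions of the double inequality (upper bound for $p\ge 2$, lower bound for $p\le 2$), and it does \emph{not} close the other two by itself. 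Those other two directions, however, are immediate from $L^p$-norm monotonicity together with orthonormality of $\{r_i(w)r_j(w')\}$ in $L^2(\{-1,1\}^{2n})$: for $p<2$, $\|T\|_p\le\|T\|_2=(\sum_{i,j}|\alpha_{i,j}|^2)^{1/2}$, and dually for $p\ge 2$. With that clarification the squared constants $a_p^{-2}$ and $b_p^2$ come out exactly, as you claim.
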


Although we will not need to know the value of the constant $M$ defined above, it is not difficult to compute its value up to a constant.
\begin{lemma}\label{control of M}
Given $M$ defined as in (\ref{Def M}), then
\end{lemma}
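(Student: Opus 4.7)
My plan is to show that $M = \Theta(n\, 2^{n^2+2n})$, which is the natural estimate coming from Khintchine's inequality applied in the Rademacher variables $x$, $z$, and $y$.

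First, I would normalize: since $|\tilde{\setx}| = 2^{2n}$ and $|\sety| = 2^{n^2}$, one may write
\begin{equation*}
M \;=\; 2^{n^2+2n}\;\mathbb{E}_{x,z,y}\left|\sum_{i,j=1}^n x_i z_j y_{i,j}\right|,
\end{equation*}
where the expectation is over the uniform distribution on $\{-1,1\}^n\times\{-1,1\}^n\times\{-1,1\}^{n^2}$.

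Next, I would fix $y \in \{-1,1\}^{n^2}$ and view $\sum_{i,j} x_i z_j y_{i,j}$ as $\sum_{i,j} y_{i,j}\, r_i(x)\, r_j(z)$, so that the double Khintchine inequality (Theorem \ref{Khintchine ineq} with $p=1$ and coefficients $\alpha_{i,j}=y_{i,j}$) yields
\begin{equation*}
a_1^{-2}\,\Big(\sum_{i,j=1}^n y_{i,j}^2\Big)^{1/2} \;\le\; \mathbb{E}_{x,z}\left|\sum_{i,j=1}^n x_i z_j y_{i,j}\right| \;\le\; b_1^2\,\Big(\sum_{i,j=1}^n y_{i,j}^2\Big)^{1/2}.
\end{equation*}
Because $y_{i,j}^2 = 1$ for all $i,j$, the two bracketed sums both equal $n^2$, so their square root equals $n$. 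This bound holds for every $y$, independently of $y$.

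Finally, averaging over $y$ preserves the same upper and lower bound, so
\begin{equation*}
a_1^{-2}\, n \;\le\; \mathbb{E}_{x,z,y}\left|\sum_{i,j=1}^n x_i z_j y_{i,j}\right| \;\le\; b_1^2\, n,
\end{equation*}
which upon multiplying by $2^{n^2+2n}$ gives $a_1^{-2}\, n\, 2^{n^2+2n} \le M \le b_1^2\, n\, 2^{n^2+2n}$, as desired. There is no genuine obstacle here; the only ``subtlety'' is noticing that Khintchine at $p=1$ is the right tool to handle the absolute value and that the $y$-dependence cancels because $y_{i,j}\in\{\pm 1\}$, so that one universal estimate applies uniformly to every realization of $y$.
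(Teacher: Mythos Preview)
Your argument is correct and follows essentially the same idea as the paper: apply Khintchine at $p=1$ and use that all coefficients lie in $\{-1,1\}$ so that the $\ell_2$-sum equals $n$. The only difference is the direction in which you condition. The paper fixes $(x,z)$ and applies the \emph{single} Khintchine inequality in the variable $y$ (with coefficients $\alpha_{i,j}=x_iz_j$), obtaining
\[
\frac{1}{\sqrt{2}}\,n \le \mathbb{E}_y\Big|\sum_{i,j} x_iz_jy_{i,j}\Big| \le n
\]
and then sums over $(x,z)$. You instead fix $y$ and apply the \emph{double} Khintchine inequality in $(x,z)$ (with coefficients $\alpha_{i,j}=y_{i,j}$), which yields the slightly weaker lower constant $a_1^{-2}=\tfrac12$ in place of $a_1^{-1}=\tfrac{1}{\sqrt 2}$. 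Since the paper explicitly notes that the precise value of $M$ is never needed, your $\Theta(n\,2^{n^2+2n})$ conclusion is entirely adequate; the paper's route just avoids the iterated constant and gives the exact bounds stated in \eqref{M value}.
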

\begin{align}\label{M value}
\frac{1}{\sqrt{2}}n2^{n^2+2n}\leq M\leq n2^{n^2+2n}.
\end{align}
\begin{proof}
For a fixed $(x,z)\in \{-1,1\}^{n}\times \{-1,1\}^{n}$, we have 
\begin{align*}
\sum_{y\in \{-1,1\}^{n^2}}\left|\sum_{i,j=1}^n x_iz_jy_{i,j}\right|=2^{n^2}\sum_{y\in \{-1,1\}^{n^2}}\frac{1}{2^{n^2}}\left|\sum_{i,j=1}^n x_iz_jy_{i,j}\right|. 
\end{align*}

Now, since it is known that $a_1=\sqrt{2}$ and $b_1=1$ (\cite[pag. 96]{DF}), by recalling that $y_{i,j}=r_{i,j}(y)$ for every $y\in \{-1,1\}^{n^2}$, (\ref{Khintchine ineq}) states that $$\frac{1}{\sqrt{2}}\Big(\sum_{i,j=1}^n| x_iz_j|^2\Big)^{\frac{1}{2}}\leq \sum_{y\in \{-1,1\}^{n^2}}\frac{1}{2^{n^2}}\left|\sum_{i,j=1}^n x_iz_jy_{i,j}\right|\leq \left(\sum_{i,j=1}^n| x_iz_j|^2\right)^{\frac{1}{2}}.$$ Using that $\left(\sum_{i,j=1}^n| x_iz_j|^2\right)^{\frac{1}{2}}=n$ for every $(x,z)\in \{-1,1\}^{n}\times \{-1,1\}^{n}$, the estimate (\ref{M value}) follows easily.
\end{proof}

The following lemma, which will be crucial in the analysis of $\omega_{o.w-\log n}(G)$ for our game, is the transposed version of the double Khintchine inequality (\ref{double Khintchine ineq}).
\begin{lemma}\label{l:transpose Khintchine} 
Let $1<p<\infty$ and let $p'$ be such that $\frac{1}{p}+\frac{1}{p'}=1$. 
For every finite sequence of numbers $(\alpha(x,z))_{(x,z)\in \{-1, 1\}^n\times  \{-1,1\}^n}$,
\begin{align*}
\left(\sum_{i,j=1}^n \left( \sum_{(x,z)} r_i(x) r_j(z) \alpha(x,z)\right)^2\right)^\frac{1}{2}\leq b_{p'}^2  \left(2^{2n}\right)^\frac{1}{p'} \left(\sum_{(x,z)} |\alpha(x,z)|^p\right)^\frac{1}{p}.
\end{align*}Here, the sums in $(x,z)$ are over $\{-1,1\}^n\times  \{-1,1\}^n$ and $b_{p'}$ is the constant appearing in Theorem \ref{Khintchine ineq} for $p'$.
\end{lemma}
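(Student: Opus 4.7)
The plan is to prove this by duality. The left-hand side is an $\ell_2$-norm of the doubly-indexed sequence $\beta(i,j):=\sum_{(x,z)} r_i(x)r_j(z)\alpha(x,z)$, so by self-duality of $\ell_2$ it equals the supremum of $|\sum_{i,j}\gamma(i,j)\beta(i,j)|$ over scalars $(\gamma(i,j))_{i,j}$ with $\sum_{i,j}|\gamma(i,j)|^2\le 1$. First I would swap the order of summation to rewrite this pairing as
\begin{equation*}
\sum_{i,j}\gamma(i,j)\beta(i,j)\;=\;\sum_{(x,z)}\alpha(x,z)\,F(x,z),\qquad F(x,z):=\sum_{i,j}\gamma(i,j)r_i(x)r_j(z).
\end{equation*}

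Next I would apply H\"older's inequality on the index set $\{-1,1\}^n\times\{-1,1\}^n$ with conjugate exponents $p$ and $p'$, which yields
\begin{equation*}
\Big|\sum_{(x,z)}\alpha(x,z)F(x,z)\Big|\;\le\;\Big(\sum_{(x,z)}|\alpha(x,z)|^p\Big)^{1/p}\Big(\sum_{(x,z)}|F(x,z)|^{p'}\Big)^{1/{p'}}.
\end{equation*}
The key step is then to estimate the $\ell_{p'}$-norm of $F$. Pulling out the normalization factor $(2^{2n})^{1/p'}$ turns it into the averaged norm that appears in the double Khintchine inequality~(\ref{double Khintchine ineq}), applied with exponent $p'$ to the coefficients $(\gamma(i,j))_{i,j}$. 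This gives
\begin{equation*}
\Big(\sum_{(x,z)}|F(x,z)|^{p'}\Big)^{1/p'}\;\le\;(2^{2n})^{1/p'}\,b_{p'}^{2}\Big(\sum_{i,j}|\gamma(i,j)|^2\Big)^{1/2}\;\le\;(2^{2n})^{1/p'}\,b_{p'}^{2},
\end{equation*}
using the constraint on $\gamma$. Combining the two bounds and taking the supremum over admissible $\gamma$ yields exactly the claimed inequality.

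The argument is essentially routine: the only real content is the recognition that the transposed inequality follows from the original one by $\ell_2$-duality plus H\"older, and that the factor $(2^{2n})^{1/p'}$ is precisely the cost of passing between the normalized average used in Theorem~\ref{Khintchine ineq} and the unnormalized sum appearing in the statement. No obstacle is expected; one should only be careful to track the normalization consistently and to verify that the constant $b_{p'}$ (and not $b_p$) is the correct one, since Khintchine is applied on the $F$ side where the exponent is $p'$.
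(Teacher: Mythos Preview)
Your proposal is correct and follows essentially the same route as the paper: both use $\ell_2$-duality to linearize the left-hand side, swap the sums, apply H\"older with exponents $(p,p')$ on the $(x,z)$-index set, and then invoke the double Khintchine inequality~(\ref{double Khintchine ineq}) at exponent $p'$ to bound the dual factor, picking up exactly $b_{p'}^2(2^{2n})^{1/p'}$. The only cosmetic difference is notation (the paper calls your $\gamma$ by $\beta$ and writes the duality as $\sup_\beta \|\beta\|_2^{-1}\langle\varphi,\beta\rangle$).
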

\begin{proof}
Let us consider $\varphi=\left ( \sum_{(x,z)} r_i(x) r_j(z) \alpha(x,z)\right)_{i,j=1}^n $ as an element of $\mathbb C^{n^2}$. Then, we have   
\begin{align*}\left(\sum_{i,j=1}^n \Big( \sum_{(x,z)} r_i(x) r_j(z) \alpha(x,z)\Big)^2\right)^\frac{1}{2}=\sup_{\beta\in \mathbb C^{n^2}} \frac{1}{\|\beta\|_2} \langle \varphi | \beta\rangle =\sup_{\beta\in \mathbb C^{n^2}} 
\frac{1}{\|\beta\|_2} \sum_{i,j=1}^n \sum_{x,z} r_i(x) r_j(z) \alpha(x,z) \beta_{i,j}. 
\end{align*}

We can view now $\left(\alpha(x,z)\right)_{x,z}$ and $\left(\sum_{i,j=1}^n  r_i(x) r_j(z) \beta_{i,j}\right)_{x,z}$ as elements in $\ell_p^{2^{2n}}$ and $\ell_{p'}^{2^{2n}}$ respectively and, by duality (\ref{duality relation}),
\begin{align*}\sup_{\beta\in \C^{n^2}} 
\frac{1}{\|\beta\|_2} \sum_{i,j=1}^n \sum_{x,z} r_i(x) r_j(z) \alpha(x,z) \beta_{i,j} &\leq  \sup_{\beta\in \C^{n^2}} 
\frac{1}{\|\beta\|_2} \Big\| \Big(\sum_{i,j=1}^n  r_i(x) r_j(z) \beta_{i,j}\Big)_{x,z} \Big\|_{p'} \Big\| \Big(\alpha(x,z)\Big)_{x,z}\Big\|_{p}\\&\leq 
 b_{p'}^2 (2^{2n})^\frac{1}{p'}  \left(\sum_{x,z} |\alpha(x,z)|^p\right)^\frac{1}{p},
 \end{align*}where the  last inequality follows from (\ref{double Khintchine ineq}).
 
This finishes the proof.
\end{proof}

For the proof of Theorem \ref{Thm:MainIII} we will use that, as a consequence of Holder's inequality, for every sequence of real numbers $(\alpha_i)_{i=1}^d$,
\begin{align}\label{1-p'}
\sum_{i=1}^d |\alpha_i|\leq d^{\frac{1}{p'}} \left(\sum_{i=1}^d |\alpha_i|^p\right)^\frac{1}{p} \text{   }\text{  if  }\text{   } 1<p<\infty.
\end{align}

The proof of Theorem \ref{Thm:MainIII} will trivially follow from Proposition \ref{upper bound XOR game} and Proposition \ref{lower bound XOR game} below, where we provide upper and lower bounds for $\omega_{o.w.-\log n}(G)$ and $\omega^*_{o.w.-\log n}(G)$ respectively.

\begin{prop}\label{upper bound XOR game}
Let $G$ be the XOR game defined via (\ref{pi-f Game}). Then, for a given $k\geq e^2$ we have 
\begin{align*}
\omega_{o.w.-\log k}(G)\leq \frac{2\sqrt{2}e^2}{n}\ln k.
\end{align*}
\end{prop}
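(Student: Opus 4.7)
The plan is to apply the characterization of $\omega_{o.w.-\log k}(G)$ from Lemma \ref{Lemma: classical} and then reduce the resulting double sum to two quantities that can be controlled by Parseval's identity and the (transposed) Khintchine inequality. Substituting the coefficients $T_{\tilde x,y}=M^{-1}\sum_{i,j}x_iz_jy_{i,j}$ into the expression of Lemma \ref{Lemma: classical}, interchanging the sums and applying Cauchy--Schwarz in the indices $(i,j)$ yields
\begin{equation*}
M\,\omega_{o.w.-\log k}(G)\leq \sup \sum_{m=1}^{k}\|A_m\|_2\,\|B_m\|_2,
\end{equation*}
where $A_m(i,j):=\sum_{\tilde x} a(\tilde x,m)\,x_iz_j$ and $B_m(i,j):=\sum_{y} b(y,m)\,y_{i,j}$, and $\|\cdot\|_2$ is the Euclidean norm on $\mathbb{R}^{n^2}$. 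The two families of vectors will be estimated asymmetrically, which turns out to be the decisive point in getting the right logarithmic dependence.

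For the $B_m$ side, I would observe that $B_m(i,j)/2^{n^2}$ is exactly the Fourier coefficient of $b(\cdot,m)\colon\{-1,1\}^{n^2}\to[-1,1]$ at the singleton character $r_{i,j}$. Since $|b(y,m)|\le 1$, Parseval's identity gives $\sum_{i,j}B_m(i,j)^2\le 2^{2n^2}$, hence
\begin{equation*}
\|B_m\|_2\le 2^{n^2}\quad\text{uniformly in }m.
\end{equation*}
For the $A_m$ side, the transposed double Khintchine inequality (Lemma \ref{l:transpose Khintchine}) gives, for any $1<p<\infty$ with conjugate exponent $p'$,
\begin{equation*}
\|A_m\|_2\le b_{p'}^{\,2}\,(2^{2n})^{1/p'}\Big(\sum_{\tilde x}|a(\tilde x,m)|^p\Big)^{1/p}.
\end{equation*}
To aggregate these bounds over $m$, I would exploit the constraint $\sum_m|a(\tilde x,m)|\le 1$ twice: first it forces $|a(\tilde x,m)|\le 1$, so $|a(\tilde x,m)|^p\le |a(\tilde x,m)|$ for $p\ge 1$, giving $\sum_m\sum_{\tilde x}|a(\tilde x,m)|^p\le 2^{2n}$; then H\"older's inequality in $m$ yields
\begin{equation*}
\sum_{m=1}^k\Big(\sum_{\tilde x}|a(\tilde x,m)|^p\Big)^{1/p}\le k^{1/p'}\,(2^{2n})^{1/p}.
\end{equation*}

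Putting the pieces together gives $M\,\omega_{o.w.-\log k}(G)\le b_{p'}^{\,2}\,k^{1/p'}\,2^{n^2+2n}$, and invoking the lower estimate $M\ge \tfrac{1}{\sqrt{2}}\,n\,2^{n^2+2n}$ from Lemma \ref{control of M} reduces the problem to the scalar inequality
\begin{equation*}
\omega_{o.w.-\log k}(G)\le \frac{\sqrt{2}\,b_{p'}^{\,2}\,k^{1/p'}}{n}.
\end{equation*}
The last step is to optimize in $p'$. The choice $p'=\ln k$ is legal precisely because the hypothesis $k\ge e^2$ guarantees $p'\ge 2$; using the standard bound $b_{p'}\le \sqrt{p'}$ gives $b_{p'}^{\,2}\le \ln k$ and $k^{1/p'}=e$, which is already enough to fit inside the stated bound $\frac{2\sqrt{2}e^{2}}{n}\ln k$ with room to spare for any slack in the Khintchine constant. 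The only genuinely delicate point in the whole argument is the asymmetric choice of norms: had I applied transposed Khintchine to $B_m$ as well, I would pick up an extra factor of $b_{p'}\sim\sqrt{\ln k}$ and end up with $(\ln k)^{3/2}$ instead of the sharp $\ln k$; it is the use of plain Parseval on the $B_m$ side that avoids this loss.
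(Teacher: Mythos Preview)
Your argument is correct. The route differs from the paper's in how the $y$--variable is handled, and the difference is worth noting.

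The paper first discards $b(y,m)$ via the crude bound $|b(y,m)|\le 1$, then applies H\"older over the joint index $(y,m)$ to pass to an $\ell_p$ norm, and only afterwards uses the (forward) Khintchine inequality in $y$ with constant $b_p$; this produces the factor $b_p b_{p'}^{2}$, which becomes $b_{p'}^{2}$ only because the final choice $p'=\ln k$ forces $p<2$ and hence $b_p=1$. You instead expand $T_{\tilde x,y}$ immediately, apply Cauchy--Schwarz in $(i,j)$, and treat the $y$--side by Parseval; this never introduces $b_p$ at all and yields the same intermediate bound $M\,\omega_{o.w.-\log k}(G)\le b_{p'}^{2}k^{1/p'}2^{n^2+2n}$ directly. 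Your organisation is a little cleaner and makes transparent why the $y$--side costs nothing (an $L^2$ estimate suffices there), whereas in the paper this fact is hidden in the observation $b_p=1$. The remaining steps---the transposed double Khintchine on $A_m$, H\"older in $m$, and the optimisation $p'=\ln k$---are identical in both proofs, and with the paper's constant $b_{p'}\le\sqrt{2e\,p'}$ your bound reproduces exactly $\frac{2\sqrt{2}e^{2}}{n}\ln k$; with the sharper $b_{p'}\le\sqrt{p'}$ you do slightly better.
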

\begin{proof}
Let $(a(\tilde{x},m))_{\tilde{x},m}$ and $(b(y,m))_{y,m}$ be real numbers as in Lemma \ref{Lemma: classical}, where here $\tilde{x}=(x,z)\in \{-1,1\}^{n}\times \{-1,1\}^{n}=\{-1,1\}^{2n} $, $\tilde{y}\in\{-1,1\}^{n^2}$ and $m=1,\cdots, k$. Note that, since $\sum_{m=1}^c |a(\tilde{x},m)|\leq 1$ for every $\tilde{x}$, we can easily conclude that
\begin{align}\label{p-n}
\Big(\sum_{\tilde{x}, m} |a(\tilde{x},m)|^p\Big)^{\frac{1}{p}}\leq \Big(\sum_{\tilde{x}} 1\Big)^{\frac{1}{p}}=(2^{2n})^{\frac{1}{p}}.
\end{align}

Moreover, since  $|b(y,m)|\leq $ for every $y$ and $m$, we have
\begin{align*}
\omega_{o.w.-\log k}(G)&\leq \Big|\sum_{\tilde{x},y} \sum_{m=1}^{k}  T_{\tilde{x},y}a(\tilde{x},m)b(y,m)\Big|\leq \sum_{y,m}\Big|\sum_{\tilde{x}}   T_{\tilde{x},y}a(\tilde{x},m)\Big|\\&\leq \big(2^{n^2}k\big)^{\frac{1}{p'}} \Big(\sum_{y,m}\Big|\sum_{\tilde{x}}   T_{\tilde{x},y}a(\tilde{x},m)\Big|^p\Big)^{\frac{1}{p}},
\end{align*}where in the last inequality we have used (\ref{1-p'}).

Now, by writing the precise value of the coefficients $T_{\tilde{x},y}$ (\ref{coefficients of G}), and using (\ref{standard Khintchine ineq}) on the variable $y$, we have
\begin{align*}
&\sum_{y}\Big|\sum_{\tilde{x}}   T_{\tilde{x},y}a(\tilde{x},m)\Big|^p=2^{n^2}\sum_{y}\frac{1}{2^{n^2}}\Big|\sum_{\tilde{x}}   T_{\tilde{x},y}a(\tilde{x},m)\Big|^p=\frac{2^{n^2}}{M^p}\sum_{y}\frac{1}{2^{n^2}}\Big|\sum_{i,j=1}^n\Big(\sum_{\tilde{x}}   x_iz_jy_{i,j}a(\tilde{x},m)\Big)\Big|^p\\&=\frac{2^{n^2}}{M^p}\sum_{y}\frac{1}{2^{n^2}}\Big|\sum_{i,j=1}^nr_{i,j}(y)\Big(\sum_{(x,z)}   r_i(x)r_j(z)a(x,z,m)\Big)\Big|^p\leq \frac{2^{n^2}}{M^p}b_p^p\Big(\sum_{i,j=1}^n\Big|\sum_{(x,z)}   r_i(x)r_j(z)a(x,z,m)\Big|^2\Big)^{\frac{p}{2}}.
\end{align*}

Therefore, we deduce that 
\begin{align*}
\omega_{o.w.-\log k}(G)&\leq  \frac{1}{M}\big(2^{n^2}k\big)^{\frac{1}{p'}}(2^{n^2})^{\frac{1}{p}}b_p \Big(\sum_{m}\Big[\sum_{i,j=1}^n\Big|\sum_{(x,z)}   r_i(x)r_j(z)a(x,z,m)\Big|^2\Big]^{\frac{p}{2}}\Big)^{\frac{1}{p}}\\&=
 \frac{1}{M}2^{n^2}k^{\frac{1}{p'}}b_p \Big(\sum_{m}\Big[\sum_{i,j=1}^n\Big|\sum_{(x,z)}   r_i(x)r_j(z)a(x,z,m)\Big|^2\Big]^{\frac{p}{2}}\Big)^{\frac{1}{p}}\\&\leq 
 \frac{1}{M}2^{n^2}k^{\frac{1}{p'}}b_p b_{p'}^2\big(2^{2n}\big)^{\frac{1}{p'}}\Big(\sum_{m}\Big(\sum_{(x,z)}   \big|a(x,z,m)\big|^p\Big)^{\frac{1}{p}},
\end{align*}where in the last inequality we have used Lemma \ref{l:transpose Khintchine}. 

Finally, according to (\ref{p-n}) we deduce that 
\begin{align*}
\omega_{o.w.-\log k}(G)&\leq  
 \frac{1}{M}2^{n^2}k^{\frac{1}{p'}}b_p b_{p'}^2\big(2^{2n}\big)^{\frac{1}{p'}}\big(2^{2n}\big)^{\frac{1}{p}}= \frac{1}{M}2^{n^2+2n}k^{\frac{1}{p'}}b_p b_{p'}^2.
\end{align*}

Now, if we consider the particular choice $p'=\ln k$, then $b_p=1$ (since $p< 2$) and $b_{p'}\leq \sqrt{2ep'}=\sqrt{2e\ln k}$ (see \cite[Section 8.5]{DF}). Hence, according to lemma \ref{control of M} we conclude
\begin{align*}
\omega_{o.w.-\log k}(G)&\leq  \frac{2\sqrt{2}e^2}{n}\ln k.
\end{align*}
\end{proof}
\begin{prop}\label{lower bound XOR game}
Let $G$ be the XOR game defined via (\ref{pi-f Game}). Then, 
\begin{align*}
\omega^*_{o.w.-\log n}(G)\geq  \frac{C}{\sqrt{n}},
\end{align*}where here $C$ is a universal constant.
\end{prop}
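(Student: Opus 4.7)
The plan is to exhibit an explicit quantum strategy using $\log n$ qubits of one-way communication that achieves value $\Omega(1/\sqrt{n})$. By Lemma \ref{l:quantumvalue} it suffices to choose self-adjoint $R_{x,z} \in M_n$ with $\|R_{x,z}\|_{S_1^n} \leq 1$ and self-adjoint $B_y \in M_n$ with $\|B_y\|_{M_n} \leq 1$ whose associated bias is of order $1/\sqrt n$. Viewing $x, z \in \{-1,1\}^n$ as column vectors in $\mathbb{R}^n$, I would take
\[
R_{x,z} = \frac{1}{2n}\bigl(xz^T + zx^T\bigr), \qquad B_y = \frac{y + y^T}{2\|y\|_{M_n}}.
\]
A direct eigenvalue computation (using $x^T x = z^T z = n$) shows the two non-zero eigenvalues of $R_{x,z}$ are $\tfrac12(\pm 1 + \langle x,z\rangle/n) \in [-1,1]$, so $\|R_{x,z}\|_{S_1^n} = 1$; the triangle inequality gives $\|B_y\|_{M_n} \leq 1$.

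Next, a routine trace computation using cyclicity yields
\[
\tr(R_{x,z} B_y) = \frac{A + A'}{2n\,\|y\|_{M_n}}, \qquad A := \sum_{i,j} x_i z_j y_{ij}, \quad A' := \sum_{i,j} x_i z_j y_{ji}.
\]
Plugging $T_{(x,z),y} = A/M$ into the pairing and summing over $(x,z)$, orthogonality of the Rademacher functions in $x$ and in $z$ gives $\sum_{x,z} A^2 = 2^{2n} n^2$ and $\sum_{x,z} A A' = 2^{2n} \sum_{i,j} y_{ij} y_{ji}$. The elementary identity $n^2 + \sum_{i,j} y_{ij} y_{ji} = 2 \|y^s\|_F^2$, where $y^s := (y + y^T)/2$, then gives
\[
\omega^*_{o.w.-\log n}(G) \;\geq\; \frac{2^{2n}}{n M}\sum_{y \in \{-1,1\}^{n^2}} \frac{\|y^s\|_F^2}{\|y\|_{M_n}}.
\]

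It remains to show that $\sum_y \|y^s\|_F^2/\|y\|_{M_n} \gtrsim n^{3/2}\, 2^{n^2}$, which combined with $M \leq n\, 2^{n^2 + 2n}$ from Lemma \ref{control of M} yields the desired $1/\sqrt n$ lower bound. Because $\|y^s\|_F^2 = n + 2\,|\{i<j : y_{ij} = y_{ji}\}|$ has the law of $n + 2\cdot\mathrm{Bin}(\binom{n}{2}, \tfrac12)$, a Chernoff bound shows $\|y^s\|_F^2 \geq n^2/4$ on a set of density $\geq 1/2$. By a standard random matrix estimate (for instance via the non-commutative Khintchine inequality), $\mathbb{E}_y\|y\|_{M_n} \leq C_1 \sqrt n$, whence by Markov $\|y\|_{M_n} \leq 4C_1\sqrt n$ on a set of density $\geq 3/4$. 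Intersecting these two events leaves $\geq 2^{n^2}/4$ matrices where the summand is $\geq n^{3/2}/(16 C_1)$, delivering the required bound. The main obstacle is precisely this averaging step: $\|y^s\|_F^2$ and $\|y\|_{M_n}^{-1}$ are correlated random quantities that cannot be factored apart, but both concentrate sharply about their typical scales $\Theta(n^2)$ and $\Theta(1/\sqrt n)$, so a union bound on two constant-probability events is enough.
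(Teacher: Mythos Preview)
Your proof is correct and follows essentially the same strategy as the paper: take rank-one type operators built from $x,z\in\{-1,1\}^n$ on Alice's side and the normalized sign matrix $y/\|y\|_{M_n}$ on Bob's side, then exploit Rademacher orthogonality in $(x,z)$ together with the random-matrix bound $\mathbb{E}\|y\|_{M_n}\le C'\sqrt{n}$.

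The paper's execution is slightly cleaner in two places. First, it works with the non-self-adjoint choices $\rho_{x,z}=|\varphi_x\rangle\langle\varphi_z|$ and $A_y/\|A_y\|$ and only symmetrizes at the very end (losing a factor $4$); this avoids your cross term $AA'$ entirely, so no $\|y^s\|_F^2$ and no binomial concentration are needed---the $(x,z)$-sum collapses directly to $2^{2n}n^2$. Second, for the $y$-average the paper uses Jensen, $\mathbb{E}_y[1/\|y\|_{M_n}]\ge 1/\mathbb{E}_y\|y\|_{M_n}$, in place of your Markov-plus-Chernoff union bound. Both shortcuts are available to you as well and would streamline your argument, but what you wrote is fine as it stands.
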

\begin{proof}
For every $\tilde{x}=(x,z)\in \{-1,1\}^{2n}$ we define the $n$-dimensional states:
\begin{align*}
|\varphi_{x}\rangle=\frac{1}{\sqrt{n}} \sum_{i=1}^n x_i |i\rangle  \text{     }\text{    and   }\text{     }|\varphi_z\rangle=\frac{1}{\sqrt{n}} \sum_{j=1}^n z_j |j\rangle.
\end{align*} Then, we consider the operator $\rho_{\tilde{x}}=|\varphi_x\rangle \langle \varphi_z|$ with trace norm equal one, which is, in general, not self-adjoint.

Also, for every $y\in \{-1,1\}^{n^2}$ we consider the  (non self-adjoint) operator $A_y$ in $M_n$, whose matrix in the canonical basis is $(y_{i,j})_{i,j=1}^n=(r_{i,j}(y))_{i,j=1}^n$ and then its normalized version $$\tilde{A}_y=\frac{A_y}{\|A_y\|_{M_n}}.$$

According to Lemma \ref{l:quantumvalue}, up to the fact that neither the $\rho_{\tilde{x}}$'s nor the $\tilde{A}_y$'s are self-adjoint elements,  it suffices to lower bound the quantity
\begin{align*}
\Big|\sum_{\tilde{x}\in \setx,y\in \sety}T_{\tilde{x},y}tr\big(\tilde{A}_y  \rho_{\tilde{x}}\big)\Big|.
\end{align*}

To this end, write
\begin{align*}
\Big|\sum_{\tilde{x}\in \tilde{\setx},y\in \sety}T_{\tilde{x},y}tr\big(\tilde{A}_y  \rho_{\tilde{x}}\big)\Big|=\sum_{(x,z),y} T_{(x,z),y} tr(|\varphi_x\rangle \langle \varphi_z| \tilde{A}_y)=\frac{1}{M}\frac{1}{n}\sum_{(x,z),y} \sum_{i,j} x_i z_j y_{i,j} \frac{ \sum_{i',j'} x_{i'} y_{i',j'} z_{j'}}{\|A_y\|_{M_n}}. 
\end{align*}

Using the orthogonality properties of the Rademacher functions we see that $\sum_{x\in \{-1,1\}^{n}} x_i x_{i'}=\delta_{i,i'}2^n$ and  $\sum_{z\in \{-1, 1\}^{n}} z_j z_{j'}=\delta_{j,j'}2^n$. Since $y_{i,j}^2=r_{i,j}(y)^2=1$ for every $y$ and $i,j$, we deduce that %
 \begin{align*} 
 \Big|\sum_{\tilde{x}\in \tilde{\setx},y\in \sety}T_{\tilde{x},y}tr\big(\tilde{A}_y  \rho_{\tilde{x}}\big)\Big|&
 =\frac{2^{2n}}{Mn}n^2\sum_{y}  \frac{1}{\|A_y\|_{M_n}} \geq C\frac{2^{2n}}{M}n2^{n^2}\frac{1}{\sqrt {n}}=C\frac{2^{2n+n^2}}{M}\sqrt{n}.
 \end{align*}In the last inequality we have used Jensen's inequality to conclude that $$\sum_{y\in \{-1,1\}^{n^2}} \frac{1}{\|A_y\|_{M_n}}= 2^{n^2}\sum_{y\in \{-1,1\}^{n^2}} \frac{1}{2^{n^2}}\frac{1}{\|A_y\|_{M_n}}=2^{n^2}\mathbb E\frac{1}{\|A_y\|_{M_n}}\geq 2^{n^2}\frac{1}{\mathbb E\|A_y\|_{M_n}},$$together with the well known theorem (\cite[Theorem 2]{Latala}) which states that 
 \begin{align}\label{chevet}
 \mathbb E\|A_y\|_{M_n}\leq C' \sqrt{n}
 \end{align} for a certain universal constant $C'$.
 
 Since  by Lemma \ref{control of M}, $M\leq n2^{n^2+2n}$, we conclude that 
 \begin{align}\label{final inequality} 
 \omega^*_{o.w.-\log n}(G)\geq \Big|\sum_{\tilde{x}\in \tilde{\setx},y\in \sety}T_{\tilde{x},y}tr\big(A_y  \rho_{\tilde{x}}\big)\Big|\geq \frac{C}{\sqrt{n}}.
 \end{align}
 
 Finally, by splitting each $\rho_{\tilde{x}}$ and each $\tilde{A}_y$ in their real and imaginary part (which are both self-adjoint) we easily conclude that Eq. (\ref{final inequality}) holds for some families of self-adjoint operators $(\rho_{\tilde{x}})_{\tilde{x}}$ and $(A_y)_{y}$ verifying $\|\rho_{\tilde{x}}\|_{S_1^n}\leq 1$ and $\|A_y\|_{M_n}\leq 1$ for every $\tilde{x}$, $y$ at the price of replacing $C$ by $C/4$.
\end{proof}
\section{Communication values of a XOR game as  tensor norms}\label{S:tensor norms}

In this section we explain how the different values of XOR games that we consider in this work can be understood in terms of tensor norms on Banach spaces. The use of tensor norms has been very useful in the context of Bell inequalities violations (see \cite[Section 2]{PV-Survey} for more information about this) and in the context of communication complexity (see for instance \cite{LMSS}, \cite{LS}). In section \ref{S:atom reduction} we will show how the abstract point of view of tensor norms allows us to apply a method to reduce the number of inputs in nonlocal games developed in \cite{JOP16} to prove Theorem \ref{Thm:MainIV}.

Let us start by introducing some standard notation in Banach space theory. 

Given a normed space $X$, let us denote by $B_X=\{x\in X: \|x\|\leq 1\}$ its unit ball and by $X^*$ its dual space. This space consists of all linear and continuous maps from $X$ to the scalar field $\mathbb K$ ($\mathbb R$ or $\mathbb C$) and has a natural norm defined as $\|x^*\|_{X^*}=\sup_{x\in B_X}|\langle x^*, x\rangle|$.

Given two finite dimensional normed spaces $X$, $Y$, the space of linear maps from $X$ to $Y$, $L(X,Y)$, is naturally identified with the algebraic tensor product $X^*\otimes Y$. The correspondence goes as follows: Given any element  $u=\sum_{i}x_i^*\otimes y_i\in X^*\otimes Y$, we define the linear map $T_u\in L(X,Y)$ by $T_u(x)=\sum_{i}\langle x_i^*,x\rangle y_i$. In addition, if we fix bi-orthogonal basis $(x_i)_i$ of $X$ and $(x_i^*)_i$ of $X^*$, to any linear map $T\in L(X,Y)$ we can associate the tensor $u_T=\sum_i x_i^*\otimes  T(x_i)\in X^*\otimes Y$. Note that $T_{u_T}=T$.

If we want to make the previous identification isometric, we must introduce some norms on the spaces $L(X,Y)$ and $X^*\otimes Y$. The space $L(X,Y)$ is  naturally endowed with the (operator) norm 
\begin{align}\label{norm operator}
\|T\|=\sup_{x\in B_X}\|T(x)\|_{Y}.
\end{align}We will denote by $\mathcal L(X,Y)$ the space $L(X,Y)$  endowed with the previous norm.

The tensor product of two normed spaces can be endowed with different tensor norms. In this paper, we will be  interested in the so called \emph{$\epsilon$-norm} which we define next: 

Given  two finite dimensional normed spaces $X$, $Y$, we will consider the \emph{$\epsilon$-norm} of $u\in X\otimes Y$, defined by
\begin{align}\label{injective norm}
\|u\|_{X\otimes_\epsilon Y}=\sup_{x^*\in B_{X^*},\,y^*\in B_{Y^*}}\big|\langle u, x^*\otimes y^*\rangle\big|,
\end{align}where for a given $u=\sum_i x_i\otimes y_i \in X\otimes Y$, $\langle u, x^*\otimes y^*\rangle=\sum_i\langle x^*, x_i\rangle \langle y^*, y_i\rangle$. We will denote $X\otimes_{\epsilon} Y$ the space $X\otimes Y$ with the previous norm. Considering finite dimensional spaces is not relevant in these definitions, but it is the only case that we will use in this work and it allows us to ignore the completion of the normed spaces.

It is very easy to see from (\ref{norm operator}) and (\ref{injective norm}) that $\|u\|_{X\otimes_\epsilon Y}=\|T_u:X^*\rightarrow Y\|$ so that we have the following isometric identification:
\begin{align}\label{inj-op norm}
\mathcal L(X,Y)=X^*\otimes_{\epsilon} Y.
\end{align}

We will only need two properties about the $\epsilon$-norm, which can be easily proved from its definition. The first one is usually called the metric mapping property and it states (see \cite[pag. 46]{DF} for a proof) that for any spaces $X$, $Y$, $Z$, $W$ and any linear maps $T:X\lra Z$ and $S:Y\lra W$ one has 
\begin{align}\label{MPP}
\|T\otimes S:X\otimes_{\epsilon} Y\lra Z\otimes_{\epsilon} W\|=\|T\|\|S\|.
\end{align}

In Section \ref{S:atom reduction} we will also need the  injectivity of the $\epsilon$-norm (see \cite[pag. 49]{DF} for a proof) : for any subspaces $Z\subset X$ and $W\subset Y$ the $\epsilon$-norm in   $Z\otimes W$ is just the restriction of the $\epsilon$-norm in  $X\otimes Y$. Schematically, 
\begin{align}\label{injectivity}
Z\otimes_{\epsilon} W\subset X\otimes_{\epsilon} Y.
\end{align}

When working with XOR games, we are interested in the space $\ell_1^N:=(\mathbb R^N, \|\cdot \|_1)$, where $\|(x_i)_i \|_1=\sum_{i=1}^N|x_i|$ and its dual space $(\ell_1^N)^*=\ell_\infty^N:=(\mathbb R^N, \|\cdot \|_\infty)$, where $\|(x_i)_i \|_\infty=\sup_{i=1}^N|x_i|$. Note that this is the case $p=1$ in (\ref{Def p-norm}) for a general $N$ and the duality relation (\ref{duality relation}) in this case.

We consider  a XOR game $G$, whose coefficients are $T_{x,y}=\pi(x,y) f(x,y)$ for every $x\in \setx$, $y\in \sety$. We will view this game as an element  $$G=\sum_{x\in \setx, y\in \sety}T_{x,y}e_x\otimes e_y\in \ell_1^{\cardx}\otimes \ell_1^{\cardy},$$
equivalently as an element  $$T_{G}:\ell_\infty^{\cardx}\longrightarrow \ell_1^{\cardy}.$$

It is easy to prove (see \cite[Equations (10) and (18)]{PV-Survey}) that given a XOR game $G$, the value $\omega(G)$ defined in (\ref{classical bias XOR}) can be written as
$$\omega(G)=\|G\|_{\ell_1^{\cardx}\otimes_\epsilon \ell_1^{\cardy}}=\|T_{G}:\ell_\infty^{\cardx}\lra \ell_1^{\cardy}\|,$$where the second equality follows from (\ref{inj-op norm}).

Note that the coefficients $(T_{x,y})_{x,y}$ verify the normalization condition
\begin{align}\label{NSG condition}
\sum_{x\in \setx, y\in \sety}|T_{x,y}|=1.
\end{align}

Conversely, any real matrix $(T_{x,y})_{x,y=1}^{\cardx,\cardy}$ verifying condition (\ref{NSG condition}) corresponds to the coefficients of a  XOR game. Indeed, we just need to consider the game defined by the probability distribution $\pi(x,y)=|T_{x,y}|$ and the function $f(x,y)=\sign(T_{x,y})$ for every  $x\in \setx$, $y\in \sety$. Hence, XOR games can be identified with matrices $(T_{x,y})_{x,y=1}^{\cardx,\cardy}$ verifying condition (\ref{NSG condition}). 

In this work, we are not interested in the classical value of  $G$, but in the values $\omega_{o.w-c}(G)$ and $\omega^*_{o.w-c}(G)$. We show next that these values can be easily expressed as certain norms of naturally associated linear maps. 

For any natural number $d$ we  consider the element $$G \otimes id_{\ell_1^d}:=\sum_{x\in \setx, y\in \sety}\sum_{i=1}^d T_{x,y}(e_x\otimes e_i)\otimes (e_y\otimes e_i)\in \ell_1^{\cardx}(\ell_\infty^d)\otimes \ell_1^{\cardy}(\ell_1^d).$$

Here, for any normed space $X$ and any natural number $N$ we denote by $\ell_1^N(X)$ the space $\{(x_i)_{i=1}^N: x_i\in X, i:1,\cdots, N\}$ endowed with the norm $$\|(x_i)_i \|_{\ell_1^N(X)}=\sum_{i=1}^N\|x_i\|_X.$$It is easy to check that $\ell_1^N(\ell_1^d)=\ell_1^{Nd}$ and standard calculations show  that the dual space of $\ell_1^N(X)$ is the space $\ell_\infty^N(X^*)$, defined as $\{(y_i)_{i=1}^N: y_i\in X^*, i:1,\cdots, N\}$ endowed with the norm $$\|(y_i)_i \|_{\ell_\infty^N(X)}=\sup_{i=1,\cdots ,N}\|y_i\|_{X^*}.$$

The following result is straightforward from Lemma \ref{Lemma: classical} and the definitions above.
\begin{lemma}\label{connection norms I}
Let $G$ be a XOR game with coefficients $(T_{x,y})_{x,y}$. We identify $G$ with a tensor $G=\sum_{x\in \setx, y\in \sety}T_{x,y}e_x\otimes e_y\in \ell_1^{\cardx}\otimes \ell_1^{\cardy}$ and the corresponding operator $T_{G}:\ell_\infty^{\cardx}\lra  \ell_1^{\cardy}$. Then, 
$$\omega_{o.w-\log d}(G)=\|G\otimes id_{\ell_1^d}\|_{\ell_1^{\cardx}(\ell_\infty^d)\pe \ell_1^{\cardy}(\ell_1^d)}=
\|T_{G}\otimes id:\ell_\infty^{\cardx}(\ell_1^{d})\lra \ell_1^{\cardy}(\ell_1^{d})\|.$$
\end{lemma}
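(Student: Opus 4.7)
The plan is to show both equalities essentially as direct translations. The first equality is just a rewriting of Lemma \ref{Lemma: classical} in the language of $\epsilon$-norms, using the standard dualities $(\ell_1^N(X))^{*}=\ell_\infty^N(X^{*})$; the second equality is an instance of the general isometric identification $\mathcal L(X,Y)=X^{*}\pe Y$ given in (\ref{inj-op norm}).

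For the first equality, I would start from the expression
$$\omega_{o.w-\log d}(G)=\sup\Big|\sum_{x,y}\sum_{m=1}^{d}T_{x,y}\,a(x,m)\,b(y,m)\Big|,$$
given by Lemma \ref{Lemma: classical}. The constraint $\sum_m|a(x,m)|\leq 1$ for every $x$ is exactly $\|(a(x,m))_{x,m}\|_{\ell_\infty^{\cardx}(\ell_1^d)}\leq 1$, while $|b(y,m)|\leq 1$ for every $y,m$ is exactly $\|(b(y,m))_{y,m}\|_{\ell_\infty^{\cardy}(\ell_\infty^d)}\leq 1$. Since $\ell_\infty^{\cardx}(\ell_1^d)$ and $\ell_\infty^{\cardy}(\ell_\infty^d)$ are the duals of $\ell_1^{\cardx}(\ell_\infty^d)$ and $\ell_1^{\cardy}(\ell_1^d)$ respectively, pairing $a\otimes b$ against the tensor $G\otimes id_{\ell_1^d}=\sum_{x,y,i}T_{x,y}(e_x\otimes e_i)\otimes(e_y\otimes e_i)$ gives precisely $\sum_{x,y,m}T_{x,y}a(x,m)b(y,m)$. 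Taking the supremum over $a$ and $b$ in the respective unit balls is, by definition (\ref{injective norm}), the $\epsilon$-norm of $G\otimes id_{\ell_1^d}$ in $\ell_1^{\cardx}(\ell_\infty^d)\pe \ell_1^{\cardy}(\ell_1^d)$.

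For the second equality, I apply the identification (\ref{inj-op norm}) with $X=\ell_\infty^{\cardx}(\ell_1^d)$ and $Y=\ell_1^{\cardy}(\ell_1^d)$, so that $X^{*}\pe Y=\ell_1^{\cardx}(\ell_\infty^d)\pe \ell_1^{\cardy}(\ell_1^d)$. It only remains to check that the tensor $G\otimes id_{\ell_1^d}$ under this identification corresponds to the operator $T_G\otimes id$. This is a one-line calculation: for any $\alpha=\sum_{x,i}\alpha(x,i)\,e_x\otimes e_i\in\ell_\infty^{\cardx}(\ell_1^d)$, one has
$$(T_G\otimes id)(\alpha)=\sum_{x,y,i}T_{x,y}\,\alpha(x,i)\,e_y\otimes e_i,$$
which is exactly the action of the tensor $G\otimes id_{\ell_1^d}$ viewed as an element of $(\ell_\infty^{\cardx}(\ell_1^d))^{*}\otimes \ell_1^{\cardy}(\ell_1^d)$.

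There is no real obstacle here; the only mild care needed is bookkeeping of the duality pairings (in particular, verifying $(\ell_1^N(X))^{*}=\ell_\infty^N(X^{*})$ for the two relevant choices of $X$) and confirming that the specific tensor appearing in the definition of $G\otimes id_{\ell_1^d}$ matches the operator $T_G\otimes id_{\ell_1^d}$ on the nose.
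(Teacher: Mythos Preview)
Your proposal is correct and matches the paper's approach exactly: the paper simply states that the result ``is straightforward from Lemma \ref{Lemma: classical} and the definitions above,'' and you have spelled out precisely those details (the duality $(\ell_1^N(X))^*=\ell_\infty^N(X^*)$ to recognize the $\epsilon$-norm, and the identification (\ref{inj-op norm}) for the operator norm).
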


In a completely analogous way we can describe the value $\omega_{o.w-c}^*(G)$. Let us denote by $S_\infty^d$ the space of $d\times d$-complex matrices with the operator norm (note that we denoted this space by $M_d$ in Section \ref{S: communication}) and $S_1^d$,  the space of $d\times d$ complex matrices with the trace norm. It is well known that $(S_\infty^d)^*=S_1^d$ in analogy with the duality between $(\ell_\infty^{d})^*$ and $\ell_1^{d}$. Then, we can regard the element $G$ as an element $$G \otimes id_{S_1^d}:=\sum_{x\in \setx, y\in \sety}\sum_{i,j=1}^d T_{x,y}(e_x\otimes e_{i,j} )\otimes (e_y\otimes e_{i,j})\in \ell_1^{\cardx}(S_\infty^d)\otimes \ell_1^{\cardy}(S_1^d),$$where here $e_{i,j}$ denotes the matrix with all entries equal zero up to a one at the entry $(i,j)$.

The following result is straightforward from Lemma \ref{l:quantumvalue} and the definitions above.
\begin{lemma}\label{connection norms II}
Let $G$ be a XOR game with coefficients $(T_{x,y})_{x,y}$. We identify $G$ with a tensor $G=\sum_{x\in \setx, y\in \sety}T_{x,y}e_x\otimes e_y\in \ell_1^{\cardx}\otimes \ell_1^{\cardy}$ and the corresponding operator $T_{G}:\ell_\infty^{\cardx}\lra  \ell_1^{\cardy}$. Then, 
$$\frac{1}{4}\|G\otimes id_{S_1^d}\|_{\ell_1^{\cardx}(S_\infty^d)\pe \ell_1^{\cardy}(S_1^d)}\leq \omega_{o.w-\log d}^*(G)\leq \|G\otimes id_{S_1^d}\|_{\ell_1^{\cardx}(S_\infty^d)\pe \ell_1^{\cardy}(S_1^d)}.$$
\end{lemma}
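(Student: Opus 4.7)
The plan is to unfold the definition of the $\epsilon$-norm on the right-hand side and compare the resulting expression with the formula for $\omega^*_{o.w-\log d}(G)$ given in Lemma \ref{l:quantumvalue}.

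First, I would identify the relevant dual spaces. Since $(\ell_1^N(X))^* = \ell_\infty^N(X^*)$ and $(S_1^d)^* = S_\infty^d$ (and vice versa), we have
\[(\ell_1^{\cardx}(S_\infty^d))^* = \ell_\infty^{\cardx}(S_1^d) \quad \text{and} \quad (\ell_1^{\cardy}(S_1^d))^* = \ell_\infty^{\cardy}(S_\infty^d).\]
So by the very definition \eqref{injective norm}, the $\epsilon$-norm in question is the supremum of $\langle G\otimes id_{S_1^d},\, \varphi\otimes \psi\rangle$ where $\varphi=(R_x)_x$ ranges over tuples with $\sup_x\|R_x\|_{S_1^d}\leq 1$ and $\psi=(B_y)_y$ over tuples with $\sup_y\|B_y\|_{S_\infty^d}\leq 1$, with no self-adjointness requirement. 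A direct calculation using $\langle e_{i,j}, A\rangle = \tr(e_{i,j}A) = A_{j,i}$ gives
\[\langle G\otimes id_{S_1^d},\, \varphi\otimes\psi\rangle = \sum_{x,y} T_{x,y}\sum_{i,j}(R_x)_{j,i}(B_y)_{j,i} = \sum_{x,y} T_{x,y}\tr(R_x^T B_y).\]
Since the transpose is a trace-norm isometry, the supremum is unchanged if we replace $R_x^T$ by a fresh variable $R_x$ with the same constraint. Thus
\[\|G\otimes id_{S_1^d}\|_{\ell_1^{\cardx}(S_\infty^d)\pe \ell_1^{\cardy}(S_1^d)} = \sup\Big|\sum_{x,y}T_{x,y}\tr(R_x B_y)\Big|,\]
where the sup ranges over arbitrary (not necessarily self-adjoint) matrices with the stated norm bounds.

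The upper bound $\omega^*_{o.w-\log d}(G)\leq \|G\otimes id_{S_1^d}\|_\epsilon$ is then immediate: Lemma \ref{l:quantumvalue} expresses $\omega^*_{o.w-\log d}(G)$ as precisely the same supremum restricted to self-adjoint $R_x$ and $B_y$, which is a subset of the set of admissible pairs on the right.

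For the lower bound, I would decompose any matrix $A\in M_d$ into its Hermitian parts $A = A^{(1)} + iA^{(2)}$ with $A^{(1)}=\tfrac{1}{2}(A+A^*)$ and $A^{(2)}=\tfrac{1}{2i}(A-A^*)$, both self-adjoint and of trace (resp. operator) norm at most $\|A\|_{S_1^d}$ (resp. $\|A\|_{S_\infty^d}$). Applying this to $R_x$ and $B_y$ and expanding, since $T_{x,y}\in\mathbb{R}$, the triangle inequality gives
\[\Big|\sum_{x,y}T_{x,y}\tr(R_xB_y)\Big| \leq \sum_{\eps,\eps'\in\{1,2\}}\Big|\sum_{x,y}T_{x,y}\tr(R_x^{(\eps)}B_y^{(\eps')})\Big|,\]
and each of the four summands is bounded by $\omega^*_{o.w-\log d}(G)$ via Lemma \ref{l:quantumvalue}. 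Taking the supremum on the left yields $\|G\otimes id_{S_1^d}\|_\epsilon \leq 4\,\omega^*_{o.w-\log d}(G)$, which is the desired inequality.

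The only mild subtlety to watch out for is the bookkeeping of the transpose appearing in the canonical pairing between $e_{i,j}$ and matrices; beyond that the proof is a routine translation between the Banach-space language and the operational formulation, so no genuine obstacle is expected.
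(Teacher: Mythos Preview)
Your argument is correct and matches the paper's approach: the paper states the lemma as ``straightforward from Lemma \ref{l:quantumvalue} and the definitions above'' and explains immediately afterward that the loss of a factor $4$ comes precisely from the self-adjointness restriction in Lemma \ref{l:quantumvalue}, which you handle exactly as the paper does elsewhere (cf.\ the end of the proof of Proposition \ref{lower bound XOR game}) by splitting into Hermitian real and imaginary parts.
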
Analogously to Lemma \ref{connection norms I}, we can write the norm $
\|T_{G}\otimes id_{S_1^d}:\ell_\infty^{\cardx}(S_1^{d})\lra \ell_1^{\cardy}(S_1^{d})\|$ instead of $\|G\otimes id_{S_1^d}\|_{\ell_1^{\cardx}(S_\infty^d)\pe \ell_1^{\cardy}(S_1^d)}$ in the statement of the previous lemma.

The reason why we do not have an equality in Lemma \ref{connection norms II} is that in Lemma  \ref{l:quantumvalue}  the optimization is over self-adjoint matrices while the definition of the previous norms consider general complex matrices. However, if in the definition of the previous norms we restrict to the space of self-adjoint $d\times d$-complex matrices with the operator norm $(S_\infty^d)^{s.a}$ and to the space of self-adjoint $d\times d$-complex matrices with the trace norm $(S_1^d)^{s.a}$, then we can conclude that $$\omega_{o.w-\log d}^*(G)= \|G\otimes id_{S_1^d}\|_{\ell_1^{\cardx}((S_\infty^d)^{s.a})\pe \ell_1^{\cardy}((S_1^d)^{s.a})}.$$

The following result is trivial from Lemma \ref{connection norms I} and Lemma \ref{connection norms II}
\begin{theorem}\label{theorem quotient norms}
Let $G$ be a XOR game. Then,
\begin{align}
\frac{\omega^*_{o.w-\log d}(G)}{\omega_{o.w-\log k}(G)}\leq \frac{\|G\otimes id_{S_1^d}\|_{\ell_1^{\cardx}(S_\infty^d)\otimes_{\epsilon} \ell_1^{\cardy}(S_1^d)}}{\|G\otimes id_{\ell_1^k}\|_{\ell_1^{\cardx}(\ell_\infty^k)\otimes_{\epsilon} \ell_1^{\cardy}(\ell_1^k)}}\leq 4\frac{\omega^*_{o.w-\log d}(G)}{\omega_{o.w-\log k}(G)}.
\end{align}
\end{theorem}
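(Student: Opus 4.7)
The plan is to simply combine the two lemmas \ref{connection norms I} and \ref{connection norms II} that precede the statement; the paper even explicitly notes that the theorem is ``trivial'' from these. So the proof proposal is essentially a two-line chain of inequalities.

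First, I would rewrite the denominator using Lemma \ref{connection norms I}, which gives the exact identity
\[
\omega_{o.w-\log k}(G) \,=\, \|G\otimes id_{\ell_1^k}\|_{\ell_1^{\cardx}(\ell_\infty^k)\pe \ell_1^{\cardy}(\ell_1^k)}.
\]
For the numerator I would apply Lemma \ref{connection norms II}, which yields the two-sided estimate
\[
\tfrac{1}{4}\,\|G\otimes id_{S_1^d}\|_{\ell_1^{\cardx}(S_\infty^d)\pe \ell_1^{\cardy}(S_1^d)} \,\leq\, \omega^*_{o.w-\log d}(G) \,\leq\, \|G\otimes id_{S_1^d}\|_{\ell_1^{\cardx}(S_\infty^d)\pe \ell_1^{\cardy}(S_1^d)}.
\]

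The upper bound on the quotient $\omega^*_{o.w-\log d}(G)/\omega_{o.w-\log k}(G)$ by the tensor-norm ratio follows immediately by combining the identity for the denominator with the upper estimate on $\omega^*_{o.w-\log d}(G)$. Symmetrically, the lower bound, with the factor $4$, follows by combining the same identity with the lower estimate $\omega^*_{o.w-\log d}(G) \geq \tfrac{1}{4}\|G\otimes id_{S_1^d}\|$; inverting this gives $\|G\otimes id_{S_1^d}\| \leq 4\,\omega^*_{o.w-\log d}(G)$, and dividing by $\|G\otimes id_{\ell_1^k}\| = \omega_{o.w-\log k}(G)$ yields the right-hand side of the chain.

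There is no real obstacle here; the only subtle point worth commenting on is the appearance of the constant $4$, which originates from the self-adjointness restriction in the definition of $\omega^*_{o.w-\log d}(G)$ via Lemma \ref{l:quantumvalue}: an arbitrary operator $R \in S_1^d$ (resp.\ $B \in S_\infty^d$) decomposes as $R = \tfrac{1}{2}(R+R^*) + \tfrac{i}{2}\cdot\tfrac{1}{i}(R-R^*)$ into two self-adjoint pieces, each of norm at most $\|R\|$, so optimizing over self-adjoint operators costs at most a factor of $2$ per side, i.e.\ $4$ in total. As the paper already notes, if one redefines the tensor norm using the self-adjoint subspaces $(S_\infty^d)^{s.a}$ and $(S_1^d)^{s.a}$, the factor of $4$ disappears and the quotient equals the tensor-norm ratio exactly.
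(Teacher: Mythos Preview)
Your proposal is correct and matches the paper's approach exactly: the paper states that the theorem is ``trivial from Lemma \ref{connection norms I} and Lemma \ref{connection norms II}'' and offers no further argument, which is precisely the combination you carry out. Your additional remark on the origin of the constant $4$ and its disappearance when restricting to self-adjoint matrices is also consistent with the paper's own comment following Lemma \ref{connection norms II}.
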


It is important to mention that, although we have written the paper removing all Banach space terminology so that no knowledge of it is needed to understand the proofs, both Theorem \ref{Thm:MainI} and Theorem \ref{Thm:MainIII} were thought in terms of tensor norms and factorizations of operators by looking at the relation stated in Theorem \ref{theorem quotient norms}. As we will see next, this point of view is crucial in the proof of Theorem \ref{Thm:MainIV}. 
\section{Reducing the number of inputs of the game $G$}\label{S:atom reduction}

In this section we modify the game $G$ introduced in Section \ref{S:values} to obtain a new game $\tilde{G}$ defined with inputs $x, y\in \{1,\cdots, m\}$, with $m\leq cn^8$ for a given universal constant $c$, and such that $\tilde{G}$  verifies 
\begin{align*}
\frac{\omega^*_{o.w-\log n}(\tilde{G})}{\omega_{o.w-\log n}(\tilde{G})}\geq C\frac{\sqrt{n}}{\log n},
\end{align*}where $C$ is a universal constant.

This will prove Theorem \ref{Thm:MainIV}. In addition, using Theorem \ref{Thm:MainI}, Corollary \ref{main cor} follows straightforward.

According to Theorem \ref{theorem quotient norms}, Theorem \ref{Thm:MainIV} will follow from the following result.
\begin{theorem}\label{Theorem 3 with norms}
There exists a family of XOR games $(\tilde{G}_n)_n$ such that for every $n$, $\tilde{G}_n$ has $m_n \leq c n^8$ inputs per player and it verifies
\begin{align}\label{desired property}
\frac{\|\tilde{G}\otimes id_{S_1^n}\|_{\ell_1^{m}(S_\infty^n)\otimes_{\epsilon} \ell_1^{m}(S_1^n)}}{\|\tilde{G}\otimes id_{\ell_1^n}\|_{\ell_1^{m}(\ell_\infty^n)\otimes_{\epsilon} \ell_1^{m}(\ell_1^n)}}\geq D\frac{\sqrt{n}}{\log n}.
\end{align}

Here, $D$ and $c$ are universal constants.
\end{theorem}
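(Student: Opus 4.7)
The plan is to apply the atom reduction technique from \cite{JOP16} to the game $G$ from Section \ref{S:values}. The underlying observation is that $G$ has a natural factorization through $\ell_2^{n^2}$: the operator $T_G:\ell_\infty^{2^{2n}}\to \ell_1^{2^{n^2}}$ is, up to the normalization $1/M$, the composition of the Rademacher map $f\mapsto \bigl(\sum_{x,z} r_i(x)r_j(z) f(x,z)\bigr)_{i,j}$ on Alice's side with its analogue on Bob's side, and in Proposition \ref{lower bound XOR game} the same factorization is used on the quantum side through $S_2^n$. The atom reduction amounts to pre- and post-composing with random sub-sampling operators on $\ell_\infty^{2^{2n}}$ and $\ell_1^{2^{n^2}}$ so as to replace them by $\ell_\infty^m$ and $\ell_1^m$ with $m$ polynomial in $n$, while keeping both tensor norms in \eqref{desired property} essentially unchanged.

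Concretely, I would draw $m=O(n^8)$ samples $\tilde x^{(s)}=(x^{(s)},z^{(s)})$ i.i.d.\ uniform on $\{-1,1\}^{2n}$ and $m$ samples $y^{(t)}$ i.i.d.\ uniform on $\{-1,1\}^{n^2}$, and set the coefficients of $\tilde G_n$ to be proportional to $\sum_{i,j=1}^n x_i^{(s)} z_j^{(s)} y_{i,j}^{(t)}$ with a normalization chosen so that they sum to one. To establish the lower bound on the quantum value I would transfer the strategy from Proposition \ref{lower bound XOR game} verbatim: the states $\rho_{\tilde x^{(s)}}=|\varphi_{x^{(s)}}\rangle\langle\varphi_{z^{(s)}}|$ and observables $A_{y^{(t)}}/\|A_{y^{(t)}}\|_{M_n}$ now satisfy the approximate orthogonality $\sum_s x_i^{(s)}x_{i'}^{(s)}\approx \delta_{i,i'}m$ and the Latala-type lower bound $\sum_t 1/\|A_{y^{(t)}}\|_{M_n}\geq C m/\sqrt n$ with high probability, by standard Hoeffding and matrix-concentration arguments combined with \eqref{chevet}. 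Thus the computation leading to \eqref{final inequality} yields $\omega^*_{o.w-\log n}(\tilde G_n)\geq C'/\sqrt n$.

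The upper bound on the classical value $\omega_{o.w-\log n}(\tilde G_n)$ is the main technical step. I would repeat the chain of estimates in Proposition \ref{upper bound XOR game}, choosing $p'=\ln n$ and invoking the standard Khintchine inequality together with its transposed version (Lemma \ref{l:transpose Khintchine}), but now on the empirical measure supported on the $m$ sampled inputs rather than on the full cubes. This requires an empirical Khintchine/Pisier-type inequality stating that, with high probability over the sample, the constants $a_p,b_p$ in Theorem \ref{Khintchine ineq} and in Lemma \ref{l:transpose Khintchine} survive up to polylogarithmic factors on the random subsets. Such estimates are obtained by moment bounds combined with a union bound over an $\epsilon$-net of the relevant unit balls in $\ell_\infty^m(\ell_1^n)$ and its dual; the choice $m=O(n^8)$ is dictated by the entropy of this net together with the target dimension $n$ on the Hilbert factor.

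The main obstacle is precisely this empirical Khintchine step for the tensor-norm expression $\|\tilde G_n\otimes \id_{\ell_1^n}\|_\epsilon$. In the non-restricted setting the proof of Proposition \ref{upper bound XOR game} exploits the full symmetry of the cube and the exact identity $\sum_x x_i x_{i'}=\delta_{i,i'}2^n$, whereas in the sampled setting we only retain these identities approximately, so the chain of inequalities must be carried out with an $\epsilon$-loss that is absorbed by choosing $m$ large enough. Injectivity of the $\epsilon$-norm \eqref{injectivity} guarantees that the sub-sampling operators act correctly on the tensor products. The fact that the target ratio in \eqref{desired property} has an extra $\log n$ in the denominator, compared with Theorem \ref{Thm:MainIII}, is consistent with this logarithmic loss and should fall out naturally from the choice $p'=\ln n$ already present in the proof of Proposition \ref{upper bound XOR game}.
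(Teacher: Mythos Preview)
Your setup is correct: the factorization of $G$ through $\ell_2^{n^2}$ via the Rademacher maps is exactly what drives the atom reduction, and the target $m=O(n^8)$ comes from the dimension $n^4$ of the relevant subspace tensored with $S_\infty^n$ (resp.\ $S_1^n$). However, the route you propose after sampling is both harder than necessary and leaves a genuine gap.

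The gap is precisely the step you flag as the ``main obstacle'': an empirical Khintchine inequality on the random subset, uniform over the unit ball of $\ell_\infty^m(\ell_1^n)$, that would let you rerun the chain of estimates of Proposition~\ref{upper bound XOR game} on the sampled game. You do not prove this; ``moment bounds plus a union bound over an $\epsilon$-net'' is a plausible heading but not an argument, and carrying it out with the right dependence on $n$ and $m$ is delicate. Likewise, your lower bound transfer requires simultaneous control of $\sum_s x_i^{(s)}x_{i'}^{(s)}$, $\sum_s z_j^{(s)}z_{j'}^{(s)}$ and $\sum_t 1/\|A_{y^{(t)}}\|_{M_n}$; each is routine in isolation, but you are effectively reproving concentration results that \cite{JOP16} already packages.

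The paper avoids all of this. It invokes \cite{JOP16} not merely as a heuristic for sub-sampling but as a black box (Proposition~\ref{Proposition reduction}): there exist maps $J_1:\ell_1^{2^{2n}}\to\ell_1^m$ and $J_2:\ell_1^{2^{n^2}}\to\ell_1^m$ such that $J_1\otimes id_{S_\infty^n}$ and $J_2\otimes id_{S_1^n}$ are $(1+\epsilon)$-isomorphisms on the $n^4$-dimensional subspaces $\tilde X=j_1(\ell_2^{n^2})\otimes S_\infty^n$ and $\tilde Y=j_2(\ell_2^{n^2})\otimes S_1^n$. One then sets $\tilde G=(J_1\otimes J_2)(G)$ (normalized) and the two required inequalities follow \emph{formally} from the metric mapping property \eqref{MPP} and the injectivity of the $\epsilon$-norm \eqref{injectivity}: for the numerator one writes $G\otimes id_{S_1^n}=\bigl((J_1^{-1}\otimes id)\otimes(J_2^{-1}\otimes id)\bigr)(\tilde G\otimes id_{S_1^n})$ and uses $\|J_i^{-1}\otimes id\|\le 1$ on the subspace; for the denominator one uses $\|J_i\otimes id_{\ell_\infty^n}\|\le 3/2$ (which follows since $\ell_\infty^n$ sits diagonally inside $S_\infty^n$). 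No Khintchine inequality, empirical or otherwise, is re-proved; the entire analytic content of Propositions~\ref{upper bound XOR game} and~\ref{lower bound XOR game} is inherited from $G$ through the isomorphism. The $\log n$ in the statement is thus not a loss from the reduction but simply the $\log k$ already present in Theorem~\ref{Thm:MainIII} at $k=n$.
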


As we did in Section \ref{S:values}, we will not write the dependence of $n$ in the notation.

The key point in this proof is to understand the game $G$ introduced in Section \ref{S:values} as a tensor in $\ell_1^{2^{2n}}\otimes \ell_1^{2^{n^2}}$. Then, we will prove our result as a direct consequence of the recent work \cite{JOP16}, where a method to reduce the number of inputs of certain games was studied.

In fact, according to the identification explained in Section \ref{S:tensor norms} the reader will find easy to check that our game $G$ defined in Section \ref{S:values} corresponds to the tensor $$G=\frac{1}{M}\sum_{\substack{x,z\in \{-1,1\}^n\\y\in \{-1,1\}^{n^2}}}\Big(\sum_{i,j=1}^n x_iz_jy_{i,j}\Big)e_{x,z}\otimes e_y\in \ell_1^{2^{2n}}\otimes \ell_1^{2^{n^2}},$$where $M$ was introduced in (\ref{Def M}).

The element $G$ can be understood in the following way: Let us define the linear maps $j_1:\ell_2^{n^2}\lra \ell_1^{2^{2n}}$ and $j_2:\ell_2^{n^2}\lra \ell_1^{2^{n^2}}$ as
\begin{align}\label{maps j_1-j_2}
j_1(e_{i,j})=\sum_{x,z\in \{-1,1\}^n}x_iz_je_{x,z},\text{     } \text{     } \text{   and   }\text{     }\text{     }j_2(e_{i,j})=\sum_{y\in \{-1,1\}^{n^2}}y_{i,j}e_{y}
\end{align}for every $i,j=1,\cdots, n$. Then, it is easy to check that $$G=(j_1\otimes j_2)(I),$$where $$I=\frac{1}{M}\sum_{i,j=1}^ne_{i,j}\otimes e_{i,j}\in \ell_2^{n^2}\otimes \ell_2^{n^2}.$$

This way of writing the game $G$ shows that, although the space $\ell_1^{2^{2n}}\otimes \ell_1^{2^{n^2}}$ has very large dimension, the relevant space to define our element $G$ is $j_1(\ell_2^{n^2})\otimes j_2(\ell_2^{n^2})\subset \ell_1^{2^{2n}}\otimes \ell_1^{2^{n^2}}$, which has a much lower dimension. This situation allows us to apply the empirical method studied in \cite{JOP16}. More precisely, let us state \cite[Proposition 3.1]{JOP16} in the more explicit form explained in \cite[Remark 3.1]{JOP16}:
\begin{prop}\label{Proposition reduction}
Let $X$ be a Banach space, $N$ be a natural number and $E\subset \ell_1^N(X)$ be a $d$-dimensional subspace of $\ell_1^N(X)$. Then, for any $\epsilon>0$ and taking $m$ to be the integer part of $C(\vr) d^2$, there exists a map $J:\ell_1^N\rightarrow \ell_1^m$ such that $J\otimes id_X$ defines a $(1+\epsilon)$-isomorphism from $E$ to $\ell_1^m(X)$. Here, one can take $C(\vr) = C_0 \vr^{-2} \log(\vr^{-1})$ for a universal constant $C_0$.
\end{prop}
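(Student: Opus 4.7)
The plan is to prove Proposition \ref{Proposition reduction} as a vector-valued analog of Schechtman's classical embedding theorem for finite dimensional subspaces of $L_1$. The map $J$ is built by random sampling: I would pick indices $i_1,\dots,i_m\in\{1,\dots,N\}$ i.i.d.\ according to a carefully chosen probability vector $p=(p_1,\dots,p_N)$, set $m=\lfloor C(\vr)d^2\rfloor$, and define
$$
J:\ell_1^N\longrightarrow \ell_1^m,\qquad J(e_i)\,=\,\frac{1}{m\,p_i}\sum_{k:\,i_k=i}e_k.
$$
For any $u=\sum_{i}e_i\otimes x_i^u\in\ell_1^N(X)$ one then has
$$
\|(J\otimes \id_X)(u)\|_{\ell_1^m(X)}\,=\,\frac{1}{m}\sum_{k=1}^{m}\frac{\|x_{i_k}^u\|_X}{p_{i_k}},
$$
which is an average of $m$ i.i.d.\ nonnegative random variables whose expectation equals $\sum_i\|x_i^u\|_X=\|u\|_{\ell_1^N(X)}$. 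Hence $J\otimes \id_X$ preserves the norm in expectation, and the task reduces to controlling its deviation uniformly on the unit sphere of $E$.

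To do so I first choose $p$ by a vector-valued change-of-density (Lewis-type) argument so that
$$
M(E,p)\,:=\,\sup_{u\in E,\,\|u\|_{\ell_1^N(X)}=1}\max_{1\le i\le N}\frac{\|x_i^u\|_X}{p_i}\,\le\, C\,d,
$$
independently of $N$ and of $X$. The point is that $\|u\|_{\ell_1^N(X)}=\sum_i \|x_i^u\|_X$ depends on $u$ only through the scalar quantities $\|x_i^u\|_X$, so applying the scalar Lewis' theorem to the $d$-dimensional cone $\{i\mapsto \|x_i^u\|_X : u\in E\}$ inside $L_1$ of the counting measure produces weights $q_i\ge 0$ with $\sum_i q_i\le d$ and $\max_i\|x_i^u\|_X/q_i\le \sqrt{d}\,\|u\|_{\ell_1^N(X)}$. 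Setting $p_i=q_i/\sum_j q_j$ then gives $M(E,p)\le C\,d$ for a universal $C$.

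With this control in hand, each summand in the average above is bounded by $M(E,p)\|u\|$ on the unit sphere of $E$, and Bernstein's inequality yields
$$
\mathbb{P}\!\Bigl(\bigl|\|(J\otimes \id_X)(u)\|_{\ell_1^m(X)}-\|u\|_{\ell_1^N(X)}\bigr|\ge \vr\,\|u\|_{\ell_1^N(X)}\Bigr)\,\le\, 2\exp\!\bigl(-c\,\vr^2\, m/d\bigr).
$$
A $\delta$-net $\mathcal N$ of the unit sphere of $(E,\|\cdot\|_{\ell_1^N(X)})$ has $|\mathcal N|\le (3/\delta)^d$; taking $\delta$ of order $\vr$ and applying a union bound over $\mathcal N$ succeeds whenever $\exp(-c\vr^2 m/d)\cdot(3/\vr)^d<1/2$, i.e.\ whenever $m\ge C_0\,\vr^{-2}\log(\vr^{-1})\,d^2$. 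This is precisely the announced bound $m\le \lfloor C(\vr)d^2\rfloor$, and a standard successive-approximation step lifts control on the net to a $(1+\vr)$-isomorphism $J\otimes \id_X:E\to \ell_1^m(X)$.

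The main obstacle is the first step: producing the distribution $p$ with $M(E,p)\le O(d)$ in the vector-valued setting. All other ingredients---the empirical sampling, Bernstein's inequality, and the $\delta$-net over a $d$-dimensional ball---are standard and produce constants independent of $X$; the scalar Lewis' theorem is the deep ingredient that ultimately drives the quadratic dependence $m\sim d^2$ and, combined with the concentration/net step, the logarithmic factor $\log(\vr^{-1})$.
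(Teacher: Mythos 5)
The overall strategy you sketch --- empirical sampling of coordinates, a change of density to control the sup-ratio $M(E,p)$, Bernstein plus a net argument --- is exactly the Schechtman-style route the paper has in mind (the paper itself does not prove this proposition but cites it from \cite{JOP16}, whose argument follows this blueprint).

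However, there is a genuine gap in the change-of-density step, which you correctly flag as the heart of the matter. You propose to apply the \emph{scalar} Lewis theorem to the set $\{(\|x_i^u\|_X)_{i} : u\in E\}\subset\R^N$, calling it a ``$d$-dimensional cone inside $L_1$.'' But Lewis' theorem (and every change-of-density result of this kind) applies to \emph{linear subspaces} of $L_1$; the image of $E$ under the nonlinear map $u\mapsto(\|x_i^u\|_X)_i$ is a positively homogeneous set that is neither convex nor a subspace and need not sit inside any $d$-dimensional subspace, so the theorem simply does not apply to it. The correct route equips $E$ \emph{itself} with a well-chosen Euclidean structure: writing $\phi_i:E\to X$, $\phi_i(u)=x_i^u$, one picks an inner product on $E$ with orthonormal basis $(e_j)_{j\le d}$ and sets $h_i=\bigl(\sum_j\|\phi_i(e_j)\|_X^2\bigr)^{1/2}$, so that Cauchy--Schwarz gives $\|x_i^u\|_X\le h_i\|u\|_2$ for all $u\in E$; the content of the vector-valued Lewis/Schechtman lemma is then that the inner product can be chosen so that $\sum_i h_i\le C\sqrt{d}$ \emph{and} $\|u\|_2\le C'\sqrt{d}\,\|u\|_{\ell_1^N(X)}$, which after normalizing $p_i=h_i/\sum_j h_j$ yields $M(E,p)\le Cd$. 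That vector-valued extension is a real lemma requiring proof, and your sketch does not supply it. (There is also a bookkeeping slip: with your stated $\sum q_i\le d$ and $\max_i\|x_i^u\|_X/q_i\le\sqrt{d}\|u\|$, normalizing to $p_i=q_i/\sum q_j$ gives $M(E,p)\le d^{3/2}$, not $O(d)$, which would blow up the sample size to $d^{5/2}$ rather than $d^2$.) The remaining steps --- Bernstein, the $(3/\delta)^d$ net, and the successive-approximation lift --- are fine and standard once $M(E,p)\le Cd$ is secured.
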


Here, the fact that $J\otimes id_X$ is a $(1+\epsilon)$-isomorphism from $E$ to $\ell_1^m(X)$ means that for every $e\in E$ we have
$$\frac{1}{\sqrt{1+\epsilon}}\|e\|_{ \ell_1^N(X)}\leq \|(J\otimes id_X)(e)\|_{\ell_1^m(X)}\leq \sqrt{1+\epsilon}\|e\|_{ \ell_1^N(X)}.$$

The proof of Proposition \ref{Proposition reduction} is based on the phenomenon of concentration of measure (and the ideas developed in \cite{Sch87}) and does not provide an explicit map $J$. However, it guarantees the existence of such a map and, in addition, that $J$ is defined via a (random) choice of indices $i_1,\cdots, i_m\in \{1,\cdots, N\}$ and some positive numbers $\alpha_1,\cdots, \alpha_m$ such that 
\begin{align}\label{J-map}
J(x_1,\cdots, x_N)= (\alpha_1x_{i_1},\cdots, \alpha_mx_{i_m})\text{      }\text{      }\text{   for every    }\text{      }\text{      }(x_1,\cdots, x_N)\in \ell_1^N.
\end{align}

Proposition \ref{Proposition reduction} encourages us to look at our game $G$ in the following way. Let us denote $X=j_1(\ell_2^{n^2})\subset \ell_1^{2^{2n}}$ and $Y=j_2(\ell_2^{n^2})\subset \ell_1^{2^{n^2}}$, where the maps $j_1$ and $j_2$ were defined  in (\ref{maps j_1-j_2}).
Moreover, we will consider the spaces $\tilde{X}=X\otimes S_\infty^n$ and $\tilde{Y}=Y\otimes S_1^n$ endowed with the norms inherit by the inclusions $\tilde{X}\subset \ell_1^{2^{2n}}(S_\infty^n)$ and $\tilde{Y}\subset \ell_1^{2^{n^2}}(S_1^n)$. Note that $$G\otimes id_{S_1^n}\in \tilde{X}\otimes \tilde{Y}.$$

According to Proposition \ref{Proposition reduction} applied to, say $\epsilon=1/2$, since $\tilde{X}$ (resp. $\tilde{Y}$) is a $n^4$-dimensional subspace of $\ell_1^{2^{2n}}(S_\infty^n)$ (resp. $\ell_1^{2^{n^2}}(S_1^n)$), there exists a map $J_1:\ell_1^{2^{2n}}\lra \ell_1^m$ (resp. $J_2:\ell_1^{2^{n^2}}\lra \ell_1^m$) with $m\leq cn^8$ for a certain universal constant $c$ , such that $J_1\otimes id_{S_\infty^n}$ (resp. $J_2\otimes id_{S_1^n}$) is a $1/2$-embedding on $\tilde{X}$ (resp. $\tilde{Y}$). 

Note that, without loss of generality, we can assume that $\|J_1\otimes id_{S_\infty^n}\|\leq 3/2$ and $\|J_1^{-1}\otimes id_{S_\infty^n}\|\leq 1$.  Analogously, we can assume that $\|J_2\otimes id_{S_1^n}\|\leq 3/2$ and $\|J_2^{-1}\otimes id_{S_1^n}\|\leq 1$. In addition, since $\ell_\infty^n$ can be identified with the space of diagonal matrices in $S_\infty^n$ and $\ell_1^n$ can be identified with the space of diagonal matrices in $S_1^n$, the previous estimates imply that  $\|J_1\otimes id_{\ell_\infty^n}\|\leq 3/2$ and $\|J_2\otimes id_{\ell_1^n}\|\leq 3/2$.

In order to define our new game $\tilde{G}$, we will consider the tensor $$(J_1\otimes J_2)(G)\in \ell_1^m\otimes \ell_1^m.$$In fact, we must consider the normalization (\ref{NSG condition}) of this element. That is, if we write $(J_1\otimes J_2)(G)=\sum_{x,y=1}^mH_{x,y}e_x\otimes e_y$, and denote
\begin{align}\label{normalization new game}
N=\sum_{x,y=1}^m|H_{x,y}|,
\end{align}our element will be defined as $$\tilde{G}=\frac{1}{N}\sum_{x,y=1}^mH_{x,y}e_x\otimes e_y.$$

Note that although the maps $J_1$ and $J_2$ are not explicit, their simple form (\ref{J-map}) allows us to have a clear idea about the form of the new game $\tilde{G}$. Indeed, according to (\ref{J-map}) the new game $\tilde{G}$ is defined by removing from $G$ all but $m$ inputs for Alice $x_1,\cdots , x_m$ and all but $m$ inputs for Bob $y_1,\cdots , y_m$. Then, the value of the function $f(x_i,y_j)$ for these inputs will be exactly the same as for the game $G$, but the new probability distribution is of the form $\tilde{\pi}(x_i,y_j)=\pi(x_i,y_j)\alpha_i\beta_j(1/N)$ for some (non-explicit) positive numbers $\alpha_i$ and $\beta_j$ and the number $N$ is given by (\ref{normalization new game}).

We are now ready to prove the main result of this section.
\begin{proof}[Proof of Theorem \ref{Theorem 3 with norms}]
We will see that the game $\tilde{G}$ defined above verifies what we want. First of all, we have already explained that it has $m\leq cn^8$ inputs per player. Hence, it suffices to show the following estimates:
\begin{align}\label{estimate I}
&\|\tilde{G}\otimes id_{\ell_1^{n}}\|_{\ell_1^{m}(\ell_\infty^{n})\otimes_{\epsilon} \ell_1^{m}(\ell_1^n)}\leq \frac{9}{4}\|G\otimes id_{\ell_1^{n}}\|_{\ell_1^{2^{2n}}(\ell_\infty^{n})\otimes_{\epsilon} \ell_1^{2^{n^2}}(\ell_1^{n})}, \text{   }\text{  and }\\\label{estimate II}&\|\tilde{G}\otimes id_{S_1^{n}}\|_{\ell_1^{m}(S_\infty^{n})\otimes_{\epsilon} \ell_1^{m}(S_1^{n})}\geq \|G\otimes id_{S_1^{n}}\|_{\ell_1^{2^{2n}}(S_\infty^{n})\otimes_{\epsilon} \ell_1^{2^{n^2}}(S_1^{n})}. 
\end{align}

With these estimates at hand, the result follows straightforward since we can conclude
\begin{align*}
\frac{\|\tilde{G}\otimes id_{S_1^{n}}\|_{\ell_1^{m}(S_\infty^{n})\otimes_{\epsilon} \ell_1^{m}(S_1^{n})}}{\|\tilde{G}\otimes id_{\ell_1^{n}}\|_{\ell_1^{m}(\ell_\infty^{n})\otimes_{\epsilon} \ell_1^{m}(\ell_1^n)}}\geq \frac{4}{9}\frac{\|G\otimes id_{S_1^{n}}\|_{\ell_1^{2^{2n}}(S_\infty^{n})\otimes_{\epsilon} \ell_1^{2^{n^2}}(S_1^{n})}}{\|G\otimes id_{\ell_1^{n}}\|_{\ell_1^{2^{2n}}(\ell_\infty^{n})\otimes_{\epsilon} \ell_1^{2^{n^2}}(\ell_1^{n})}}\geq D\frac{\sqrt{n}}{\log n},
\end{align*}for a certain universal constant $D$. Indeed, the first inequality is a consequence of (\ref{estimate I}) and (\ref{estimate II}), while the second inequality follows by putting together Theorem \ref{Thm:MainIII} and Theorem \ref{theorem quotient norms}.

 In order to see (\ref{estimate I}), write 
\begin{align*}
\|\tilde{G}\otimes id_{\ell_1^{n}}\|_{\ell_1^{m}(\ell_\infty^{n})\otimes_{\epsilon} \ell_1^{m}(\ell_1^{n})}&=\|\big((J_1\otimes d_{\ell_1^{n}})\otimes (J_2\otimes d_{\ell_\infty^{n}})\big)(G\otimes id_{\ell_1^{n}})\|_{\ell_1^{m}(\ell_\infty^{n})\otimes_{\epsilon} \ell_1^{m}(\ell_\infty^{n})}\\&\leq \|J_1\otimes id_{\ell_1^{n}}\| \|J_2\otimes id_{\ell_\infty^{n}}\|\|G\otimes id_{\ell_1^{n}}\|_{\tilde{X}\otimes_{\epsilon} \tilde{Y}}\\&\leq 9/4\|\|G\otimes id_{\ell_1^{n}}\|_{\ell_1^{2^{2n}}(\ell_\infty^{n})\otimes_{\epsilon} \ell_1^{2^{n^2}}(\ell_1^{n})}.
\end{align*}
Here, in the first inequality we used property (\ref{MPP}) and in the last one we used the known upper bounds for $\|J_1\otimes id_{\ell_1^{n}}\|$ and  $\|J_2\otimes id_{\ell_\infty^{n}}\|$ and property (\ref{injectivity}).

For the estimate (\ref{estimate II}), write
\begin{align*}
\|G\otimes id_{S_1^{n}}\|_{\ell_1^{2^{2n}}(S_\infty^{n})\otimes_{\epsilon} \ell_1^{2^{n^2}}(S_1^{n})}&=\|G\otimes id_{S_1^{n}}\|_{\tilde{X}\otimes_{\epsilon} \tilde{Y}}\\&=\|\big((J_1^{-1}\otimes id_{S_1^{n}})\otimes (J_2^{-1}\otimes id_{S_\infty^{n}})\big)(\tilde{G}\otimes id_{S_1^{n}})\|_{\tilde{X}\otimes_{\epsilon} \tilde{Y}}\\&\leq \|(J_1^{-1}\otimes id_{S_1^{n}})\|\|(J_2^{-1}\otimes id_{S_\infty^{n}})\| \| \tilde{G}\otimes id_{S_1^{n}}\|_{\ell_1^{m}(S_\infty^{n})\otimes_{\epsilon} \ell_1^{m}(S_1^{n})}\\&\leq \| \tilde{G}\otimes id_{S_1^{n}}\|_{\ell_1^{m}(S_\infty^{n})\otimes_{\epsilon} \ell_1^{m}(S_1^{n})}.
\end{align*}
Here, in the first equality we used property (\ref{injectivity}) and in the second inequality we used property (\ref{MPP}).
\end{proof}

Note that in order to apply Theorem \ref{Thm:MainI} to obtain Bell inequality violations from Theorem \ref{Thm:MainIV} we would need to have a XOR game for which the quantity $\omega^*_{o.w-\log n}(\tilde{G})$ is larger than $\omega_{o.w-2\log n}(\tilde{G})$ rather than $\omega_{o.w-\log n}(\tilde{G})$ as we have just proved in the previous result. The previous proof can be easily modified to obtain the new estimate. Indeed, we could follow the previous proof step by step by just replacing the spaces $S_1^n$ and $S_\infty^n$ by $S_1^{n^2}$ and $S_\infty^{n^2}$ respectively. The prize to pay for this is that the new spaces $\bar{X}=X\otimes S_\infty^{n^2}$ and $\bar{Y}=Y\otimes S_1^{n^2}$ have dimension $n^6$,  and after applying Proposition \ref{Proposition reduction} we can only assure that our new game $\bar{G}$ will have $m\leq c n^{12}$ inputs per player. This modification allows us to prove the analogous estimates to (\ref{estimate I}) and (\ref{estimate II}):
\begin{align*}
&\|\bar{G}\otimes id_{\ell_1^{n^2}}\|_{\ell_1^{m}(\ell_\infty^{n^2})\otimes_{\epsilon} \ell_1^{m}(\ell_1^{n^2})}\leq \frac{9}{4}\|G\otimes id_{\ell_1^{n^2}}\|_{\ell_1^{2^{2n}}(\ell_\infty^{n^2})\otimes_{\epsilon} \ell_1^{2^{n^2}}(\ell_1^{n^2})}, \text{   }\text{  and }\\&\|\bar{G}\otimes id_{S_1^{n}}\|_{\ell_1^{m}(S_\infty^{n})\otimes_{\epsilon} \ell_1^{m}(S_1^{n})}\geq \|G\otimes id_{S_1^{n}}\|_{\ell_1^{2^{2n}}(S_\infty^{n})\otimes_{\epsilon} \ell_1^{2^{n^2}}(S_1^{n})}.
\end{align*}
Hence, we conclude from Theorem \ref{theorem quotient norms} that our new game verifes
\begin{align*}
\frac{\omega^*_{o.w-\log n}(\bar{G})}{\omega_{o.w-2\log n}(\bar{G})}\geq D'\frac{\sqrt{n}}{\log n}
\end{align*}for a certain  universal constant $D'$.
One immediately obtains Corollary \ref{Corollary last} from this last estimate and Theorem \ref{Thm:MainI}.

\end{document}